\documentclass[11pt]{article}
\usepackage[margin=1in]{geometry}
\usepackage{amsmath,amsthm,amssymb}
\usepackage{graphicx}
\usepackage{tikz}
\usepackage{todonotes}
\usepackage[pagebackref=true]{hyperref}
\usepackage{hyperref}

\usepackage[T1]{fontenc}
\usepackage{lmodern}

\newtheorem{theorem}{Theorem}
\newtheorem*{theorem*}{Theorem}
\newtheorem{corollary}[theorem]{Corollary}

\newtheorem*{conjecture*}{Conjecture}
\newtheorem{lemma}[theorem]{Lemma}

\newtheorem*{claim*}{Claim}

\theoremstyle{definition}
\newtheorem{definition}[theorem]{Definition}
\newtheorem{assumption}[theorem]{Assumption}

\theoremstyle{remark}
\newtheorem*{remark}{Remark}

\newcommand{\mech}{\text{\sc Mech}}
\newcommand{\opt}{\text{\sc Opt}}

\newcommand{\R}{\mathbb{R}}

\newcommand{\A}{\Theta}
\newcommand{\vol}{\text{vol}}

\newcounter{note}[section]

\title{On the Nisan-Ronen conjecture}

\author{George Christodoulou\thanks{Department of Computer Science,
    University of Liverpool, UK.  Email:
    \texttt{gchristo@liverpool.ac.uk} } \and Elias Koutsoupias \thanks{Department of Computer Science, University
    of Oxford, UK. Email: \texttt{elias@cs.ox.ac.uk}} \and Annam{\'a}ria
  Kov{\'a}cs\thanks{Department of Informatics, Goethe University,
    Frankfurt M., Germany. Email: \texttt{panni@cs.uni-frankfurt.de}}} 

\date{}

\begin{document}

\maketitle

\begin{abstract}
  The Nisan-Ronen conjecture states that no truthful mechanism for makespan-minimization when allocating $m$ tasks to $n$ unrelated machines can have approximation ratio less than $n$. Over more than two decades since its formulation, little progress has been made in resolving it and the best known lower bound is still a small constant. This work makes progress towards validating the conjecture by showing a lower bound of $1+\sqrt{n-1}$.
\end{abstract}

\section{Introduction}

This work makes progress on one of the most important open problems of algorithmic mechanism design~\cite{NRTV07}, the Nisan-Ronen conjecture.  Mechanism design, a celebrated branch of Game Theory and Microeconomics, studies a special class of algorithms, called mechanisms, which are robust under selfish behavior and produce a social outcome with a certain guaranteed quality. Unlike traditional algorithms that get their input from a single user, mechanisms solicit the input from different participants (called agents, players, bidders), in the form of preferences over the possible outputs (outcomes). The challenge stems from the fact that the actual preferences of the participants are private information, unknown to the algorithm. The participants are assumed to be utility maximisers that will provide some input that suits their objective and may differ from their true preferences. A {\em truthful} mechanism provides incentives such that a truthful input is the best action for each participant.

The question is what kind of problems can be solved within this framework. In their seminal paper that launched the field of algorithmic mechanism design, Nisan and Ronen~\cite{NR01} proposed the scheduling problem on unrelated machines as a central problem to capture the algorithmic aspects of mechanism design. In the classical form of the problem, which has been extensively studied from the algorithmic perspective, there are $n$ machines that process a set of $m$ tasks; each machine $i$ takes time $t_{ij}$ to process task $j$. The objective of the algorithm is to allocate each task to a machine in order to minimize the makespan, i.e., the maximum completion time of all machines. In the mechanism design setting, each machine provides as input its processing times for each task. The selection by Nisan and Ronen of this version of the scheduling problem to study the limitations that truthfulness imposes on algorithm design was a masterstroke, because it turned out to be an extremely rich and challenging setting.

Nisan and Ronen applied the VCG mechanism~\cite{Vic61,Cla71,Gro73}, the most successful generic machinery in mechanism design, which truthfully implements the outcome that maximizes the social welfare. In the case of scheduling, the allocation of the VCG is the greedy allocation: each task is independently assigned to the machine with minimum processing time. This mechanism is truthful, but has poor approximation ratio, $n$. They boldly conjectured that this is the best guarantee that can be achieved by any deterministic (polynomial-time or not) truthful mechanism.

\begin{conjecture*}[Nisan-Ronen]
  There is no deterministic truthful mechanism with approximation ratio better than $n$ for the problem of scheduling $n$ unrelated machines.
\end{conjecture*}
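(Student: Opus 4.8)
The plan is to argue by contradiction: assume a deterministic truthful mechanism $M$ for $n$ unrelated machines with approximation ratio $\rho < n$, and derive a contradiction. Truthfulness on this domain is equivalent (Saks--Yu; cycle monotonicity on a convex type space) to the allocation rule being weakly monotone: whenever two instances $t$ and $t'$ differ only in the report of machine $i$, the allocations $x_i(t), x_i(t')$ to that machine satisfy $\langle t_i - t_i',\, x_i(t) - x_i(t') \rangle \le 0$; equivalently, lowering the declared time of machine $i$ on a single task can only make $i$ (weakly) gain that task. Weak monotonicity and the approximation guarantee are the only levers, so the entire proof must be a combinatorial argument extracting a contradiction from these two facts on a purpose-built family of instances.

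For the instance family I would take an $n$-machine, $\Theta(n)$-task construction that scales up the gadgets behind the known constant and $1+\sqrt{n-1}$ bounds: give each machine one private cheap task and shared access to the remaining tasks, with processing times laid out on a geometric scale $1, R, R^2, \dots$ for a ratio $R$ slightly below the target $\rho$. The approximation requirement forces, on each instance, that the cheap tasks cannot be piled onto one machine (else the makespan exceeds $\rho \cdot \opt$), and weak monotonicity then propagates a chain of forced allocations through the machines: keeping machine $1$'s cheap task forces, after a suitable perturbation, machine $2$ to accept a designated task, which forces machine $3$, and so on around all $n$ machines. The aim is to close this chain into a cycle whose accumulated inequality reads $\rho \ge n$ (or yields two incompatible forced allocations). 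The delicate part is to tune the times so that every link of the chain is at once tight enough to transmit the constraint and loose enough to remain consistent with the slack that weak monotonicity permits.

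The step I expect to be the genuine obstacle — and the reason the full conjecture remains open, with the present paper reaching only $1+\sqrt{n-1}$ — is controlling the cumulative slack along the chain. Each application of weak monotonicity to a single machine is essentially a one-dimensional inequality, and the makespan objective couples the machines only through one $\max$, so the known constructions suffer diminishing returns per added machine: combining $n$ levels yields $\Theta(\sqrt n)$, not $n$, because the forced implications degrade geometrically. Closing the gap to $n$ seems to demand a genuinely higher-dimensional use of truthfulness (longer monotonicity cycles, or the payment identities directly, rather than $2$-cycles), or a global LP/duality argument over the whole instance family in place of a machine-by-machine recursion, or a new gadget in which the $n$ machines' constraints reinforce one another instead of diluting. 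Without such an idea the argument above yields the partial bounds but not the conjecture in full.
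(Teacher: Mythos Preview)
The statement you were asked to prove is the Nisan--Ronen \emph{conjecture}, and the paper does not prove it; it is stated as an open conjecture, and the paper's actual contribution is the weaker bound $1+\sqrt{n-1}$. So there is no ``paper's own proof'' to compare against, and any purported proof of the full conjecture would be a major breakthrough beyond what the paper claims.

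Your proposal is not a proof but a candid meta-discussion of why the natural chaining approach stalls, and on that point you are essentially right and in agreement with the paper's implicit message: the known techniques (weak monotonicity applied machine-by-machine, plus the $2\times 2$ characterization) do not compose to a factor of $n$. Where your diagnosis is a bit off is in the mechanism of the loss. You describe it as ``cumulative slack'' in a chain of one-dimensional monotonicity inequalities degrading geometrically; the paper's actual argument is not a chain of per-machine implications but an induction on the number of clusters, using the $2\times 2$ characterization on slices to rule out affine-minimizer behavior and force task-independence, which propagates ``good sets'' of tasks. The $\sqrt{n-1}$ arises not from accumulated slack but from a single balancing: one must choose $\alpha$ so that both the ``bad task'' case (ratio $\approx 1/\alpha$) and the ``good set of $n-1$ tasks'' case (ratio $\approx (n-1)\alpha$) are simultaneously large, and the optimum $\alpha = 1/\sqrt{n-1}$ equalizes them at $\sqrt{n-1}$. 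Pushing past this would require breaking that symmetric trade-off, not tightening a chain.

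In short: there is a genuine gap, but it is the same gap the entire field faces --- the conjecture is open, your proposal does not close it, and you correctly identify that something qualitatively new (beyond two-cycle monotonicity and $2\times 2$ slice characterizations) is needed.
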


The conjectured bound refers only to the limitation that truthfulness imposes and does not preclude any computational limitations. In that sense, this is a very strong information-theoretic bound which should hold for all deterministic mechanisms, regardless of their running time.

The Nisan-Ronen conjecture is perhaps the most famous problem and, arguably, one of the most important open ones in algorithmic mechanism design. Despite intensive efforts, very sparse progress has been made towards its resolution. All the known lower bounds are small constants; Nisan and Ronen originally showed that no truthful deterministic mechanism can achieve an approximation ratio better than $2$. This was improved to $2.41$~\cite{ChrKouVid09}, and later to $2.61$ \cite{KV07}, which was the current best for over a decade. Very recently the bound was improved to $2.755$ by Giannakopoulos, Hamerl and Po\c{c}as~\cite{giannakopoulos2020} and then to $3$ by Dobzinski and Shaulker~\cite{DS20}. Still the gap from the best known upper bound of $n$ remains huge.

In this work we take a solid step towards the resolution of the conjecture by providing a drastic improvement on the lower bound.

\begin{theorem*}
  There is no deterministic truthful mechanism with approximation ratio better than $1+\sqrt{n-1}$ for the problem of scheduling $n$ unrelated machines.
\end{theorem*}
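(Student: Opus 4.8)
The plan is to construct, for every $n$, a family of instances on which any truthful mechanism must fail to beat $1+\sqrt{n-1}$, using the standard weak-monotonicity characterization of truthfulness as the only tool. The starting point is a ``hard'' base instance where one machine — call it machine $1$ — is essentially the only sensible choice for a large block of tasks, while the remaining $n-1$ machines are symmetric and each capable of handling a small ``private'' task cheaply. I would set up the processing-time matrix so that the greedy/VCG-type allocation has makespan $1$, but any mechanism that wants to balance load is forced, by monotonicity, into a configuration whose makespan is close to $\sqrt{n-1}$; conversely, a mechanism that refuses to balance pays $1+\sqrt{n-1}$ on machine $1$. Concretely, I expect to give machine $1$ a value like $1$ on each of $k$ ``heavy'' tasks and give machine $i$ ($i\ge 2$) a value $\varepsilon$ on a dedicated task $j_i$ and a large value elsewhere; the parameter $k$ (number of heavy tasks) will be tuned against $n$ so that the two bad cases meet at the claimed ratio, which is where the $\sqrt{\cdot}$ comes from (balancing $k$ unit tasks over roughly $\sqrt{n-1}$ machines).

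The core of the argument is a case analysis driven by weak monotonicity (equivalently, the cycle-monotonicity / Roberts-type inequalities for two-value perturbations): fix the allocation $\mathbf{x}$ the mechanism outputs on the base instance, and then perturb the declared times of one machine at a time. If machine $1$ receives ``few'' heavy tasks in $\mathbf{x}$, I perturb some heavy task's time on the machine currently holding it, pushing that task toward machine $1$ or toward a cheap alternative, and use monotonicity to derive that the mechanism's allocation on the perturbed instance still leaves machine $1$ overloaded — or leaves some machine $i\ge 2$ holding a task it values at essentially $1$, which already costs $\Omega(\sqrt{n-1})$ against an optimum of $O(1)$. The bookkeeping is to track, across the sequence of single-machine perturbations, an invariant of the form ``the set of heavy tasks on machine $1$ does not shrink'' or ``some expensive task stays put,'' so that the final instance in the chain has provably large makespan under the mechanism while the optimal makespan stays $\approx 1$.

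I would then optimize the construction's parameters ($k$, $\varepsilon$, the ``large'' values, and possibly the number of machines actually used, which may be a growing function rather than exactly $n$) to make the mechanism's makespan/optimum ratio tend to $1+\sqrt{n-1}$ as $\varepsilon\to 0$. There is likely a small additive slack that one removes by a limiting argument or by embedding several scaled copies of the gadget. A clean way to present this is: (1) state the weak-monotonicity lemma used; (2) define the instance $I_n$; (3) prove the key structural claim that any truthful mechanism on $I_n$ either overloads machine $1$ or misplaces a heavy task; (4) conclude by computing the ratio.

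The step I expect to be the main obstacle is step (3) — controlling \emph{simultaneously} the many symmetric machines. A single two-value perturbation constrains the mechanism only locally (one machine, one or two tasks), so the difficulty is to chain these local constraints into a global statement about the whole allocation without the mechanism ``escaping'' by shuffling tasks among the $n-1$ symmetric machines in a way that individually satisfies every monotonicity inequality. Handling this presumably requires either a clever ordering of the perturbations (so that each step's conclusion feeds the hypothesis of the next) or a potential-function argument that aggregates the per-machine inequalities; getting the constants in that aggregation to line up exactly with $1+\sqrt{n-1}$, rather than a weaker $1+\sqrt{n}/c$, is where the real work lies.
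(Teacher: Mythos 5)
Your high-level setup — a distinguished machine (the paper's $0$-player) versus $n-1$ symmetric machines, with each of the symmetric machines cheap on its own ``private'' tasks and prohibitively expensive elsewhere — is indeed the skeleton of the paper's construction, and you correctly identify the central obstacle: local weak-monotonicity inequalities constrain one machine at a time, so they must be aggregated into a global statement without letting the mechanism shuffle tasks among the symmetric players. But the proposal stops at exactly the point where the real work begins, and the remedies you float (a clever ordering of perturbations, a potential-function aggregation) are known not to suffice: that is essentially what the previous line of work did, and it plateaued at small constants (the prior best was $3$). Weak monotonicity alone is too weak a tool to force the $\sqrt{n-1}$ gap.

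There are two concrete missing ideas. First, the paper's aggregation step is not a monotonicity-chaining argument but an appeal to the \emph{full characterization of truthful $2\times 2$ mechanisms}: by freezing all but two sibling tasks, the induced slice mechanism must be an (relaxed) affine minimizer, (relaxed) task-independent, $1$-dimensional, or constant. Each type is then handled separately (affine minimizers are shown to have measure zero among perturbations; relaxed ones confined to a small fraction of siblings; task-independent slices are shown to \emph{propagate} goodness to nearby task sets). Nothing in your plan supplies this structural information, and without it the case analysis you envisage cannot be carried out. Second, your instance has $O(1)$ tasks per symmetric machine; the paper uses \emph{exponentially many} tasks per cluster, and explicitly notes this is unavoidable for this style of argument — with $o(n^{3/2})$ tasks per cluster there are truthful mechanisms beating $\sqrt{n-1}$. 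The exponential blow-up feeds a probabilistic induction over ``good sets'' of tasks (sets that a witness perturbation forces entirely onto the $0$-player), and the final ratio comes from comparing $(n-1)\alpha$ with $1/\alpha$ at $\alpha=1/\sqrt{n-1}$, not from ``balancing $k$ unit tasks over $\sqrt{n-1}$ machines.'' As written, step (3) of your plan is a genuine gap, and the parameter choices in steps (1) and (4) would not produce the claimed bound.
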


The proof of the theorem is presented in Section~\ref{sec:lb}. At its core, it is based on the use of a known characterization of $2\times 2$ instances, with only two machines and 2 tasks~\cite{CKK20}. Roughly speaking, this characterization says essentially that the only available algorithms for $2\times 2$ instances are either task independent (algorithms that allocate each task independently of the values of the other tasks) or affine minimizers (algorithms that compare simple linear expressions of the input values to allocate the tasks). This is an extremely limited class of algorithms, a fact that provides some weak evidence that the Nisan-Ronen conjecture may be true. No such characterization is known for 3 or more machines, and a widely-held belief has been that attacking the Nisan-Ronen conjecture requires further progress in this direction of characterizing truthful mechanisms for $n$ machines. It is therefore quite surprising that one can obtain a lower bound of $1+\sqrt{n-1}$ based mainly on the characterization of $2\times 2$ instances.

Another cornerstone of the proof of the main theorem is the extensive use of weak monotonicity, the characterizing property of truthful allocation functions. It is well known~\cite{SY05,AK08} that a mechanism is truthful if its allocation function is monotone in the values of each machine. Weak monotonicity in one dimension (i.e., a single task) is the usual notion of monotonicity of the allocation function, and for two or more dimensions, it takes a particular very natural form. Thus one can restate the main theorem as ``no monotone algorithm, polynomial-time or not, has approximation ratio less than $1+\sqrt{n-1}$ for the problem of scheduling $n$ unrelated machines''. In contrast, the approximation ratio for the usual (non-monotone) class of algorithms is trivially $1$, for exponential-time algorithms, and 2 for polynomial-time ones~\cite{lenstra1990approximation}. To use weak monotonicity, we consider perturbations of an input that lie in a small open hyperrectangle. The precise definition of these perturbations needs to balance different requirements of the proof, but we believe that the general idea can be useful in other similar situations.

Previous attempts for resolving the Nisan-Ronen conjecture employed a characterization of $2\times 2$ instances and, of course, weak monotonicity. What distinguishes this work from these previous attempts, with the exception of~\cite{CKK20}, is the use of particular inputs with some very high values that force any reasonable algorithm to allocate each task to only two machines. These inputs avoid the problems that weak monotonicity presents for multiple players: with such inputs, when one machine does not take a task, we know precisely which machine takes it.

Some ideas of this work, such as the use of the $2\times 2$ characterization on particular inputs, first appeared in a recent publication~\cite{CKK20}. The main result in~\cite{CKK20} is a lower bound of $\sqrt{n-1}$ for all deterministic truthful mechanisms, when the cost of processing a subset of tasks is given by a submodular (or supermodular) set function, instead of an additive function of the standard scheduling setting.

The instances employed in~\cite{CKK20} for submodular costs use only $n-1$ \emph{pairs of tasks}. But here, where costs are additive, we use exponentially many tasks. We don't know whether this large number of tasks is needed in general, but it is essential in our proof. It definitely needs to be $\Omega(n^{3/2})$ for the type of instances that we use, since for fewer tasks we know truthful mechanisms with approximation ratio less than $\sqrt{n-1}$~\cite{CKK21}.

The real challenge, not present in~\cite{CKK20}, is that the characterization of $2\times 2$ truthful mechanisms for additive costs includes mechanisms which are difficult to handle (for example, relaxed task independent mechanisms and relaxed affine minimizers). This obstacle adds many complications and requires a different proof structure.

The major open problem left open is to settle the Nisan-Ronen
conjecture. The techniques of this work may be helpful in this
direction. The case of randomized or fractional mechanisms appears to
be more challenging; the best known lower bound of the approximation
ratio is 2~\cite{MualemS18,CKK10}, embarrassingly lower than the best
known upper bound $(n+1)/2$~\cite{CKK10}. The bottleneck of applying
the techniques of the current work to these variants appears to be the
lack of a good characterization of $2\times 2$ mechanisms.  Finally,
although the result of this work indicates that mechanisms constitute
a limited subclass of allocation algorithms, a more direct
demonstration would be to find other useful properties of mechanisms,
or even obtain a more global characterization of mechanisms for the domain
of scheduling and its generalizations.

\subsection{Further related work}
The lack of progress in the original unrelated machine problem led to the study of
variants and special cases, for which significant results have been obtained. Ashlagi
et al.~\cite{ADL09}, resolved a restricted version of the Nisan-Ronen conjecture, for
the special but natural class of {\em anonymous} mechanisms. Lavi and
Swamy~\cite{LaviS09} studied a restricted input domain which however retains the
multi-dimensional flavour of the setting. They considered inputs with only two
possible values ``low'' and ``high'', that are publicly known to the designer of the
algorithm. For this case they showed an elegant deterministic mechanism with an
approximation factor of 2. They also showed that even for this setting achieving the
optimal makespan is not possible under truthfulness, and provided a lower bound of
$11/10$. Yu~\cite{Yu09} extended the results for a range of values, and Auletta et
al.~\cite{Auletta0P15} studied multi-dimensional domains where the private
information of the machines is a single bit.

Randomized mechanisms have also been studied and have slightly
improved guarantees. There are two notions of truthfulness for
randomized mechanisms; a mechanism is {\em universally truthful} if it
is defined as a probability distribution over deterministic truthful
mechanisms, and it is {\em truthful-in-expectation}, if in expectation
no player can benefit by lying. In \cite{NR01}, a universally truthful
mechanism was proposed for the case of two machines, and was later
extended to the case of $n$ machines by Mu'alem and
Schapira~\cite{MualemS18} with an approximation guarantee of $0.875n$,
which was later improved to $0.837n$ by \cite{LuYu08}. Lu and
Yu~\cite{LuY08a} showed a truthful-in-expectation mechanism with an
approximation guarantee of $(n+5)/2$.
Mu'alem and Schapira~\cite{MualemS18}, showed a lower bound of
$2-1/n$, for both types of randomized truthful
mechanisms. Christodoulou, Koutsoupias and Kov{\'a}cs~\cite{CKK10}
extended the lower bound for fractional mechanisms, where each task
can be fractionally allocated to multiple machines. They also showed a
fractional mechanism with a guarantee of $(n+1)/2$.
Even for the special case of two machines the upper and lower bounds
are close but not tight~\cite{LuY08a,ChenDZ15}.

In the Bayesian setting, Daskalakis and Weinberg \cite{DaskalakisW15}
showed a mechanism that is at most a factor of 2 from the {\em optimal
  truthful mechanism}, but not with respect to optimal
makespan. Chawla et al.~\cite{ChawlaHMS13} provided bounds of
prior-independent mechanisms (where the input comes from a probability
distribution unknown to the mechanism). Giannakopoulos and
Kyropoulou~\cite{GiannakopoulosK17} showed that the VCG mechanism
achieves an approximation ratio of $O( \log n/\log \log n )$ under
some distributional and symmetry assumptions. Finally, in a recent
work, Christodoulou, Koutsoupias and Kov{\'a}cs~\cite{CKK21} showed
positive results for settings where each task can only be allocated to
at most two machines, a property shared by the instances of the lower
bound of this paper.

\section{Preliminaries}
\label{sec:preliminaries}

In the unrelated machines scheduling problem, we are given a set $M$ of $m$ tasks that need to be scheduled on a set $N$ of $n$ machines. The processing time or cost that each machine $i$ needs to process task $j$ is $t_{ij}$, and the completion time of machine $i$ for a subset $S$ of tasks is equal to the sum of the individual task costs $t_i(S)=\sum_{j\in S}t_{ij}$. The objective is to find an allocation of tasks to machines that minimize the \emph{makespan}, that is, the maximum completion time of a machine.
 
\subsection{Mechanism design setting}

We assume that each machine $i\in N$ is controlled by a selfish agent (player) that is reluctant to process tasks and the costs $t_{ij}$ is private information known only to them (also called the {\em type} of agent $i$).  The set $\mathcal{T}_i$ of possible types of agent $i$ consists of all vectors $b_i=(b_{i1},\ldots, b_{im})\in \mathbb{R}_+^{m}.$ Let also $\mathcal{T} = \times_{i\in N}\mathcal{T}_i$ be the space of type profiles.

A mechanism defines for each player $i$ a set $\mathcal{B}_i$ of available strategies the player can choose from. We consider \emph{direct revelation} mechanisms, i.e., $\mathcal{B}_i=\mathcal{T}_i$ for all $i,$ meaning that the players strategies are to simply report their types to the mechanism. Each player $i$ provides a \emph{bid} $b_i\in \mathcal{T}_i$, which not necessarily matches the true type $t_i$, if this serves their interests. A mechanism $(A,\mathcal P)$ consists of two parts:
\paragraph{An allocation algorithm:} The allocation algorithm $A$ allocates the tasks to machines based on the players' inputs (bid vector) $b=(b_1,\ldots ,b_n)$. Let $\mathcal{A}$ be the set of all possible partitions of $m$ tasks to $n$ machines. The allocation function $A:\mathcal{T}\rightarrow \mathcal{A}$ partitions the tasks into the $n$ machines; we denote by $A_i(b)$ the subset of tasks assigned to machine $i$ for bid vector $b=(b_1,\ldots,b_n)$.
\paragraph{A payment scheme:} The payment scheme $\mathcal P=(\mathcal P_1,\ldots,\mathcal P_n)$ determines the payments, which also depends on the bid vector $b$. The functions $\mathcal P_1,\ldots,\mathcal P_n$ stand for the payments that the mechanism hands to each agent, i.e., $\mathcal P_i:\mathcal{T}\rightarrow \R$.

\medskip

The {\em utility} $u_i$ of a player $i$ is the payment that they get minus the {\em actual} time that they need to process the set of tasks assigned to them, $u_i(b)=\mathcal P_i(b)-t_i(A_i(b))$. We are interested in \emph{truthful} mechanisms. A mechanism is truthful, if for every player, reporting his true type is a \emph{dominant strategy}. Formally,
$$u_i(t_i,b_{-i})\geq u_i(t'_i,b_{-i}),\qquad \forall i\in [n],\;\;
t_i,t'_i\in \mathcal{T}_i, \;\; b_{-i}\in \mathcal{T}_{-i},$$ where notation $\mathcal{T}_{-i}$ denotes all parts of $\mathcal{T}$ except its $i$-th part.

The quality of a mechanism for a given type $t$ is measured by the makespan $\mech(t)$ achieved by its allocation algorithm $A$, $ \mech(t) = \max_{i\in N}t_{i}(A_i(t))$, which is compared to the optimal makespan $\opt(t)=\min_{A\in \mathcal{A}}\max_{i\in N}t_i(A_i)$.

It is well known that only a subset of algorithms can be allocation algorithms of truthful mechanisms. In particular, no mechanism's allocation algorithm is optimal for every $t$, so it is natural to focus on the approximation ratio of the mechanism's allocation algorithm. A mechanism is \emph{$c$-approximate}, if its allocation algorithm is $c$-approximate, that is, if $c\geq\frac{\mech(t)}{\opt(t)}\;$ for all possible inputs $t$.

In this work, we do not place any requirements on the time to compute the mechanism's allocation $A$ and payments $\mathcal P.$ In other words, the lower bound does not make use of any computational assumptions.

\subsection{Weak monotonicity}

Mechanisms consists of two algorithms, the allocation algorithm and the payment algorithm. However, we are only interested in the performance of the allocation algorithm. Therefore it is natural to ask whether it is possible to characterize the class of allocation algorithms that are part of a truthful mechanism with no reference to the payment algorithm. Indeed the following definition provides such a characterization.

\begin{definition} \label{def:wmon} An allocation algorithm $A$ is called {\em weakly monotone (WMON)} if it satisfies the following property: for every two inputs $t=(t_i,t_{-i})$ and $t'=(t'_i,t_{-i})$, the associated allocations $A$ and $A'$ satisfy $$t_i(A_i)-t_i(A'_i)\leq t'_i(A_i)-t'_i(A'_i).$$
  An equivalent condition, using $a_{ij}(t)\in\{0,1\}$ indicator variables of whether task $j$ is allocated to player $i$, is
  \begin{align*}
    \sum_{j\in M} (a_{ij}(t')-a_{ij}(t))(t_{ij}'-t_{ij})\leq 0.
  \end{align*}
\end{definition}

It is well known that the allocation function of every truthful mechanism is weakly monotone~\cite{BCR+06}. Although it is not needed in establishing a lower bound on the approximation ratio, it turns out that this is also a sufficient condition for truthfulness in convex domains~\cite{SY05} (which is the case for the scheduling domain).

A useful tool in our proof relies on the following immediate consequence of weak monotonicity (see \cite{CKK20} for a simple proof). Intuitively, it states that when you fix the costs of all players for a subset of tasks, then the {\em restriction} of the allocation to the rest of the tasks is still weakly monotone.

\begin{lemma}\label{lem:restriction} Let $A$ be a weakly monotone allocation, and let
  $(S,T)$ a partition of $M$. When we fix the costs of the tasks of $T$, the restriction of the allocation $A$ on $S$ is also weakly monotone. %
\end{lemma}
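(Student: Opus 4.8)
The plan is to derive the claim directly from the indicator-variable form of weak monotonicity in Definition~\ref{def:wmon}. Write $M=S\cup T$ and fix a cost vector $c_T=(c_{ij})_{i\in N,\,j\in T}$ for the tasks in $T$. For a profile $s=(s_i)_{i\in N}$ of costs for the tasks in $S$ only, let $A^S(s)$ denote the allocation of the tasks in $S$ obtained by running $A$ on the full profile $(s,c_T)$ and discarding the assignment of the tasks in $T$; this is the ``restriction of $A$ on $S$'' whose weak monotonicity we want to establish, and its indicator variables are exactly the $a_{ij}$ of $A$ for $j\in S$, evaluated at $(s,c_T)$.

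Fix a machine $i$ and take two $S$-profiles $s=(s_i,s_{-i})$ and $s'=(s'_i,s_{-i})$ that differ only in machine $i$'s coordinates. Let $t=(s,c_T)$ and $t'=(s',c_T)$ be the corresponding full profiles. Then $t$ and $t'$ also differ only in machine $i$'s coordinates, so weak monotonicity of $A$ applies and gives
$$\sum_{j\in M}(a_{ij}(t')-a_{ij}(t))(t'_{ij}-t_{ij})\le 0.$$
Now split the sum according to $M=S\cup T$. For every $j\in T$ the costs were not changed, i.e. $t'_{ij}=c_{ij}=t_{ij}$, so each such term vanishes, and what remains is
$$\sum_{j\in S}(a_{ij}(t')-a_{ij}(t))(t'_{ij}-t_{ij})=\sum_{j\in S}(a_{ij}(s')-a_{ij}(s))(s'_{ij}-s_{ij})\le 0,$$
which is precisely the weak monotonicity condition for $A^S$ at machine $i$. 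Since $i$ was arbitrary, $A^S$ is weakly monotone.

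There is essentially no obstacle here; the proof is a one-line splitting of the weak-monotonicity inequality. The only point requiring a little care is the bookkeeping: ``fixing the costs of $T$'' must be applied to \emph{all} players simultaneously (not just the player $i$ whose bid varies), since it is exactly this that makes the $T$-indexed cross terms cancel, and one should note that the indicator variables of the restricted allocation coincide by construction with those of $A$ on the tasks of $S$, so no reinterpretation of $A^S$ is needed.
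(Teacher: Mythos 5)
Your proof is correct; the paper itself does not present a proof of this lemma but cites~\cite{CKK20} for ``a simple proof,'' and the argument you give---split the weak-monotonicity sum over the partition $(S,T)$ and observe that the $T$-indexed terms vanish because those costs are identical in $t$ and $t'$---is precisely the one-line cancellation the paper has in mind.
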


The following implication of weak monotonicity, that was first used in \cite{NR01}, is a standard tool for showing lower bounds for truthful mechanisms (see for example \cite{NR01,ChrKouVid09,MualemS18,ADL09,giannakopoulos2020,DS20}).

\begin{lemma}\label{lemma:tool}
  Consider a truthful mechanism $(A,\mathcal P)$ and its allocation for a bid vector $t$. Let $S$ be a subset of the tasks allocated to player $i$ and let $S'$ be a subset of the tasks allocated to the other players. Consider any bid profile $t'=(t'_i,t_{-i})$ that is obtained from $t$ by decreasing the values of player $i$ in $S$, i.e., $t'_{ij}< t_{ij}, j\in S$, and increasing the values of player $i$ in $S'$, i.e., $t'_{ij}> t_{ij}, j\in S'$. Then the allocation of player $i$ for $t$ and $t'$ agree for all tasks in $S\cup S'$.
\end{lemma}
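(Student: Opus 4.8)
The plan is to derive the statement entirely from weak monotonicity of the allocation function $A$, which holds because $(A,\mathcal P)$ is truthful (by the cited characterization~\cite{BCR+06}). The inputs $t=(t_i,t_{-i})$ and $t'=(t'_i,t_{-i})$ differ only in player $i$'s coordinates, so this is exactly the kind of pair to which Definition~\ref{def:wmon} applies directly, with no need for intermediate hybrid inputs or a telescoping argument over coordinates. Writing the indicator form of WMON for this pair gives
\[
  \sum_{j\in M}\bigl(a_{ij}(t')-a_{ij}(t)\bigr)\bigl(t'_{ij}-t_{ij}\bigr)\le 0 .
\]

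The next step is to localize and then sign-analyze this sum. Since $t'$ is obtained from $t$ by changing player $i$'s values only on $S\cup S'$, every $j\notin S\cup S'$ contributes $t'_{ij}-t_{ij}=0$ and drops out, leaving a sum over $j\in S\cup S'$. For $j\in S$ we have $a_{ij}(t)=1$ and $t'_{ij}-t_{ij}<0$, so the term equals $(a_{ij}(t')-1)(t'_{ij}-t_{ij})\ge 0$, with equality iff $a_{ij}(t')=1$. For $j\in S'$ we have $a_{ij}(t)=0$ and $t'_{ij}-t_{ij}>0$, so the term equals $a_{ij}(t')\,(t'_{ij}-t_{ij})\ge 0$, with equality iff $a_{ij}(t')=0$. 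Hence the left-hand side is a sum of nonnegative terms that is $\le 0$, so every term is $0$; reading off the equality conditions yields $a_{ij}(t')=1$ for all $j\in S$ and $a_{ij}(t')=0$ for all $j\in S'$, i.e. player $i$'s allocation for $t$ and $t'$ agrees on $S\cup S'$.

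There is no genuine obstacle here; the argument is essentially the standard Nisan–Ronen tool. The two points that need care are: (i) interpreting ``obtained from $t$ by decreasing/increasing the values of player $i$ in $S$ and $S'$'' as holding all other coordinates (and all other players) fixed — this is what makes the out-of-$(S\cup S')$ terms vanish, and if player $i$'s remaining coordinates were allowed to move the localized inequality could fail to force the $S\cup S'$ terms to zero; and (ii) using that the perturbations are \emph{strict} ($t'_{ij}<t_{ij}$ on $S$, $t'_{ij}>t_{ij}$ on $S'$), which is exactly what lets us conclude $a_{ij}(t')=a_{ij}(t)$ coordinate by coordinate. One could instead argue directly from the two truthfulness inequalities comparing $u_i(t_i,t_{-i})$ with $u_i(t'_i,t_{-i})$, but invoking the indicator form of WMON already set up in the excerpt is the shortest route.
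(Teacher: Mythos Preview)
Your argument is correct and is exactly the standard derivation from weak monotonicity; the paper does not actually include a proof of this lemma, treating it as a known tool from~\cite{NR01}. Your write-up is precisely the intended one-line application of the indicator form of WMON, so there is nothing to compare.
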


Notice that this lemma guarantees that only the set $S$ of tasks allocated to player $i$ remains the same. This does not preclude changing the allocation of the other players for the tasks in $S'$, unless there are only two players. This is a major obstacle in multi-player settings, that we avoid in this work by focusing on only two players for each task.

\subsection{Characterization of truthful mechanisms for 2 players and 2 tasks.}
\label{sec:char-all-truthf}

In this section we discuss the class of truthful $2\times 2$ mechanisms (i.e.,
for 2 machines and 2 tasks). We present here a full characterization of these
mechanisms as it appears in the full version of~\cite{CKK20}.

There are other such characterizations in the literature (for example~\cite{DS08,ChristodoulouKV08}), but they are somewhat incomplete, in the sense that they either impose additional restrictions to mechanisms (for example, bounded approximation ratio) or consider a bigger domain with negative values. These characterizations do not fit the purpose of this work. For example, we deal with domains with multiple players and tasks and conditions on the approximation ratio of restrictions to only 2 tasks are not applicable.

An important aspect of our approach is to restrict the allocation of each task to only two specific machines, which we call player (machine) $t$ and player $s$. To achieve this, we fix two large constants $\A$ and $B$ with $\A\gg B\gg 1$, and consider instances in which the value for the $t$-player can be arbitrary, for the $s$ player in $[0,B)$, and for the other players equal to $\A$. Since the values of the $s$-player are much smaller than $\A$, if the mechanism allocates even one task to a player with value $\A$, its approximation ratio is really high.

A convenient aspect of the characterization of~\cite{CKK20}, is that it characterizes the $2\times 2$ mechanisms with values exactly in this domain. We first present the theorem and then define and discuss the different types of mechanisms. The characterization assumes that for sufficiently high values of $t$-player, both tasks will be allocated to the $s$-player, otherwise the approximation ratio is unbounded, even when there are more players and tasks.

\begin{theorem}[Characterization of $2\times 2$
  mechanisms~\cite{CKK20}] \label{theo:addchar} For every
  $B\in \R_{>0}\cup\{\infty\}$, every weakly monotone allocation for two tasks
  and two additive players with bids $t\in [0,\infty)\times [0,\infty)$ and
  $s\in[0,B)\times [0,B)$ ---such that for every $s$ there exists $t$ for which
  both tasks are allocated to the second player--- belongs to the following
  classes: (1) relaxed affine minimizers (including the special case of affine
  minimizers), (2) relaxed task independent mechanisms (including the special
  case of task independent mechanisms), (3) 1-dimensional mechanisms, (4) constant
  mechanisms.\footnote{To be precise, the restriction of every weakly monotone
    allocation to strictly positive $t$ and $s$ is of type (1) to (4); on the
    border $0$ there might be singularities, see \cite{CKK20} for details. For
    this reason, in the main argument we use strictly positive $t$ and $s$.}
\end{theorem}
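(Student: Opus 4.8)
\emph{Proof idea.} The plan is to obtain, separately for each of the two players, a ``menu'' description of the allocation, and then to force the two descriptions to be mutually consistent. Fix the bid $s$ of the second player and look at the induced map $t\mapsto A_t(t,s)$: this is a weakly monotone allocation of a \emph{single} additive player over the four outcomes $\emptyset,\{1\},\{2\},\{1,2\}$ on the convex domain $[0,\infty)^2$. On a convex domain weak (two-cycle) monotonicity is equivalent to full cycle monotonicity (Saks--Yu~\cite{SY05}), and for a finite outcome set cycle monotonicity is exactly the existence of prices realizing the allocation as a menu choice. Hence there are functions $p^t_{\{1\}}(s),p^t_{\{2\}}(s),p^t_{\{1,2\}}(s)$ (and $p^t_\emptyset(s)\equiv 0$) such that, away from ties, player $t$ receives $\argmin_{X\subseteq\{1,2\}}\bigl(t(X)-p^t_X(s)\bigr)$. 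Running the same argument with the players exchanged gives, for every fixed $t$, prices $p^s_Y(t)$ with player $s$ receiving $\argmin_{Y}\bigl(s(Y)-p^s_Y(t)\bigr)$; here one must be slightly careful since $s$ only ranges over the bounded box $[0,B)^2$, which is exactly the reason the hypothesis is phrased as ``player $s$ can be made to receive both tasks'' (via large $t$) rather than symmetrically for player $t$---it guarantees that player $s$'s menu is not degenerate.

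The core of the argument is the reconciliation: the two menu choices must be complementary, $A_t(t,s)\cap A_s(t,s)=\emptyset$ and $A_t(t,s)\cup A_s(t,s)=\{1,2\}$ for every $(t,s)$. First I would peel off the degenerate cases by a case analysis on which of the four regions of a player's type space have empty interior. If, say, player $t$'s menu never realizes some outcome for any $s$, or never separates the two tasks, the allocation collapses to a $1$-dimensional mechanism (depends only on one player's bid or only on one task) or to a constant mechanism. So assume all four regions are full-dimensional for both players. The remaining task is to pin down how $p^t_X(\cdot)$ depends on $s$: crossing a boundary of an $s$-region changes $A_s$ and hence $A_t$, and weak monotonicity of player $s$ (in the form of Lemmas~\ref{lem:restriction} and~\ref{lemma:tool}) constrains the movement. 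The upshot should be that the additivity defect $p^t_{\{1,2\}}(s)-p^t_{\{1\}}(s)-p^t_{\{2\}}(s)$ is either identically zero or a fixed nonzero constant, and that the $s$-dependence is either absent---giving a task-independent mechanism (defect $0$) or a genuine affine minimizer (defect $\ne 0$), the multipliers and additive constants being read off from the slopes---or consists of a single affine ``kink'', giving the relaxed versions.

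The main obstacle is this last case analysis, and within it the relaxed affine minimizers and relaxed task-independent mechanisms: one has to show that, when the prices of one player do depend on the other's bid, this dependence has exactly one affine transition located precisely where the classification predicts, that no further pieces can occur, and that relaxed-task-independent and relaxed-affine behaviour cannot coexist on overlapping regions. This amounts to careful bookkeeping of the (at most four) boundary curves in $t$-space against those in $s$-space, tracking how crossing one forces a prescribed movement of the other via weak monotonicity of the non-active player. A secondary, purely technical point is the behaviour on the axes $t_j=0$ or $s_j=0$, where the menu-maximization step may exhibit singularities (ties broken inconsistently with the interior); this is dealt with by carrying out the whole argument for strictly positive bids and recording the boundary exception, as the footnote to Theorem~\ref{theo:addchar} already signals.
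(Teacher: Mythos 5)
This theorem is not proved in the paper; it is imported verbatim from \cite{CKK20}, and the text says explicitly that the characterization is presented ``as it appears in the full version of~\cite{CKK20}.'' So there is no internal proof to compare your proposal against, and I can only assess your sketch on its own merits.

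The opening move---fix one player, invoke Saks--Yu to pass from weak monotonicity to cycle monotonicity on a convex domain, and extract a menu $p^t_\cdot(s)$ (resp.\ $p^s_\cdot(t)$), then try to reconcile the two menus across the boundary curves---is a standard and reasonable starting point and very likely mirrors the structure of the cited proof. However, the part you defer as ``careful bookkeeping'' is in fact the entire content of the theorem, and the way you pre-announce its outcome contains errors that suggest the bookkeeping would not close. First, a task-independent mechanism is \emph{not} one with ``absent $s$-dependence'': $\psi_i$ (equivalently $p^t_{\{i\}}$) may be an arbitrary non-decreasing function of $s_i$; what is absent is \emph{cross}-dependence of $p^t_{\{i\}}$ on $s_{-i}$ and $t_{-i}$, together with a vanishing additivity defect $p^t_{\{1,2\}}-p^t_{\{1\}}-p^t_{\{2\}}=0$. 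Second, the ``single affine kink'' picture for the relaxed classes is wrong on both counts: the bundling tail of a relaxed affine minimizer is given by $t_1+t_2=\zeta(s_1+s_2)$ for an \emph{arbitrary} non-decreasing $\zeta$ (see Figure~\ref{fig:ex-relaxed-minimizer}, where the tail is a square-root curve), and a relaxed task-independent mechanism may deviate from genuine task-independence at countably many jump points of $\psi_1,\psi_2$, not just one. So the target of your case analysis is mis-stated, and the argument you would need---that weak monotonicity of the inactive player forces the cross-dependence of the active player's menu to be either exactly linear with a common slope (affine minimizer), or absent (task independent), \emph{or} to degenerate into a bundling tail / a jump-discontinuity set, and that these exceptional sets have the precise structure of the ``relaxed'' classes and nothing more---is not supplied. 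Finally, the step ``the two menus must be complementary'' is not automatic: a priori the two fixed-player menus could be inconsistent at some $(t,s)$, and showing consistency of $A_t$ with $A_s$ away from a measure-zero set is itself a non-trivial lemma that needs weak monotonicity across players, not just within each.

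In short: the taxation-principle framing is a plausible skeleton, but the proposal identifies the hard part and then describes an incorrect endpoint for it, so as written it would not yield the stated classification.
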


Note that the theorem applies also to unbounded domain, i.e., the case of
$B=\infty$.  Some mechanisms can be of two types; for example the VCG mechanism
is both an affine minimizer and task independent.

To discuss the different types of mechanisms of the theorem, let $(A,\mathcal P)$ be a truthful $2\times 2$ mechanism, where $A$ is a weakly monotone allocation function, and $\mathcal P$ denotes the payment function. For given fixed $s\in[0,B)\times [0,B)$ let us consider the allocation for the $t$-player depending on his own bids $(t_1,t_2)$ (see Figure~\ref{fig:shapesPure}). The allocation \emph{regions} $R_{12}$ (resp.  $R_1, R_2, R_\emptyset$) $\,\subseteq \mathbb R_{\geq 0}^2$ are defined to be the \emph{interior} (wrt. $\mathbb R_{\geq 0}^2$) of the set of all $t$ values such that the $t$-player gets the set of tasks $\{1,2\}$ (resp. $\{1\},\,\{2\},\,\emptyset$). The allocation of the points on the boundaries can take any allocation of the adjacent regions.

\begin{figure}
  \centering
  \begin{tikzpicture}[scale=0.64]

    \draw[->] (0,0) -- (6,0) node[anchor=north] {$t_{1}$};
    \draw[->] (0,0) -- (0,6) node[anchor=east] {$t_{2}$};
    \draw[very thick, blue] (1.9,6) -- (1.9,3.68) -- (4,1.58) -- (6,1.58) ; \draw[very thick, blue] (0, 3.68) -- (1.9,3.68) ;

    \draw[ultra thick, red, dashed] (2,6) node[anchor = south] {$\psi_1(t_2,s)$} -- (2,3.7) -- (4.1,1.6) ; \draw[ultra thick, red, dashed] (4.1,1.6) -- (4.1,0);

    \draw[very thick, blue] (4,0) -- (4, 1.58) -- (1.9, 3.68);

    \draw (1.5,5) node[anchor=east] {$R_1$}; \draw (1.7,1) node[anchor=east] {$R_{12}$}; \draw (5.5,1) node[anchor=east] {$R_2$}; \draw (5.5,5) node[anchor=east] {$R_{\emptyset}$};

    \draw (3,-0.5) node[anchor=north] {(a)};
  \end{tikzpicture}
  \begin{tikzpicture}[scale=0.64]

    \draw[->] (0,0) -- (6,0) node[anchor=north] {$t_{1}$};
    \draw[->] (0,0) -- (0,6) node[anchor=east] {$t_{2}$};
    \draw[very thick, blue] (4.1,6) -- (4.1, 3.68) -- (1.9,1.58) -- (0,1.58) ;

    \draw[ultra thick, red, dashed] (4.2, 6) node[anchor = south] {$\psi_1(t_2,s)$} -- (4.2, 3.7) -- (2,1.6) ; \draw[ultra thick, red, dashed] (2,1.6) -- (2,0);

    \draw[very thick, blue] (1.9,0) -- (1.9, 1.58); \draw[very thick, blue] (4.1,3.68) -- (6, 3.68) ;

    \draw (1.5,5) node[anchor=east] {$R_1$}; \draw (1.7,1) node[anchor=east] {$R_{12}$}; \draw (5.5,1) node[anchor=east] {$R_2$}; \draw (5.5,5) node[anchor=east] {$R_{\emptyset}$};

    \draw (3,-0.5) node[anchor=north] {(b)};
  \end{tikzpicture}
  \begin{tikzpicture}[scale=0.64]
    \draw[->] (0,0) -- (6,0) node[anchor=north] {$t_{1}$};
    \draw[->] (0,0) -- (0,6) node[anchor=east] {$t_{2}$};
    \draw[very thick, blue] (3,0) -- (3,6) ; \draw[very thick, blue] (0,3) -- (6,3) ;

    \draw[ultra thick, red, dashed] (3.1, 0) -- (3.1,6) node[anchor = south] {$\psi_1(t_2,s)$} ;

    \draw (1.5,5) node[anchor=east] {$R_1$}; \draw (1.7,1) node[anchor=east] {$R_{12}$}; \draw (5.5,1) node[anchor=east] {$R_2$}; \draw (5.5,5) node[anchor=east] {$R_{\emptyset}$};

    \draw (3,-0.5) node[anchor=north] {(c)};
  \end{tikzpicture}

  \caption{\small The allocation to the $t$-player depending on their own bid vector $(t_1,t_2)$ for fixed values $s=(s_1,s_2)$ of the other player: (a) quasi-bundling allocation; (b) quasi-flipping allocation; (c) crossing allocation.  The boundary $\psi_1(t_2,s)$ for task $1$ is shown by a dashed line.}
  \label{fig:shapesPure}
\end{figure}
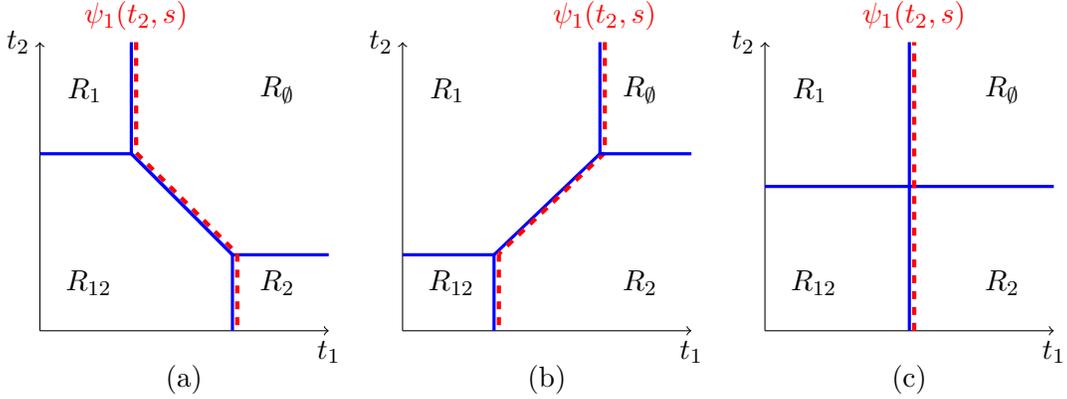

It is known that in the case of two tasks, for a fixed $s$, these regions in a
weakly monotone allocation subdivide $\mathbb R_{\geq 0}^2$ basically in three
possible ways (as in Figure~\ref{fig:shapesPure}), which are characteristic for
the type of the whole allocation-function $A$. Similar subdivisions for multiple
tasks exist in higher dimensions~\cite{ChristodoulouKV08,
  Vid09}. %

\begin{definition} \label{def:boundary} The allocation of a mechanism is defined
  by its boundary functions $\psi_i(t_{-i},s)$, the infimum of the values of
  task $t_i$ for which the $t$-player does not get task $i$. In other words,
  the $t$-player gets task $i$ when $t_i<\psi_i(t_{-i},s)$ and it does not get
  it when $t_i>\psi_i(t_{-i},s)$. 
\end{definition}

For a given $s,$ we call the allocation for the $t$-player 
  \begin{itemize}
  \item \emph{quasi-bundling}, if there are at least two points $t\neq t'$ on
    the boundary of $R_{12}$ and $R_{\emptyset}$. In this case, we say that the
    boundary between $R_{12}$ and $R_{\emptyset}$ is a \emph{bundling boundary}.
  \item \emph{quasi-flipping}, if there are at least two points $t\neq t'$ on
    the boundary of $R_{1}$ and $R_{2}$.  
  \item \emph{crossing}, otherwise.  (See Figure~\ref{fig:shapesPure} for an
    illustration.)
  \end{itemize}
 
\paragraph{Relaxed affine minimizers.}
An allocation is an \emph{affine minimizer}, if there exist positive constants $\lambda'$, $\lambda$ and constants $\pi_a\in \mathbb{R}\cup\{-\infty,\infty\}$ for each allocation $a\in\{12,1,2,\emptyset\}$, so that for every input $(t,s)$ the allocation of the $t$-player minimizes the corresponding expression among the following:
\begin{align} \label{eq:aff-min} \lambda'(t_1+t_2)+\pi_{12},\qquad \lambda'\, t_1+\lambda\, s_2+\pi_{1},\qquad \lambda'\, t_2+\lambda\, s_1+\pi_{2},\qquad \lambda(s_1+s_2)+\pi_{\emptyset}.
\end{align}
An affine minimizer allocation has typically either the form of Figure~\ref{fig:shapesPure}.a or~\ref{fig:shapesPure}.b.

Since we consider nonnegative values, the domain is bounded below. When an
affine minimizer has the bundling form (Figure~\ref{fig:shapesPure}.a), it may
be the case that for small $s_1+s_2$ and $t_1+t_2$, the mechanism locally `looks
like' a bundling mechanism, with only the regions $R_{\emptyset}$ and $R_{12}$
for both players. If this is the case for \emph{all} $(s_1,s_2)$ with
$s_1+s_2<D_s$ and \emph{all} $(t_1,t_2)$ such that $t_1+t_2<D_t$ for some given
$D_t,D_s>0,$ then the mechanism becomes locally a bundling mechanism. As such,
the boundary functions for the allocation need not remain linear for these small
$s$ and $t$ and the boundary between $R_{12}$ and $R_{\emptyset}$ is given by
$t_1+t_2=\zeta(s_1+s_2)$, for some non-decreasing function $\zeta$. See
Figure~\ref{fig:ex-relaxed-minimizer} for an example. For a formal definition of
these mechanisms, which are called \emph{relaxed affine minimizers}, we refer
the reader to~\cite{CKK20}, where they were introduced.

We will refer to the (potentially) non-linear bundling allocation of a relaxed
affine minimizer as its ``bundling tail''. In other words, the bundling tail of
a relaxed affine minimizer is determined by the values of $s$ for which the
allocation of the $t$-player contains only the regions $R_{12}$ and
$R_{\emptyset}$. Note that affine minimizers is simply the special subclass of
these mechanisms in which the bundling tail is missing or conforms
to~\eqref{eq:aff-min}.

\emph{The important aspect of affine minimizers and of relaxed affine minimizers
outside their bundling tail is that the boundary $\psi_i(t_{-i}, s)$ is a
truncated linear function in $s_i$, and in particular of the form
\begin{align*}
  \psi_i(t_{-i}, s)&=\max(0, \, \lambda s_i - \gamma(t_{-i},s_{-i})),
\end{align*}
where $\lambda$ is a constant and $\gamma(t_{-i},s_{-i})$ a function that does
not depend on the values of task $i$.} Note that when we consider a $2\times 2$
mechanism as a restriction of a mechanism with multiple players and tasks, the
coefficient $\lambda$ is not an absolute constant but it can depend in an
arbitrary way on values outside the $2\times 2$ slice (see
Definition~\ref{def:slice} for the definition of slice); similarly for the
function $\gamma$.

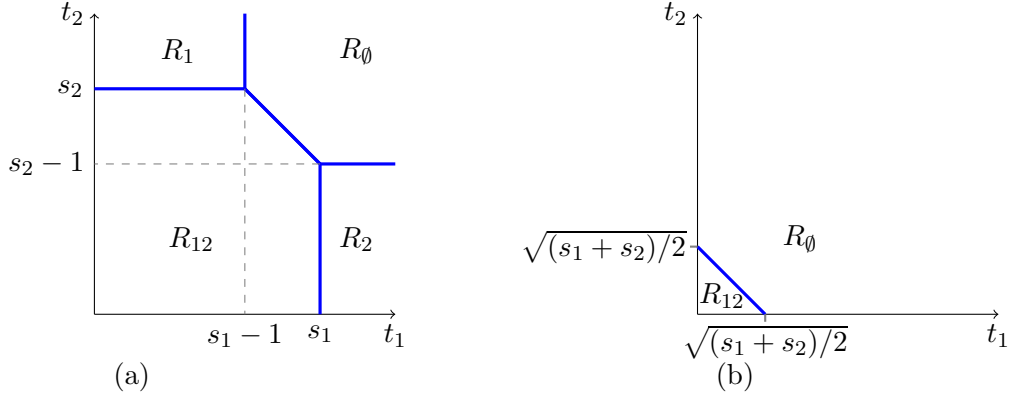
\begin{figure}
  \centering
  \begin{tikzpicture}[scale=0.5]

    \draw[->] (0,0) -- (8,0) node[anchor=north] {$t_1$};
    \draw[->] (0,0) -- (0,8) node[anchor=east] {$t_2$};

    \draw[very thick, blue] (0, 6) node[anchor=east,black] {$s_2$} -- (4, 6) -- (6, 4) -- (8, 4); \draw[very thick, blue ] (6, 0) node[anchor=north,black] {$s_1$}-- (6, 4) -- (4, 6) -- (4, 8);

    \draw[dashed, gray] (4,6) -- (4,0) node[anchor=north,black] {$s_1-1$}; \draw[dashed, gray] (6,4) -- (0,4) node[anchor=east,black] {$s_2-1$};

\draw (1.5,7) node[anchor=west] {$R_1$}; \draw (1.7,2) node[anchor=west] {$R_{12}$}; \draw (7.7,2) node[anchor=east] {$R_2$}; \draw (7.7,7) node[anchor=east] {$R_{\emptyset}$};

    \draw (1,-1) node[anchor=north] {(a)};
  \end{tikzpicture}
\hspace*{1cm}
  \begin{tikzpicture}[scale=0.5]

    \draw[->] (0,0) -- (8,0) node[anchor=north] {$t_1$};
    \draw[->] (0,0) -- (0,8) node[anchor=east] {$t_2$};

    \draw[very thick, blue] (1.8,0) node[anchor=north,black]
    {$\sqrt{(s_1+s_2)/2}$} -- (0,1.8) node[anchor=east,black]
    {$\sqrt{(s_1+s_2)/2}$};
    \draw[thick, gray] (1.8,0) -- (1.8,-0.2);
    \draw[thick, gray] (0,1.8) -- (-0.2,1.8);

    \draw (-0.23,0.5) node[anchor=west] {$R_{12}$};
    \draw (2,2) node[anchor=west] {$R_{\emptyset}$};

    \draw (1,-1) node[anchor=north] {(b)};
  \end{tikzpicture}
  
  \caption{\small An example of a relaxed affine minimizer, which shows the
    allocation of the $t$-player for two distinct values of $s$. The left figure
    shows the allocation for some $s_1\geq 1$ and $s_2\geq 1$; in this case, the
    boundaries are linear; for example, when $t_2\in [s_2-1,s_2]$ the allocation
    for task 1 is determined by comparing $t_1$ to
    $\psi_1(t_2,s)=s_1+s_2-1$. The right figure shows the allocation at the
    bundling tail for some $s_1,s_2\leq 1$, in which case the boundary does not
    have to be linear; for example for $t_2\in[0,\sqrt{(s_1+s_2)/2}]$, we have
    $\psi_1(t_2,s)=\sqrt{(s_1+s_2)/2}$.}
  \label{fig:ex-relaxed-minimizer}
\end{figure}

\paragraph{Relaxed task independent allocations.}
An allocation function $A$ is \emph{task independent} if for both tasks the
allocation of task $i$ depends only on the input values $s_i$ and $t_i.$ For the
$t$-player, the boundary of $t_1,$ i.e., the lowest value, above which $t_1$
does not get task $1,$ is determined by an \emph{arbitrary non-decreasing
  function $\psi_1:[0,B)\rightarrow [0,\infty)$ of $s_1$} and analogously for
the boundary $\psi_2(s_2).$ %
Geometrically, in a task independent mechanism, the allocations of both players are always crossing (Figure~\ref{fig:shapesPure}.c).  In a \emph{relaxed task independent} mechanism the latter property is fulfilled in all but countably many $s$ (resp. $t$) points, in which both $\psi_1$ and $\psi_2$ (resp. $\psi_1^{-1}$ and $\psi_2^{-1}$) have a jump discontinuity. See~\cite{CKK20} for a formal definition. \emph{The important property that we use here is that every relaxed task independent allocation is identical with a task independent allocation on $t_i\in [0,\infty)\setminus T_i, \,s_i\in [0,B)\setminus S_i, $ where the $T_1, T_2, S_1, S_2$ are countable sets.}

\paragraph{1-dimensional mechanisms.}

In a \emph{1-dimensional mechanism} at most two possible allocations are ever realized. %
If the two occuring allocations are $\emptyset$ and $12,$ we call the mechanism \emph{bundling mechanism}. One can consider bundling 1-dimensional mechanisms as degenerate \emph{relaxed affine minimizers} with $\pi_1=\pi_2=\infty$.

The other cases when the allocations to the $t$-player are $\emptyset$ and $1$ (or $\emptyset$ and $2$) are degenerate \emph{task independent} allocations, and for our purposes they can be treated as task independent. %

\paragraph{Constant mechanisms.}
In a \emph{constant or dictatorial mechanism} the allocation is independent of the bids of at least one of the players. This property can also be interpreted as being an affine minimizer with a multiplicative constant $\lambda=0.$ %

\section{Lower Bound}
\label{sec:lb}

In this section, we give a proof of our main result. In
Sections~\ref{sec:construction} and~\ref{sec:def-notation}, we
describe the general setting, provide important definitions, and set
the main goal. With these at hand, we outline the proof
in Section~\ref{sec:outline}. Then in
Sections~\ref{sec:general-lemmas},~\ref{sec:slice}, and~\ref{sec:sum-up} we provide the proofs
of all the technical lemmas needed in order to establish the main
theorem, the proof of which is presented in the last subsection (Section~\ref{sec:appr-ratio-sqrtn}).

\subsection{The construction}
\label{sec:construction}

We consider instances, with $n$ players and $m=(\ell+1)(n-1)+n$ , for
some large $\ell$ to be determined later. Player $0$ is special and
for convenience we use the symbol $t$ for its values; sometimes we
refer to it as the $t$-player or $0$-player. We use the symbol $s$ for
the values of the remaining players $1,\ldots,n-1$, and sometimes we
refer to them as the $s$-players.

There are $n$ tasks $d_0,\ldots,d_{n-1}$, which are special that are
called \emph{dummy tasks}; they play a limited role in the proof and
their only purpose is to increase the lower bound from $\sqrt{n-1}$ to
$1+\sqrt{n-1}$.

The remaining tasks are partitioned into $n-1$ clusters
$C_1,\ldots, C_{n-1}$, where each cluster $C_i$ contains $\ell+1$
tasks and is associated with player $i\in
[n-1]$. %
We call two tasks $j$ and $j'$ that belong to the same cluster \emph{siblings}.

\begin{definition}[Range of input values] \label{def:range-of-values}
  The processing times for a task $j\in C_i$, $i\in [n-1]$, is
  described by two values $t_j$ and $s_j$, as follows:
  \begin{itemize}
  \item player $0$ has processing time $t_j\in (0,\infty)$; with very few exceptions, the argument uses values $t_j\in(0,1]$.
  \item player $i$ has processing time $s_j\in (0,B)$, for some $B$; again with very few exceptions, the argument uses values $s_j\in(0,1]$.
  \item every other player $k\not\in \{0,i\}$ has processing time $\A$, for some
    $\A$.
  \end{itemize}
  Dummy task $d_i$ has value $\A$ for all players except for player $i$ for
  which the value is initially 0.

  The values of $B$ and $\A$ are arbitrarily large functions of $n$ (exponential
  functions suffice to get a good approximation ratio) with
  $\A/((\ell+1) B)\gg n$.
\end{definition}

Since every task has at least one processing time in $[0,B)$, no algorithm with approximation ratio less than $\A/((\ell+1) B)\gg n$, allocates any task to players with value $\A$.

An instance (input) $T$ is described only by two values $t_j$ and $s_j$ per task $j$. Let's denote by $t$, and $s$ the respective vectors, hence, $T=(t,s)=(t_j, s_j)_{j\in [m]}$.

\subsection{Definitions}
\label{sec:def-notation}

We use the following fixed values throughout this section.
\begin{itemize}
\item $\alpha=1/\sqrt{n-1}$
\item $\beta$, an arbitrarily small positive value%
\item $\delta$, a small value; think of this as $n^{-2}$; any value $o(n^{-3/2})$ gives lower bound $1+(1-o(1))\sqrt{n-1}$. For simplicity, we will assume that $\delta$ is selected so that \emph{$2n/\delta$ is an integer}.
\item $\delta'=2\delta$, upper bound on the cost of a trivial cluster (see definition of a trivial cluster below)%
\item $\rho,$ the targeted lower bound on the approximation ratio,
  which is given
  by $$\rho=1-\delta'+\min\left\{\frac{1}{\alpha+(n-1)\delta'},
    \frac{(n-1)\alpha}{1+(n-1)\delta'}\right\}=1+(1-o(1))\sqrt{n-1},$$
\item $\ell+1$ number of tasks per cluster; exponential in $n$, and greater than $1+3n^3 \left(\frac{3n}{\delta}\right)^{n-3}=(n/\delta)^{\Theta(n)}$.
\end{itemize}

The proof is based on the characterization of $2\times 2$
mechanisms. To be able to use it, we fix all other values except of
the values of two tasks and then use the characterization. The next
definition formalizes this.

\begin{definition}[Slice] \label{def:slice}
  Fix an instance $T$ and two tasks $p$ and $p'$. The set of instances
  that agree with $T$ on all tasks except on the tasks $p$ and $p'$ is
  called a $(p,p')$-slice for $T$ or simply slice of tasks $p$ and
  $p'$. A slice may involve values of 3 different players, if $p,p'$
  belong to different clusters, or values of 2 players if $p,p'$ are
  siblings. In the latter case, the allocation of these tasks (due to
  Lemma~\ref{lem:restriction}) is defined by a $2\times 2$ mechanism
  which we call $(p,p')$-slice mechanism for $T$.
\end{definition}

To take advantage of weak monotonicity, we need to consider
perturbations of instances, that is, instances that differ by a small
amount from a given one. Here is a precise definition of the
perturbations that we use.

\begin{definition}[Perturbations of an instance]
  Fix an instance $T=(t,s)$ and a set of tasks $P$. Let $V$ be a vector of open intervals $V_j$, one for each task $j\in P$, such that
  \begin{itemize}
  \item $V_j=(t_j,t_j+ \theta_j)$, for every task $j\in P$, for some $\theta_j\in (0,\beta)$
  \end{itemize}
  The set of instances $T'=(t',s)$ with $t'_j\in V_j$, when $j\in P$, and $t'_j=t_j$ when $j\not\in P$, is called a set of \emph{$V$-perturbations of $T$ for tasks $P$} or simply set of \emph{$V$-perturbations} of $T$, when $P$ is understood from the context.  %
\end{definition}

Note that we consider perturbations only of $t$-values and only in one
direction (towards higher values). The values of instance $T$
itself are \emph{not} in the perturbation for tasks in $P$.

Perturbations of an instance $T$ satisfy a few important properties:
\begin{itemize}
\item for any given allocation, the cost of all instances of a perturbation is almost the same as the cost of $T$ (within $(n-1)|P|$).
\item they allow us to select points not on boundaries of the mechanism, thus when we apply weak monotonicity for the $0$-player, we can guarantee that certain allocations do not change.
\end{itemize}

The central part of the argument is an induction on the number $k$ of clusters. The values of the tasks in the remaining $n-k-1$ clusters, which we call trivial clusters, play a limited role, but it is important that they do not affect substantially the approximation ratio. We allow their values to be arbitrary (within the limits of Definition~\ref{def:range-of-values}), but we require that one of the values is very small in the following sense.

\begin{definition}[Trivial cluster]
  A cluster is called \emph{trivial} for a given instance $T$ if the optimal allocation for all tasks of the cluster has cost at most $\delta'$.
\end{definition}

We usually select a single task from each non-trivial cluster. We now fix the terminology for such sets of tasks:

\begin{definition}[Regular set of tasks]
  A set of tasks is called \emph{regular} if they are from different clusters. %
\end{definition}

We make repeated use of the following set of instances:

\begin{definition}[Standard instance and $\hat T(P)$]
  An instance $T$ is \emph{standard} for a set of clusters $\cal C$ if the following conditions hold
  \begin{itemize}
  \item the value of every task $j\in \cup \cal C$ is $[t_j=\beta, \, s_j=1]$, and
  \item the remaining clusters are trivial
  \end{itemize}
  We also say that \emph{$T$ is standard for a regular set of tasks $P=\{p_1,\ldots,p_k\}$}, if it is standard for the set of clusters that intersect $P$. We denote by $\hat T(P)$ the instance that agrees with $T$ everywhere except of tasks in $P$ for which $[\hat t_{p_i}=\alpha, \hat s_{p_i}=1]_{i=1}^k$.  See Figure~\ref{fig:Good Set} for an illustration.
\end{definition}

The following definition is at the heart of the proof. Roughly speaking, the aim of the proof is to show by induction that there exist good sets of $n-1$ tasks, otherwise the mechanism has high approximation ratio.

\begin{definition}[Good set of tasks] \label{def:goodness} Fix a
  mechanism, a set of regular tasks $P=\{p_1,\ldots,p_k\}$ from a set
  of clusters $\cal C$, and a standard instance $T$ for $P$. The set
  of tasks $P$ is called \emph{good for instance $T$} if there exists
  a vector $V$ of open intervals for the tasks in $P$ such that the
  mechanism allocates all tasks in $P$ to the 0-player {\em for every}
  instance in the set of $V$-perturbations of $\hat T(P)$ for tasks in $P$. We call $V$
  the witness of goodness of $P$. (See Figure~\ref{fig:Good Set} for
  an illustration.)

  If no such $V$ exists, we call $P$ a \emph{bad} set. A singleton bad set will
  be simply called \emph{bad task}. For technical reasons, we will also call a
  task bad when it has all the above properties of a bad task, but its $s$-value
  is $1-\epsilon$, for some arbitrarily small $\epsilon>0$.\footnote{In
    particular, a task will be bad if for every $\epsilon'>0,$ there is a
    $0\leq \epsilon< \epsilon',$ so that setting its $s$-value to $1-\epsilon,$
    the task has the above properties. This will serve to exclude singular
    points of relaxed task independent mechanisms.}
\end{definition}

\begin{figure}[h]
  \begin{align*}
    T=  \left[
    \begin{array}{c c c c c c c c c}
      \beta & \beta & \beta & \beta & \beta & \beta & \delta & \beta & \beta \\
      1 & 1 & 1 &  &  &  &  &  &  \\
        &   &   & 1 & 1 & 1 &   &   &  \\
        &   &   &   &   &   & 1/2 & 1 & 1 \\
    \end{array}\right],\qquad
    \hat T(\{1,5\})=  \left[
    \begin{array}{c c c c c c c c c}
      \alpha & \beta & \beta & \beta & \alpha & \beta & \delta & \beta & \beta \\
      1 & 1 & 1 &  &  &  &  &  &  \\
             &   &   & 1 & 1 & 1 &   &   &  \\
             &   &   &   &   &   & 1/2 & 1 & 1 \\
    \end{array}\right]
  \end{align*}

  \caption{\small $T$ and $\hat T$ are two instances with $4$ players and $3$ clusters, where
    rows correspond to players and columns to tasks. The values $\A$ and the dummy tasks are omitted. Typical values of the constants are $\beta\approx 0$, $\delta\approx n^{-2}$, $\alpha\approx n^{-1/2}$. $T$
    is a standard instance for clusters $\{C_1,C_2\}$. $C_3$ is a trivial cluster because it
    has cost $\delta+2\beta\leq \delta'$.  %
    If there exists a $V$ such that tasks $1\in C_1,$ and $5\in C_2$ are assigned to the $0$-player for every instance in the set of $V$-perturbations of $\hat T(\{1,5\})$, then $\{1,5\}$ is a good set for $T$.}
  \label{fig:Good Set}
\end{figure}

Now that we have the definition of a set of good tasks, we can state the main
lemma of this section.

\begin{lemma}[Main Lemma] \label{lemma:main} At least one of the following three
  properties hold for every truthful mechanism:
  \begin{itemize}
  \item[(i)] the approximation ratio is at least $\rho$
  \item[(ii)] there exists a bad task
  \item[(iii)] there exists a good set of $n-1$ tasks.
  \end{itemize}
\end{lemma}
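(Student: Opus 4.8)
The plan is to argue by induction on the number $k$ of clusters that carry "good" tasks, showing that either we hit condition (i) or (ii) along the way, or we can keep extending a good set until it reaches size $n-1$, giving (iii). The base case $k=1$ asks whether a single task can be made good: starting from the standard instance $T$ for one cluster $C_i$ and passing to $\hat T(\{p\})$ where $p\in C_i$ has value $[\alpha,1]$, we look at the $2\times 2$ slice mechanism for $p$ and one of its siblings (using Lemma~\ref{lem:restriction}). By Theorem~\ref{theo:addchar} this slice mechanism is of one of the four listed types; we analyze each type to see that either the $0$-player receives $p$ on a whole perturbation neighborhood of $\hat T(\{p\})$ — making $\{p\}$ good — or $p$ is a bad task (possibly after the $1-\epsilon$ relaxation to dodge the countably many singular points of relaxed task-independent mechanisms), or the allocation already forces makespan blow-up of ratio $\rho$. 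The perturbation device is what lets us land strictly inside an allocation region rather than on a boundary, so weak monotonicity for the $0$-player (Lemma~\ref{lemma:tool}) then propagates the allocation over the whole little hyperrectangle $V$.

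For the inductive step, suppose we have a good set $P=\{p_1,\dots,p_k\}$ with $k<n-1$ and witness $V$, and that neither (i) nor (ii) holds. We want to produce a good set of size $k+1$. The idea is to activate a new, currently-trivial cluster $C$ by raising its distinguished task $p_{k+1}$ to the value $[\alpha,1]$, i.e. pass from a standard instance for $P$ to a standard instance for $P\cup\{p_{k+1}\}$ and then to $\hat T(P\cup\{p_{k+1}\})$. The point of the dummy tasks and of the $\A/((\ell+1)B)\gg n$ gap is that no reasonable-ratio mechanism ever puts a task on a machine bidding $\A$, so each task $j\in C_i$ lives on exactly two machines — player $0$ and player $i$ — and when player $0$ declines $p_{k+1}$ we know exactly who takes it (this is the multi-player obstacle from Lemma~\ref{lemma:tool} that the construction sidesteps). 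Now run the single-task analysis of the base case for $p_{k+1}$ against a sibling, on top of the already-good $P$: either $p_{k+1}$ is bad (contradicting our assumption, so this branch closes), or the slice forces ratio $\rho$, or $p_{k+1}$ can be allocated to player $0$ over a perturbation interval. In the last case we must combine the new interval for $p_{k+1}$ with the old witness $V$ for $P$; here the monotonicity-restriction lemma and the fact that perturbations only raise $t$-values (so raising $t_{p_{k+1}}$ cannot knock the already-held tasks $p_1,\dots,p_k$ off player $0$, by Lemma~\ref{lemma:tool}) let us take the product of the intervals and conclude that $P\cup\{p_{k+1}\}$ is good for the enlarged standard instance.

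The main obstacle — and where most of the work will go — is the inductive step in the presence of the awkward mechanism types, namely relaxed affine minimizers with a nonlinear "bundling tail" and relaxed task-independent mechanisms. For a relaxed affine minimizer outside its bundling tail the boundary $\psi_i(t_{-i},s)=\max(0,\lambda s_i-\gamma(t_{-i},s_{-i}))$ is truncated-linear in $s_i$, with $\lambda$ and $\gamma$ depending arbitrarily on the values outside the $2\times 2$ slice; controlling how this $\lambda$ can vary as we activate successive clusters, and ruling out that the mechanism is "always bundling at the tail" (which would force both tasks of a cluster onto player $0$ and is handled by a separate makespan estimate), is the delicate part. The relaxed task-independent case is handled by the $1-\epsilon$ slack in the definition of bad task, which excludes the countable exceptional sets $T_i,S_i$; one has to check this exclusion is compatible with the perturbation construction. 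Everything else — translating "good set of $n-1$ tasks does not exist" through $k$ steps, and bounding costs of trivial clusters by $\delta'$ so they do not spoil the ratio — is bookkeeping driven by the constants fixed in Section~\ref{sec:def-notation}.
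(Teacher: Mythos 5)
Your inductive step has a fatal gap at exactly the point you gloss over: you claim that ``raising $t_{p_{k+1}}$ cannot knock the already-held tasks $p_1,\dots,p_k$ off player $0$, by Lemma~\ref{lemma:tool}.'' That is not what Lemma~\ref{lemma:tool} says. The lemma protects the allocation of player~$0$ only when you \emph{decrease} the values of the tasks player~$0$ holds and \emph{increase} the values of tasks they do not hold. Here you are raising the value $t_{p_{k+1}}$ of a task that is (supposedly) allocated to player~$0$, from $\beta$ up to $\alpha+\theta$. Weak monotonicity says nothing about what happens to the other tasks $p_1,\dots,p_k$ when you do this: player~$0$ could easily drop $p_{k+1}$ and some of $p_1,\dots,p_k$ simultaneously. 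Indeed, that is exactly what happens in the ``bundling'' regions of an affine minimizer. So your plan of ``take the product of the intervals and conclude that $P\cup\{p_{k+1}\}$ is good'' does not go through, and this is not a technicality --- it is precisely the multi-task obstacle the paper's extra structure is designed to defeat.

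More structurally, your induction tracks a single good set and tries to extend it by one task. That is too fragile: the new slice mechanism for $(p_{k+1},p_{k+1}')$ might genuinely be a (relaxed) affine minimizer or be sitting in a bundling tail, and a single chosen task can fail. The paper's proof does not claim ``this particular $P$ extends''; it tracks the probability $b_k$ that a \emph{random} regular $k$-set is bad, and proves that if a potentially-good set $P$ fails to be good, then $(P_{-k},p_k')$ is good for \emph{almost every} sibling $p_k'$ (Lemma~\ref{lemma:many-good-sets}). This re-selection of siblings, with the probabilistic bookkeeping of Lemmas~\ref{lemma:potentially-good} and~\ref{lemma:prob-of-bad}, is what makes the induction close; there is no fixed witness set being extended. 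Relatedly, you never engage with the notion of a \emph{potentially-good} set (Definition~\ref{def:potentially-good}), and in particular with its second clause requiring $P_{-k}$ to be good when $p_k$ is moved to the grid of values $[t_{p_k}=\delta,\,s_{p_k}=q\delta/(2n)]$. That clause is the lever that Lemma~\ref{lemma:magic-affine minimizer} uses to rule out linear boundaries, by sliding $s_{p_k}$ down and showing the bundling boundary forces a contradiction with the goodness of $P_{-k}$ at those points. Without it, the affine-minimizer case simply cannot be excluded. Finally, your ``base case'' is vacuous given Assumption~\ref{ass1}: a singleton is good iff it is not a bad task by definition, so $b_1=0$ holds immediately and no 2$\times$2 case analysis is needed there; the case analysis you describe belongs in the step, where it must instead be applied to the $(p_k,p_k')$-slice as the paper does.
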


The aim of this section is to prove this lemma by assuming that Properties
\emph{(i)} and \emph{(ii)} do not hold and then show Property \emph{(iii)}.

\begin{assumption}\label{ass1}
  For the rest of this section, we assume that the approximation ratio is less than
  \begin{align*}
\rho=1-\delta'+\min\left\{\frac{1}{(\alpha+(n-1)\delta')}, \frac{(n-1)\alpha}{(1+(n-1)\delta')}\right\}
  \end{align*}
  and that there is no bad task.
\end{assumption}

Before we give the proof of the Main Lemma, we discuss how it will be used to
prove a lower bound of $1+\sqrt{n-1}$ on the approximation ratio (Section~\ref{sec:appr-ratio-sqrtn} contains a detailed derivation). We
provide here a high-level argument.  Property \emph{(i)} immediately implies the
lower bound. The existence of a bad task (Property \emph{(ii)}) means that there is
only one non-trivial task $j$ with values $[t_j=\alpha, \, s_j=1]$ and the
mechanism gives it to the $s$-player. In this case, the approximation ratio is
approximately $1/\alpha=\sqrt{n-1}$, which can be improved to $1+\sqrt{n-1}$
using the dummy tasks. Finally, a good set of $n-1$ tasks (Property \emph{(iii)}) has
approximation ratio approximately $(n-1)\alpha=\sqrt{n-1}$, which again can be
improved to $1+\sqrt{n-1}$ using the dummy tasks, giving the desired result.

To obtain a proof of the Main Lemma, we will show that there exists a good set
of $k$ tasks for every $k\in[n-1]$, by induction on $k$. To show existence of
good sets of $k$ tasks, we start with some set of $k$ tasks, which we call
potentially-good set, such that all its subsets of $k-1$ tasks are good. To
satisfy all the requirements in the proof, the precise structure of a
potentially-good set of tasks is complicated and it is detailed in the following
definition (see Figure~\ref{fig:good-potentially-good}).

\begin{definition}[Potentially-good set of tasks] \label{def:potentially-good} Fix a mechanism. A set of regular tasks $P=(p_1,\ldots,p_k)$ from a set of clusters $\cal C$ is called \emph{potentially-good for an instance $T$} if $T$ is standard for $P$ and the following conditions hold
  \begin{itemize}
  \item for every $i\in [k]$, $P_{-i}=P\setminus \{p_i\}$ is good for $T$; let $V^i$ denote a witness of goodness of $P_{-i}$
  \item $P_{-k}$ is good (with witness $V^k$) for every instance that results from $T$ when we replace the values\footnote{Note that the cluster of task $p_k$ must be trivial after the change. This is the point where we need the extra parameter $\delta'=2\delta$. Actually any value of $\delta'\geq\delta+(\ell+1)\beta$ works.
    } of task $p_k$ with $[t_{p_k}=\delta, \, s_{p_k}=q \delta/(2n)]$, for $q=0,1,\ldots,2n/\delta.$
  \end{itemize}
  The witness $V$ of potential goodness of $P$ is defined as follows: for task $p_i$ take the intersection of all relevant intervals in $V^j$, $j\neq i$.
\end{definition}

\begin{remark}
  One can replace $\delta/(2n)$ by $\delta/(2\rho)$ in the above
  definition to reduce the required number of $\ell+1$ tasks per
  cluster, but we opt for simplicity.
 \end{remark}
 \begin{figure}
  \centering
  \begin{tikzpicture}[scale=0.70]

    \draw[->] (0,0) -- (6,0) node[anchor=north] {$t_{p_k}$};
    \draw[->] (0,0) -- (0,6) node[anchor=east] {$t_{p_i}$};
    \draw[very thick, blue] (3.1,6) -- (3.1,3.68) -- (4,2.78) -- (6,2.78) ; \draw[very thick, blue] (0, 3.68) -- (3.1,3.68) ;
    \draw (3.0,0) node[anchor=north] {$\alpha$};
    \draw (0,3.0) node[anchor=east] {$\mathbf{\alpha}$};
    \draw[very thick, blue] (4,0) -- (4, 2.78) -- (3.1, 3.68);

    \draw (1.5,5) node[anchor=east] {$R_k$}; \draw (1.5,1) node[anchor=east] {$R_{ki}$}; \draw (5.5,1) node[anchor=east] {$R_i$}; \draw (5.5,5) node[anchor=east] {$R_{\emptyset}$};
    \fill[red] (3.0,3.0) circle (3pt);
    \draw[ultra thin, dashed] (0,3.0) -- (3.0,3.0) -- (3.0,0);
    \draw (3,-1.0) node[anchor=north] {(a)};
  \end{tikzpicture}
\hspace{3.0cm}%
\begin{tikzpicture}[scale=0.70]

    \draw[->] (0,0) -- (6,0) node[anchor=north] {$t_{p_k}$};
    \draw[->] (0,0) -- (0,6) node[anchor=east] {$t_{p_i}$};
    \draw[very thick, blue] (1.9,6) -- (1.9,3.68) -- (4,1.58) -- (6,1.58) ; \draw[very thick, blue] (0, 3.68) -- (1.9,3.68) ;
    \draw (3.0,0) node[anchor=north] {$\alpha$};
    \draw (0,3.00) node[anchor=east] {$\alpha$};
    \draw[very thick, blue] (4,0) -- (4, 1.58) -- (1.9, 3.68);

    \draw (1.5,5) node[anchor=east] {$R_k$}; \draw (1.5,1) node[anchor=east] {$R_{ki}$}; \draw (5.5,1) node[anchor=east] {$R_i$}; \draw (5.5,5) node[anchor=east] {$R_{\emptyset}$};
    \fill[red] (3.00,3.00) circle (3pt);
    \draw[ultra thin, dashed] (0,3.0) -- (3.0,3.0) -- (3.0,0);
    \draw (3,-1.0) node[anchor=north] {(b)};
  \end{tikzpicture}
  
  \caption{\small Allocation of the $t$-player in the $(p_i,p_k)$-slice for
    a regular set of tasks $P=\{p_i,p_k\},$ and some given standard instance $T$
    for $P$. Point $(\alpha,\alpha)$ represents $\hat T(P)$ when $P$ is good
    (left figure) and when $P$ is potentially-good but not good (right figure).}
  \label{fig:good-potentially-good}
\end{figure}
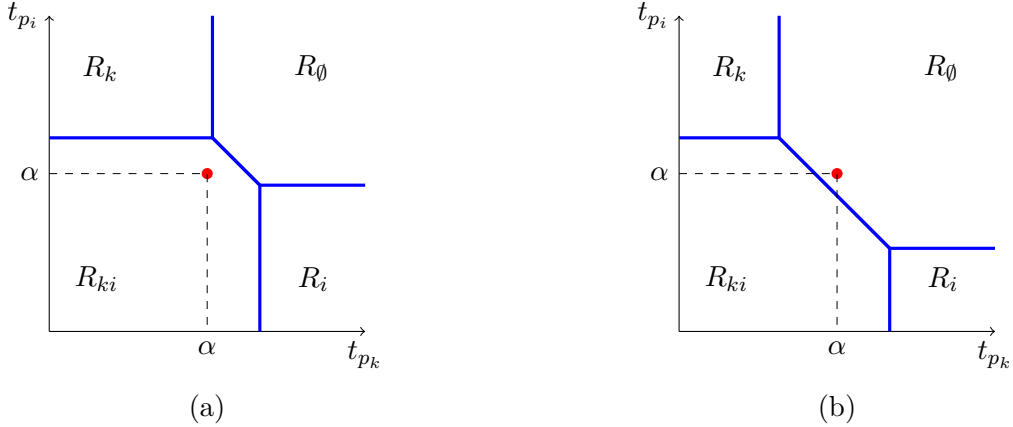

\subsection{Outline}
\label{sec:outline}

We now give a rough outline of the argument that establishes the Main Lemma
(Lemma~\ref{lemma:main}). We consider standard instances $T$ for sets of $k$
clusters $\cal C$. We will show that for $k=n-1$, there is a good set of
tasks. By induction on $k$, we show the stronger claim that there are \emph{many
  sets of tasks that are good}. The base case ($k=1$) is the assumption that all
single tasks are good (Assumption~\ref{ass1}).

We use a probabilistic argument to try to keep things simple. In particular, let
$1-b_k$ be (a lower bound on) the probability that %
a random regular set of tasks\footnote{In the argument, we select random tasks
  and instances. All random selections are from uniform distributions
  and independent.} $P=(p_1,\ldots,p_k)$ from $\cal C$ is good for
$T$. We want to show that the probability $b_k$ of $P$ being a bad set
of tasks is small, and in particular that $b_{n-1}<1$, which
establishes the existence of a good set of $n-1$ tasks.

\paragraph{Showing that $b_k$ is small.}
We show that $b_k$ is small by considering potentially-good sets
$P=(p_1,\ldots,p_k)$ of size $k$ that contain tasks from distinct
clusters $(Q_1,\ldots,Q_k)$ to establish the following two facts:
\begin{description}
\item[{\bf Fact a.}] the probability that a randomly selected set
  $P=(p_1,\ldots,p_k)$ is potentially-good is at least $1-(3n/\delta -
  1)b_{k-1}$ (Lemma~\ref{lemma:potentially-good}).
\item[{\bf Fact b.}] if $P=(p_1,\ldots,p_k)$ is a potentially-good set
  of tasks, either $P$ is good itself or $(P_{-k},p_k')$ is good with
  probability at least $1-2n^3/\ell$, where $p_k'$ is a random sibling
  of $p_k$ (Lemma~\ref{lemma:many-good-sets}). Roughly speaking,
  either $P$ is good or {\em almost all other sets are good} (with
  exponentially small probability of the negative event).
\end{description}

By Assumption~\ref{ass1}, there is no bad task, hence $b_1=0$, and then in
Lemma~\ref{lemma:prob-of-bad} we show how to combine these two facts in order to
get that $b_k$ is bounded above by $3n^3/\ell(3n/\delta)^{k-2}$
(Lemma~\ref{lemma:prob-of-bad}). If we allow the number $\ell+1$ of tasks per
cluster to be sufficiently large, this shows that $b_{n-1}<1$.

\paragraph{Showing Fact b.}

The difficult part is to establish the second of the above two facts
(Fact b). Let's assume that $P$ is potentially-good but not good. We
show that $(P_{-k},p_k')$ is good for many $p_k'$'s, as follows
\begin{itemize}
\item first, we observe that there is a witness of potential goodness of $P$ for which all tasks in $P=(p_1,\ldots,p_k)$ are given to the $s$-players. This essentially follows from the definition of potentially-good and weak monotonicity (Lemma~\ref{lemma:all-tasks-to-s})
\item let $p_k'$ be a sibling of $p_k$ and consider the
  $(p_k,p_k')$-slice mechanism. This is exactly the point where we
  exploit the $2\times 2$ characterization that we provide in
  Section~\ref{sec:char-all-truthf}. The proof proceeds by treating
  carefully all possible cases
  \begin{description}
  \item{\bf affine minimizers:} we show that the mechanism is not an affine
    minimizer almost surely (Lemma~\ref{lemma:prob-of-linear}) and the stronger
    claim that they do not exist (Lemma~\ref{lemma:prob-of-linear-strong})
  \item{\bf relaxed affine minimizers:} we show that the probability that
    the mechanism is a relaxed affine minimizer is at most $2n^2/\ell$
    (Lemma~\ref{lemma:prob-of-relaxed-aff-min})
  \item{\bf 1-dimensional and constant mechanisms:} we show that
    1-dimensional and constant mechanisms do not occur, otherwise
    the approximation ratio is high (Lemma~\ref{lemma:prob-of-1dim})
  \item{\bf task independent or relaxed task independent mechanisms:} we show that
    if the mechanism is task independent or relaxed task independent for each of
    $k$ appropriately selected random instances from the witness, then
    $(P_{-k},p_k')$ is good (Lemma~\ref{lemma:magic-task-independent})
  \end{description}
\item we conclude that for a random sibling $p_k'$, the mechanism must
  be either task independent or relaxed task independent for all
  these $k$ instances with probability at least $1-k\, 2n^2/\ell$;
  therefore $(P_{-k},p_k')$ is good with probability at least
  $1-2n^3/\ell$.  (Lemma~\ref{lemma:many-good-sets}).
\end{itemize}
The first item, i.e., to show that affine minimizers are sparse, exploits an interesting use of goodness and linearity. This is where the strange second point of Definition~\ref{def:potentially-good}  is needed. The proof of relaxed affine minimizers uses the same machinery, but it has an extra layer of difficulty, which arises from having to guarantee that the action happens at the linear part and not at the bundling tail of these mechanisms. In fact, we might have a positive probability (at most $2n^3/\ell$) to pick a wrong sibling $p_k'$ solely due to boundling tails of relaxed affine minimizers. The proof of the last item about task independent and relaxed task independent mechanisms has very similar flavor. It is essentially this part that takes away the complications that arise from having to deal with an additive domain.

The core of the argument for (relaxed) affine minimizers uses the characterization of the $(p_k,p_k')$-slice mechanism in a way that seems peculiar at first glance. Instead of focusing on the $(p_k,p_k')$-slice mechanism, it focuses on the slice $(p_i,p_k)$, for some $p_i\in P_{-k}$ (as in Figure~\ref{fig:magic-linear}). The reason is simple: the characterization is only used to extract the property that the allocation of $p_k$ has linear boundaries, and then taking advantage of the fact ---due to potentially-goodness property--- that the $(p_i,p_k)$-slice mechanism exhibits a tight connection between the allocation boundaries of $p_i$ and $p_k$, we can argue about the allocation of task $p_i$. A similar argument is used when the $(p_k,p_k')$-slice mechanism is task independent.

\subsection{General lemmas}
\label{sec:general-lemmas}

We start by establishing some useful facts. The following simple lemma establishes that mechanisms for slices of two tasks that are affine minimizers or relaxed affine minimizers have coefficients bounded by the approximation ratio. This is useful because when we change an $s$-value by a given amount, we can bound the change in the boundary for the $t$-value.

\begin{lemma} \label{lemma:2x2-aff-min} Let $P=\{p_1,\ldots,p_k\}$ be a
  potentially-good set of tasks for $T$ with witness $V$ and let $T'$ be an
  instance in the set of $V$-perturbations of $\hat T(P)$. Consider a sibling
  $p_k'$ of $p_k$ and assume that the $(p_k,p_k')$-slice mechanism for $T'$ is
  an affine minimizer or a relaxed affine minimizer with boundary function
  $\psi_{p_k}(s_{p_k})$ for task $p_k$, which is linear when we fix the values of
  all other tasks i.e., $\psi_{p_k}(s_{p_k})=\max(0, \lambda\, s_{p_k} - \gamma)$ for some $\lambda$ and $\gamma$ that may depend on all other values but not on $t_{p_k}$ and $s_{p_k}$. Suppose further that $\psi_{p_k}(1)>0$. Then the approximation ratio of the mechanism (for all tasks) is at least $1+\max(\lambda, 1/\lambda)-\delta'$.
\end{lemma}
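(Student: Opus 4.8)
The plan is to derive the bound from one explicit instance, designed so that the linear boundary $\psi_{p_k}(s_{p_k})=\max(0,\lambda s_{p_k}-\gamma)$ of the slice mechanism forces a single machine to carry two expensive tasks while the optimum can split them cheaply. Freeze every task other than $p_k$ and $p_k'$ at its $T'$-value; by Lemma~\ref{lem:restriction} and Definition~\ref{def:slice} the induced restriction is the $(p_k,p_k')$-slice mechanism, and by hypothesis it is a (relaxed) affine minimizer, so the $0$-player receives a task $j\in\{p_k,p_k'\}$ exactly when $t_j<\psi_j(s_j)$, where both boundaries have the form $\max(0,\lambda s_j-\gamma_j)$ with the common slope $\lambda$ (the single multiplicative constant in \eqref{eq:aff-min}) and offsets $\gamma_{p_k}=\gamma$, $\gamma_{p_k'}=\gamma'$ not depending on the own values of $p_k,p_k'$; since $\psi_{p_k}(1)=\lambda-\gamma>0$, the boundary of $p_k$ stays on its linear piece for all $s_{p_k}\ge1$ (hence, for a relaxed affine minimizer, away from the bundling tail). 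I would treat $\lambda\ge1$ and $\lambda\le1$ separately; they are mirror images under swapping the roles of the $0$-player and the $s$-machine of $p_k$'s cluster (equivalently, of the $t$- and $s$-coordinates), so it suffices to describe $\lambda\ge1$, aiming at a ratio close to $1+\lambda$.

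For $\lambda\ge1$, build $\bar T$ from $T'$ by changing values only inside the cluster $C$ of $p_k$: fix a large $\sigma<B$, put $s_{p_k}=\sigma$ and $s_{p_k'}\approx\sigma/\lambda$, set $t_{p_k}$ just below $\psi_{p_k}(\sigma)=\lambda\sigma-\gamma$ and $t_{p_k'}$ just below $\psi_{p_k'}(\sigma/\lambda)\approx\sigma$, choosing the split of $(t_{p_k},t_{p_k'})$ so that the pair falls in the region $R_{12}$ of the slice, set the dummy of machine $C$ to cost $0$ there, and leave the other $\ell-1$ siblings of $p_k$ (cost $\approx\beta$ each for the $0$-player) and all other clusters untouched. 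Then the slice mechanism assigns both $p_k$ and $p_k'$ to the $0$-player, and whatever else the mechanism gives that machine only adds to its load, so $\mech(\bar T)\ge t_{p_k}+t_{p_k'}\ge(1+\lambda)\sigma-O(1)$. Meanwhile the optimum routes $p_k$ to machine $C$ (cost $\sigma$), $p_k'$ back to the $0$-player (cost $\approx\sigma$), the remaining siblings of every cluster and the dummy $d_0$ to the $0$-player (total $O(1)$, using that $\beta$ is arbitrarily small), each dummy to its own machine (cost $0$), the distinguished task of each other cluster meeting $P$ to its $s$-machine (cost $\le1$), and every trivial cluster optimally (cost $\le\delta'$ on its own machine); since $p_k$ alone forces makespan $\ge\sigma$, this gives $\opt(\bar T)=\sigma+O(1)$. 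Hence
$$\frac{\mech(\bar T)}{\opt(\bar T)}\ \ge\ \frac{(1+\lambda)\sigma-O(1)}{\sigma+O(1)},$$
which is at least $1+\lambda-\delta'$ once $\sigma$ is taken large enough (relative to $\lambda$ and to $1/\delta'$), the $\delta'$ slack absorbing the residual costs and the additive constants of \eqref{eq:aff-min}. The case $\lambda\le1$ is symmetric: raise $t_{p_k}$ to a large $\tau$ and take $s_{p_k}\approx\tau/\lambda$, $s_{p_k'}\approx\tau$ so that the slice puts neither task on the $0$-player (allocation $\emptyset$, i.e. machine $C$ gets both), whence that machine's load is $\ge(1+1/\lambda)\tau-O(1)$ while the optimum routes $p_k$ to the $0$-player at cost $\tau$, giving ratio at least $1+1/\lambda-\delta'$.

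The step I expect to be the crux is threading the \emph{coupled} two-dimensional allocation of the slice: $\psi_{p_k}$ depends on $t_{p_k'},s_{p_k'}$ and $\psi_{p_k'}$ on $t_{p_k},s_{p_k}$, so I must certify that the target point really lies in $R_{12}$ (resp.\ $R_{\emptyset}$). The clean route is to write $R_{12}$, for fixed $(s_{p_k},s_{p_k'})$, explicitly from \eqref{eq:aff-min} as $\{\,t_{p_k}<a_1,\ t_{p_k'}<a_2,\ t_{p_k}+t_{p_k'}<a_{12}\,\}$ with $a_1=\lambda s_{p_k}+O(1)$, $a_2=\lambda s_{p_k'}+O(1)$, $a_{12}=\lambda(s_{p_k}+s_{p_k'})+O(1)$; with $s_{p_k}=\sigma$, $s_{p_k'}\approx\sigma/\lambda$ the caps become $\lambda\sigma+O(1)$, $\sigma+O(1)$ and $(1+\lambda)\sigma+O(1)$, and since $a_1+a_2=a_{12}+O(1)$ a short check on which cap is binding shows the intended point $t_{p_k}\approx\lambda\sigma$, $t_{p_k'}\approx\sigma$ fits. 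Two further points need attention: for a relaxed affine minimizer one must stay off the bundling tail --- on $p_k$'s side this is exactly what $\psi_{p_k}(1)>0$ provides, and on $p_k'$'s side, when $\lambda$ is large and $s_{p_k'}\approx\sigma/\lambda$ is small, a short extra argument is needed --- and the additive slack must be tracked down to $\delta'$ rather than merely $o(1)$, which is what dictates how large $\sigma$ (hence the global constant $B$) must be chosen. The rest is routine bookkeeping with the values of Definition~\ref{def:range-of-values}.
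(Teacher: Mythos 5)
Your approach is genuinely different from the paper's: you try to extract the ``$+1$'' from the sibling $p_k'$ and the geometry of $R_{12}$ in the slice, whereas the paper pairs the single task $p_k$ with a dummy task ($d_0$ when $\lambda\geq1$, $d_k$ when $\lambda<1$). The paper's route is essentially one-dimensional: raise $t_{d_0}$ to $s_{p_k}$, lower $t_{p_k}$ slightly; weak monotonicity keeps both $d_0$ and $p_k$ on the $0$-player with combined load $\approx(1+\lambda)s_{p_k}$, while the optimum splits them for makespan $\approx s_{p_k}$; the $\lambda<1$ case mirrors this with $d_k$ on machine $k$, giving $1+1/\lambda$. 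This uses exactly what the hypotheses give---linearity of $\psi_{p_k}$ and $\psi_{p_k}(1)>0$---and never touches $p_k'$, so the ``crux'' you flag (certifying that the target point lies in $R_{12}$, resp.\ $R_\emptyset$, of the coupled two-dimensional slice) simply never arises.

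That crux is a real gap in your route. For $\lambda\geq1$ you need $R_{12}$ of the slice to be non-empty with finite caps $a_1,a_2,a_{12}$ (i.e.\ $\pi_{12},\pi_1,\pi_2$ finite) and $\psi_{p_k'}$ to be linear with the same slope and an $O(1)$ offset, so that $\opt\approx\sigma$; the lemma only constrains $\psi_{p_k}$ and says nothing about $R_{12}$ or $\psi_{p_k'}$. Since the definition in~\eqref{eq:aff-min} allows $\pi_a\in\{-\infty,\infty\}$, $\pi_{12}=\infty$ is not excluded, and then your construction produces no instance at all (the conclusion still holds because the ratio is in fact unbounded there, but your argument would need a separate case to show it). Moreover, the ``short extra argument'' you defer---keeping $p_k'$ off the bundling tail of a relaxed affine minimizer when $s_{p_k'}\approx\sigma/\lambda$ is small---is not short: at the point where this lemma is invoked, $\lambda$ is not yet bounded (Lemma~\ref{lemma:magic-affine minimizer} uses the present lemma to bound it), so you cannot simply take $\sigma$ large enough to guarantee $\sigma/\lambda$ clears the tail threshold. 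The dummy-task device exists precisely to avoid reasoning about the second coordinate of the slice; I would switch to it.
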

\begin{proof}
  First notice that $\gamma$ has small value, because by the premises of the lemma, $\psi_{p_k}(1)=\lambda\cdot 1-\gamma > 0$, which gives $\gamma\leq \lambda$.
  
  If $\lambda\geq 1$, then set $t_{p_k}=\lambda s_{p_k}-\gamma -
  \epsilon$, for some arbitrarily small $\epsilon>0$ and observe that
  the task is given to the 0-player. We argue that, since the
  contribution of the other tasks is limited and $\gamma$ is small,
  there exists a sufficiently large value\footnote{We cannot take
    arbitrarily large values for $s_{p_k}$, since we assume that the
    domain is bounded above by a very large constant $B$ (see
    Section~\ref{sec:char-all-truthf}).} for $s_{p_k}$, such that the
  approximation ratio is at least $\lambda-\epsilon$. The values of
  the other tasks contribute at most $(n-1)(\alpha+\delta')\leq n$ to
  the optimal cost. Task $p_k$ is assigned to the $0$-player, but the
  optimal allocation would be to give it to the $s$-player, with an
  approximation ratio almost $\lambda$.

  We can improve this to $1+\lambda$ by using the dummy tasks as
  follows. We change the $t$-value of dummy task $d_0$ to make it
  equal to $s_{p_k}$ and lower slightly the $t$-value of task
  $p_k$. Assuming that the allocation of the dummy task does not
  change, by weak monotonicity (Lemma~\ref{lemma:tool}), the
  allocation of the 0-player remains the same. The cost of the
  mechanism is at least $s_{p_k}+t_{p_k}\approx (1+\lambda)s_{p_k}$,
  while the optimal allocation is at most
  $s_{p_k}+n+\lambda\approx s_{p_k}$, 
  which gives an approximation ratio very close to $1+\lambda$; for
  sufficiently high $s_{p_k}$, at least $1+\lambda-\delta'$.

  The case of $\lambda < 1$ is similar, but we set $t_{p_k}=\lambda
  s_{p_k}-\gamma + \epsilon$, and we change the $s$-value of the dummy
  task $d_k$ to $t_{p_k}$. In this case, we get a lower bound for
  approximation ratio $1+1/\lambda-\delta'$.
\end{proof}
We assume that the mechanism has approximation ratio less than $\rho$, which means that if a $(p,p')$-slice mechanism is an affine minimizer or a relaxed affine minimizer, the coefficient $\lambda$ of its linear boundary is in the interval $[1/(\rho-1+\delta'), \rho-1+\delta']$. This guarantees that for some task $p_k$, if the $s_{p_k}$ changes by $\Delta s_{p_k}$, then the boundary for $t_{p_k}$ changes by at least $\Delta s_{p_k}/(\rho-1+\delta')$ and at most $(\rho-1+\delta')\Delta s_{p_k}$.  We will take advantage of this fact in Lemma~\ref{lemma:magic-affine minimizer}, where we will combine this with the second condition in the definition of potentially-good set of tasks. This second condition states that $P_{-k}$ is good for instances in which task $p_k$ has values $[t_{p_k}=\delta, \, s_{p_k}=q \delta/(2n)]$, for $q=0,1,\ldots,2n/\delta$. For successive values of $q$, $s_{p_k}$ changes by $\delta/(2n)$. The above lemma allows us to conclude that successive values of the boundary of $t_{p_k}$ change by at most $\rho \delta/(2n) \leq \delta/2$ and that one of these values is less than $\delta$ (see the proof of Lemma~\ref{lemma:magic-affine minimizer} for details). This goal lies behind the complicated definition of potentially-good set of tasks.

The following lemma says that a proper allocation of a single instance can be extended by weak monotonicity to a set of perturbations. It is used in two parts of the proof. First immediately below, to show that all tasks of a potentially-good but not good set is allocated to the $s$-players (Lemma~\ref{lemma:all-tasks-to-s}). And later on to deal with task independent or relaxed task independent slice mechanisms (Lemma~\ref{lemma:magic-task-independent}).
 
\begin{lemma} \label{lemma:new-witness} Let $T$ be a standard instance for a regular set of tasks $P=(p_1,\ldots,p_k)$. Let $T'=(t',s)$ be an instance with $t_{p_i}'\in(\alpha,\alpha+\beta)$, $i\in [k]$, which agrees with $T$ on the remaining tasks. If all tasks in $P$ are allocated to the 0-player in the allocation of $T'$, then there exists $V$ such that $P$ is good for $T$ with witness $V$.
\end{lemma}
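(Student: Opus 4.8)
The plan is to produce the witness $V$ directly from the given instance $T'$ and then transfer the allocation of $T'$ to every perturbation of $\hat T(P)$ via the monotonicity tool, Lemma~\ref{lemma:tool}.

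First I would choose the intervals. For each $i\in[k]$ the hypothesis gives $t'_{p_i}-\alpha\in(0,\beta)$, so I would set $\theta_{p_i}:=\tfrac12(t'_{p_i}-\alpha)\in(0,\beta)$ and take $V$ to be the vector with $V_{p_i}=(\alpha,\alpha+\theta_{p_i})$. Since $\hat t_{p_i}=\alpha$, this is a legal perturbation vector for $\hat T(P)$, and every perturbed value satisfies $t''_{p_i}<\alpha+\theta_{p_i}=\tfrac12(\alpha+t'_{p_i})<t'_{p_i}$; the sole purpose of the factor $\tfrac12$ is to keep the whole interval strictly below $t'_{p_i}$.

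Next I would record how a generic $V$-perturbation $T''=(t'',s)$ of $\hat T(P)$ sits relative to $T'$. Because $T$ is standard for $P$ we have $s_{p_i}=1$, so both $T'$ and $\hat T(P)$ are obtained from $T$ by altering only the $0$-player's values on the tasks of $P$, and perturbing $\hat T(P)$ again touches only those coordinates. Hence $T''$ is obtained from $T'$ purely by strictly decreasing the $0$-player's values on the tasks of $P$, leaving every other entry — in particular the $s$-values and the $\A$-entries — unchanged. Now I would apply Lemma~\ref{lemma:tool} to the mechanism with player $0$, with $S=P$ (allocated to the $0$-player in $T'$ by hypothesis) and $S'=\emptyset$: the allocation of the $0$-player on $S\cup S'=P$ coincides for $T'$ and $T''$, so the $0$-player receives every task of $P$ in $T''$. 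As $T''$ was an arbitrary $V$-perturbation of $\hat T(P)$, this shows $V$ is a witness of goodness of $P$ for $T$.

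There is no genuinely hard step. The only points that need care are the bookkeeping showing that $T'$ and $T''$ differ exclusively in the $0$-player's coordinates on $P$ (so that Lemma~\ref{lemma:tool} applies with $S'=\emptyset$ and the conclusion is about the whole of $P$), and the choice of $\theta_{p_i}$ keeping the entire perturbation interval strictly below $t'_{p_i}$, which is what makes the monotonicity argument work uniformly over the interval rather than merely at the single point $T'$.
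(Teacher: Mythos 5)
Your proof is correct and matches the paper's argument: both define $V$ as the open intervals from $\alpha$ up to (something at most) $t'_{p_i}$ and then invoke Lemma~\ref{lemma:tool} with $S=P$, $S'=\emptyset$ to transfer the allocation from $T'$ to every $V$-perturbation of $\hat T(P)$. The only cosmetic difference is your factor of $\tfrac12$ in choosing $\theta_{p_i}$, which is unnecessary since the paper simply takes $V_{p_i}=(\alpha,t'_{p_i})$ and the open interval already stays strictly below $t'_{p_i}$.
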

\begin{proof}
  Define $V$ such that $V_{p_i}=(\alpha, t'_{p_i})$ for each task $p_i\in P$. By weak monotonicity (Lemma~\ref{lemma:tool}), the 0-player is allocated all tasks in $P$ for every instance in the set of $V$-perturbations of $\hat T(P)$. Therefore $P$ is good for $T$ with witness $V$.
\end{proof}

The next lemma shows that all tasks of a potentially-good but not good set $P$ of tasks must be allocated to the $s$-players. Furthermore, $P$ is a minimal or critical set in the sense that if we lower the $t$-value of one of its tasks, the allocation for all of them changes. The first part of the definition of potentially-good sets of tasks is designed to achieve exactly this goal.

\begin{lemma} \label{lemma:all-tasks-to-s} Suppose that $P=(p_1,\ldots,p_k)$ is a potentially-good set of tasks for $T$ with witness $V.$ If $P$ is not good for $T$ and there exists no bad task, then for every instance of the set of $V$-perturbations of $\hat T(P)$, all tasks in $P$ are allocated to the $s$-players. Furthermore, for each $i\in [k]$, if we change the $t$-value of only task $p_i$ to $\beta$, all tasks in $P$ are allocated to the $0$-player.
\end{lemma}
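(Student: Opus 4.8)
The whole lemma is driven by one bookkeeping fact together with repeated use of weak monotonicity. The fact is: if $T'$ is any instance in the set of $V$-perturbations of $\hat T(P)$ and we replace its value $t'_{p_i}$ by $\beta$, then the resulting instance $T'_{(i)}$ lies in the set of $V^i$-perturbations of $\hat T(P_{-i})$. Indeed, $\hat T(P)$ and $\hat T(P_{-i})$ agree on all tasks of $P$ except $p_i$ and off the cluster of $p_i$; $\hat T(P_{-i})$ assigns $p_i$ exactly the value $[t_{p_i}=\beta,\ s_{p_i}=1]$ it has in the standard instance $T$, which is what we have just set; and, by the very definition of the witness of potential goodness, $V_{p_l}\subseteq V^i_{p_l}$ for every $l\ne i$. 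Since $P_{-i}$ is good for $T$ with witness $V^i$, the $0$-player is therefore allocated all of $P_{-i}$ in $T'_{(i)}$, which already gives the first half of the ``furthermore''. I would also record a companion observation: starting from a $V^i$-perturbation of $\hat T(P_{-i})$ and lowering the $t$-values of all of $P_{-i}$ down to $\beta$ produces precisely the standard instance $T$; since lowering the $0$-player's own costs on tasks it already holds cannot dislodge them (weak monotonicity), the $0$-player gets all of $P_{-i}$ in $T$, and letting $i$ range over $[k]$ (using $k\ge 2$) the $0$-player gets \emph{all} of $P$ in the standard instance $T$.

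First I would prove the main assertion: no task of $P$ goes to the $0$-player at any $V$-perturbation of $\hat T(P)$. Suppose instead that $p_i\in P$ is allocated to the $0$-player at some $V$-perturbation $T'$. Passing to $T'_{(i)}$ as above puts all of $P_{-i}$ with the $0$-player; and since the $0$-player held $p_i$ at the \emph{larger} value $t'_{p_i}>\alpha>\beta$, weak monotonicity restricted to task $p_i$ (Lemma~\ref{lemma:tool}) keeps $p_i$ with the $0$-player at the value $\beta$ as well. Hence the $0$-player holds all of $P$ in $T'_{(i)}$. If $T'$ itself allocated all of $P$ to the $0$-player, Lemma~\ref{lemma:new-witness} would produce a witness of goodness of $P$, a contradiction; so $T'$ assigns a proper nonempty subset of $P$ to the $0$-player. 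It remains to raise $t_{p_i}$ back into $(\alpha,\alpha+\beta)$ while keeping the whole of $P$ with the $0$-player, so that Lemma~\ref{lemma:new-witness} applies to the resulting instance: lowering the other tasks' $t$-values only helps the $0$-player retain them, and $p_i$ is again governed by Lemma~\ref{lemma:tool}, but one must rule out that increasing $t_{p_i}$ dislodges some $p_l$ ($l\ne i$) before $t_{p_i}$ re-enters $(\alpha,\alpha+\beta)$. This is the delicate core of the proof; I expect it to require examining the cross-cluster $(p_i,p_l)$-slices and applying weak monotonicity in both directions, and it is the step I expect to be the main obstacle.

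Finally, for the ``furthermore'' assertion I would fix $i\in[k]$ and a $V$-perturbation $T'$, pass to $T'_{(i)}$, and use that (by the first paragraph) all of $P_{-i}$ is already with the $0$-player there; what remains is to see that $p_i$ too goes to the $0$-player. If it did not, the $0$-player's boundary value $\psi_{p_i}$ for task $p_i$ in the instance $T'_{(i)}$ would be at most $\beta$, so the $0$-player would miss $p_i$ at every $t_{p_i}\ge\beta$; on the other hand, lowering the $P_{-i}$-values of $T'_{(i)}$ down to $\beta$ lands us at the standard instance $T$, where the $0$-player holds all of $P$, hence $p_i$. Reconciling these two facts along the path that lowers the $P_{-i}$-values (or, in the borderline case, lowering $t_{p_i}$ slightly and invoking Lemma~\ref{lemma:new-witness} on $P$, boosting the ratio past $\rho$ with the dummy task $d_i$ exactly as in the proof of Lemma~\ref{lemma:2x2-aff-min}) yields the contradiction. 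Throughout, the only external inputs are weak monotonicity (Lemma~\ref{lemma:tool}), Lemma~\ref{lemma:new-witness}, and the identification of the modified instances as perturbations of the $\hat T(\cdot)$'s of the relevant $(k-1)$-element subsets of $P$.
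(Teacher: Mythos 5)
The gap you flag in your second paragraph is real and your proposed route does not close it. Having passed from $T'$ to $T'_{(i)}$ (where all of $P$ is with the $0$-player), you want to raise $t_{p_i}$ back into $(\alpha,\alpha+\beta)$ while retaining all of $P$, but weak monotonicity applied to the $0$-player gives you no control over which \emph{other} tasks it keeps when a value is \emph{increased}; raising $t_{p_i}$ back to $t'_{p_i}$ simply returns you to $T'$, where by hypothesis the $0$-player's share $D$ of $P$ is a proper nonempty subset. The paper avoids this entirely: it never tries to ``repair'' $T'$. Instead, with $D$ denoting the $0$-player's share of $P$ at $T'$ and $\emptyset\subsetneq D\subsetneq P$ assumed, it picks \emph{any} $p_r\in D$ and perturbs $T'$ once, in both directions simultaneously: set $t_{p_r}=\beta$, decrease slightly the remaining tasks of $D$, and increase slightly the tasks of $P\setminus D$ (all changes staying inside $V$, hence inside $V^r$). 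Lemma~\ref{lemma:tool} applied to the $0$-player with $S=D$ and $S'=P\setminus D$ pins down the allocation on \emph{all} of $P$: the $0$-player still misses $P\setminus D\neq\emptyset$. But the resulting instance is a $V^r$-perturbation of $\hat T(P_{-r})$, so goodness of $P_{-r}$ forces the $0$-player to take all of $P_{-r}\supseteq P\setminus D$ --- contradiction. The freedom to choose $p_r$ and to move $t$-values in opposite directions according to whether the $0$-player holds each task is the missing idea.

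Your ``furthermore'' sketch has a similar problem. Lowering the $P_{-i}$ values of $T'_{(i)}$ down to $\beta$ keeps $P_{-i}$ with the $0$-player (correct), but Lemma~\ref{lemma:tool} with $S=P_{-i}$, $S'=\emptyset$ says nothing about task $p_i$, which is in neither set; there is no contradiction between ``$p_i$ not with the $0$-player at $T'_{(i)}$'' and ``$p_i$ with the $0$-player at $T$'' --- the allocation of $p_i$ is simply free to flip along that path. (Your companion observation that the $0$-player gets all of $P$ at the standard instance $T$ is correct but does not feed the argument.) The paper's move is again to perturb in both directions at once: lower all of $P_{-i}$ to $\beta$ \emph{and} raise $t_{p_i}$ to $\alpha+\epsilon$. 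Now $S=P_{-i}$ and $S'=\{p_i\}$, so Lemma~\ref{lemma:tool} fixes the allocation of the entire $P$: the $0$-player still misses $p_i$. Since every other cluster is trivial and only $p_i$ is nontrivial, this exhibits, for each $\epsilon\in(0,\beta)$, a perturbation of $\hat T(\{p_i\})$ in which $p_i$ goes to the $s$-player, which is exactly the definition of $\{p_i\}$ being bad --- contradicting Assumption~\ref{ass1}. No appeal to dummy tasks or to Lemma~\ref{lemma:2x2-aff-min} is needed here.
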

\begin{proof}
  Fix an instance $T'$ in the set of $V$-perturbations of $\hat T(P)$,
  and let $D$ be the subset of $P$ that is allocated to the
  0-player. We first show that we cannot have $\emptyset\subsetneq
  D\subsetneq P$.

  Fix a task $p_r\in D$ and change the $t$-values of $T'$ as follows:
  \begin{itemize}
  \item task $p_r$ is set to $\beta$
  \item all other tasks in $D$ are slightly decreased\footnote{The changes in the instance are such that it remains in the set of $V$-perturbations of $\hat T(P_{-r})$.}
  \item tasks in $P\setminus D$ are slightly increased
  \end{itemize}
  By weak monotonicity (Lemma~\ref{lemma:tool}), the allocation of $P$ remains the same. In particular, all tasks in $P\setminus D\neq\emptyset$ are allocated to the $s$-players. But this contradicts the assumption that $P_{-r}$ is good.

  So it must be that the mechanism for $T'$ allocates either all or none of the tasks in $P$ to the 0-player. In the first case, we apply Lemma~\ref{lemma:new-witness}, to get that $P$ is a good set of tasks for $T$ with some witness $V'$, which contradicts the premises of the lemma. So it must be that all tasks are allocated to the $s$-players.

  Suppose now that we change the $t$-value of some $p_i$ to $\beta$. Since $P_{-i}$ is good, all tasks in $P_{-i}$ are allocated to the 0-player. We now show that task $p_i$ is also allocated to the 0-player. For if not, we can lower all $t$-values in $P_{-i}$ to $\beta$, and increase the $t$-value of $p_i$ to $\alpha+\epsilon$, for  $\epsilon\in(0,\beta)$. By weak monotonicity (Lemma~\ref{lemma:tool}), the allocation of the tasks in $P$ remains the same. But then $\{p_i\}$ is a bad task, a contradiction.
\end{proof}

\subsection{Almost all sets are good}
\label{sec:slice}
We now consider a potentially-good but not good set of tasks
$P=(p_1,\ldots,p_k)$ from clusters $(Q_1,\ldots,Q_k)$.  Let $p_k'$ be
a sibling of $p_k$, i.e., another task of cluster $Q_k$.  The main
result of this section establishes that {\em almost all other sets
  $(P_{-k},p_k')$ are good}; we show via a probabilistic argument that
$(P_{-k},p_k')$ is good with probability at least $1-2n^3/\ell$, where
$p_k'$ is a random sibling of $p_k$ (Lemma~\ref{lemma:many-good-sets}).

In order to show this result, we consider the $(p_k,p_k')$-slice
mechanism, utilizing the $2\times 2$ characterization that we provide
in Section~\ref{sec:char-all-truthf}, considering all possible cases.
Specifically, in Section~\ref{sec:affine-or-relaxed} we exclude
(almost surely) affine minimizers (Lemma~\ref{lemma:prob-of-linear}),
in Section~\ref{sec:relax-affine-minim} we bound from above the
probability that the mechanism is a relaxed affine minimizer by
$2n^2/\ell$ (Lemma~\ref{lemma:prob-of-relaxed-aff-min}) and in Section
\ref{sec:1-dim-constant} we exclude 1-dimensional and constant
mechanisms (Lemma~\ref{lemma:prob-of-1dim}). Finally in Section
\ref{sec:indep-or-relax} we show that if the mechanism is
task independent or relaxed task independent for each of $k$
appropriately selected random instances from the witness, then
$(P_{-k},p_k')$ is good (Lemma~\ref{lemma:magic-task-independent}).

We conclude that for a random sibling $p_k'$, the mechanism must
  be either task independent or relaxed task independent for all
  these $k$ instances with probability at least $1-k\, 2n^2/\ell$;
  therefore $(P_{-k},p_k')$ is good with probability at least
  $1-2n^3/\ell$.  (Lemma~\ref{lemma:many-good-sets}).

\subsubsection{Affine minimizers}
\label{sec:affine-or-relaxed}

The following lemma (Lemma~\ref{lemma:magic-affine minimizer}) deals
with the case that the $(p_k,p_k')$-slice mechanism is an affine
minimizer or in the linear part of a relaxed affine minimizer. It is
essential in showing that {\em very few} slice-mechanisms are of this
type. We also provide a stronger version of this statement in
Lemma~\ref{lemma:magic-affine minimizer - strong}, which states that
linear slice mechanisms do not exist at all. We include both
Lemma~\ref{lemma:magic-affine minimizer} and its stronger version,
because the proof of Lemma~\ref{lemma:magic-affine minimizer} is
easier to follow, especially when its main argument is illustrated by
a 2-dimensional figure, while the stronger version provides an
interesting algebraic generalization of the proof. 

\begin{lemma} \label{lemma:magic-affine minimizer} Assume that there is no bad task and that the approximation ratio is at most $\rho$. Suppose that $P=(p_1,\ldots,p_k)$ is a potentially-good but not good set of tasks for $T$ with witness $V$. There are no instances $T_1$, $T_2$ such that
  \begin{itemize}
  \item both $T_1$ and $T_2$ belong to the set of $V$-perturbations of $\hat T(P)$
  \item $T_1$ and $T_2$ differ only in the value of $t_{p_i}$, for some fixed $i\in [k-1]$
  \item for both $T_1$ and $T_2$, the boundary function $\psi_{p_k}$ is truncated linear in $s_{p_k}$. That is,
    \begin{align*}
      \psi_{p_k}(t_{-p_k},s)=\max(0, \lambda(
      t_{-p_k},s_{-p_k}) s_{p_k} - \gamma(t_{-p_k},s_{-p_k})),
    \end{align*}
    for both $T_1$ and $T_2$.
  \end{itemize}
\end{lemma}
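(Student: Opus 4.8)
The plan is to assume, for contradiction, that such $T_1,T_2$ exist; write $t^{(1)},t^{(2)}$ for the respective values of $t_{p_i}$ and assume $t^{(1)}<t^{(2)}$. Since $T_1,T_2$ agree everywhere else, they share the value $t_{p_k}\in(\alpha,\alpha+\beta)$ and have $s_{p_j}=1$ for all $p_j\in P$. By Lemma~\ref{lemma:all-tasks-to-s}, all tasks of $P$ go to the $s$-players in both instances; since $\psi_{p_k}$ does not depend on $t_{p_k}$ and a $V$-perturbation can push $t_{p_k}$ arbitrarily close to $\alpha$, this forces $\psi^{T_r}_{p_k}(1)\le\alpha$ for $r=1,2$. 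As the approximation ratio is at most $\rho$, Lemma~\ref{lemma:2x2-aff-min} (applied when $\psi^{T_r}_{p_k}(1)>0$) gives a slope $\lambda_r\in[1/\rho',\rho']$ with $\rho'=\rho-1+\delta'$; together with $\psi^{T_r}_{p_k}(1)=\lambda_r-\gamma_r\le\alpha$ and the identity $1/\rho'=\alpha+(n-1)\delta'$ this yields $\gamma_r\ge(n-1)\delta'>0$, hence $\psi^{T_r}_{p_k}(0)=0$. The degenerate case $\psi^{T_r}_{p_k}(1)=0$ (where $\psi^{T_r}_{p_k}\equiv 0$ on $[0,1]$) is disposed of separately.

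The first substantive step is to \emph{shrink} $p_k$ using the second bullet of Definition~\ref{def:potentially-good}: $P_{-k}$ is good, with a common witness $V^k$, for every instance $T^{(q)}$ obtained from $T$ by setting $p_k$ to $[\,t_{p_k}=\delta,\ s_{p_k}=q\delta/(2n)\,]$, $q=0,\dots,2n/\delta$. Since $V$ is contained, off the coordinate $p_k$, in $V^k$, the instance $T_r^{(q)}$ obtained from $T_r$ by putting $p_k$ at $[\delta,q\delta/(2n)]$ is a $V^k$-perturbation of $\widehat{T^{(q)}}(P_{-k})$, so all of $P_{-k}$ goes to player $0$ in $T_r^{(q)}$ for every $q$. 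Moving $p_k$ does not change the $(p_k,p_k')$-slice mechanism of $T_r$, so $p_k$ goes to player $0$ in $T_r^{(q)}$ exactly when $\delta<\psi^{T_r}_{p_k}(q\delta/(2n))=\max(0,\lambda_r q\delta/(2n)-\gamma_r)$. These numbers are non-decreasing in $q$, equal $0$ at $q=0$, and grow in steps at most $\lambda_r\delta/(2n)\le\rho'\delta/(2n)<\delta/2$; hence, if $\psi^{T_r}_{p_k}(1)>\delta$, there is an index $q_r$ with $\psi^{T_r}_{p_k}(q_r\delta/(2n))\in(\delta,3\delta/2)$, and for that $q_r$ the instance $T_r^\star:=T_r^{(q_r)}$ has \emph{all} of $P$ allocated to player $0$, cluster $Q_k$ trivial, and $p_k$ poised right at its (small) boundary. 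If instead $\psi^{T_r}_{p_k}(1)\le\delta$ for some $r$, then $p_k$ never reaches player $0$ along the $q$-sequence, an easier situation that is treated on its own.

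The last step passes to the $(p_i,p_k)$-slice. The reason two instances are used is that the hypothesis ``$\psi_{p_k}$ truncated linear'' at \emph{both} values $t^{(1)},t^{(2)}$ of $t_{p_i}$ pins the $p_k$-boundary down at the two ends of the segment $[t^{(1)},t^{(2)}]$; the aim is to find, near the near-critical indices $q_1,q_2$ of the shrinking step, a common level $q$ for which $T_1^{(q)}$ and $T_2^{(q)}$ (which differ only in $t_{p_i}$, hence lie in the \emph{same} $(p_i,p_k)$-slice) have $P_{-k}$ at player $0$ while the $p_k$-boundary genuinely moves as $t_{p_i}$ runs from $t^{(1)}$ to $t^{(2)}$. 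Exploiting the tight coupling between the $p_i$- and $p_k$-boundaries in the planar weakly-monotone subdivision $R_\emptyset/R_{p_i}/R_{p_k}/R_{p_ip_k}$ (as in Figure~\ref{fig:magic-linear}) and applying weak monotonicity (Lemma~\ref{lemma:tool}) along that boundary, one builds an instance in which $p_i$ goes to player $0$ while some other task $p_m\in P$ goes to its $s$-player; this instance is a $V^m$-perturbation of the appropriate $\widehat{T^{(q)}}(P_{-m})$, contradicting that $P_{-m}$ is good. In some branches — depending on whether the subdivision is quasi-bundling, quasi-flipping, or crossing, and on whether $\lambda_r\ge 1$ or $\lambda_r<1$ — the same construction instead produces a bad task or forces the approximation ratio above $\rho$, each contradicting Assumption~\ref{ass1}.

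I expect this last step — converting the coupling of the $p_i$- and $p_k$-boundaries into an explicit contradiction — to be the main obstacle. It needs a careful case analysis of the three possible shapes of the $(p_i,p_k)$-slice, honest treatment of the $\max(0,\cdot)$ truncation of $\psi_{p_k}$, and bookkeeping that every auxiliary instance stays inside the perturbation boxes and inside $[0,B)$ and that all invoked slice mechanisms remain of affine type along the whole $t_{p_i}$-segment — which is exactly why a single linear slice is not enough. By comparison, the slope bounds, the $p_k$-shrinking, and the degenerate sub-cases are routine.
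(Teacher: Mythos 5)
Your scaffolding matches the paper's — Lemma~\ref{lemma:all-tasks-to-s} to get all of $P$ to the $s$-players, Lemma~\ref{lemma:2x2-aff-min} for the slope bounds, and the second bullet of Definition~\ref{def:potentially-good} to shrink task $p_k$ — but the proposal is missing the decisive step and then aims the shrinking argument in the wrong direction.

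The decisive step is to prove $\lambda(\alpha')=\lambda(\alpha'')$, i.e., that the two truncated-linear boundaries $\psi_{p_k}(\alpha',\cdot)$ and $\psi_{p_k}(\alpha'',\cdot)$ share the same slope. The paper gets this from a short geometric argument: since $P_{-j}$ is good for every $j$, in the $(p_i,p_k)$-slice there is a \emph{bundling} boundary between $p_i$ and $p_k$ over the band $t_{p_i}\in(\alpha',\alpha'')$, so $\psi_{p_k}(\alpha'',1)-\psi_{p_k}(\alpha',1)=-(\alpha''-\alpha')$; because $\psi_{p_k}(\cdot,s_{p_k})$ as a function of $t_{p_i}$ always has slope in $\{-1,0,1\}$, any mismatch $\lambda(\alpha')\neq\lambda(\alpha'')$ would, upon moving $s_{p_k}$ to $1\pm\theta$, force a slope outside $\{-1,0,1\}$ — impossible. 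This equality is exactly what makes the bundling boundary translate \emph{rectilinearly} as $s_{p_k}$ drops, which is what lets one conclude that the point $(\delta,\alpha'')$ stays on the $R_\emptyset$ side after the shrinking step. Your proposal never establishes this equality, and the ``tight coupling'' you invoke in the last paragraph cannot be controlled without it; the quasi-bundling/quasi-flipping/crossing case analysis you anticipate is not what is needed and would not substitute for it.

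Second, the shrinking step goes the wrong way. You look for $q_r$ with $\psi_{p_k}(q_r\delta/(2n))\in(\delta,3\delta/2)$, so that at $t_{p_k}=\delta$ task $p_k$ lands with the $0$-player. That is the allocation goodness of $P_{-k}$ \emph{predicts}, so it yields no contradiction; moreover, the assumption $\psi_{p_k}(1)>\delta$ is not available (only $\psi_{p_k}(1)>\beta$ is). The paper instead finds $q^*$ with $\psi_{p_k}(\alpha'',q^*\delta/(2n))\in(\beta,\delta)$, so that at $t_{p_k}=\delta>\psi_{p_k}$ task $p_k$ goes to the $s$-player; because the bundling boundary has moved rectilinearly, the point $(\delta,\alpha'')$ is still above it and $p_i$ \emph{also} goes to the $s$-player. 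Since the resulting instance is a $V^k$-perturbation of $\widehat{T^{(q^*)}}(P_{-k})$, this contradicts the goodness of $P_{-k}$ for $T^{(q^*)}$. The final paragraph of your proposal — which you yourself flag as the main obstacle — is indeed where the proof lives, but the construction is only gestured at and, absent the slope equality, cannot be completed as stated.
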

\begin{proof}
  Towards a contradiction, suppose that such instances $T_1$ and $T_2$ exist. Let $\alpha'<\alpha''$ be the two different values of $t_{p_i}$ for $T_1,T_2$, and $\alpha_k$ be the value of $t_{p_k}$. Focus on the $(p_i, p_k)$-slice mechanism and observe that there is a bundling boundary between tasks $p_i,p_k$ when $t_{p_i}\in (\alpha',\alpha'')$ (see Figure~\ref{fig:magic-linear}). This is because the allocation for both $(t_{p_i},t_{p_k})=(\alpha',\alpha_k)$ and $(t_{p_i},t_{p_k})=(\alpha'',\alpha_k)$ gives both $p_i,p_k$ to the $s$-player, while the allocation for both $(t_{p_i},t_{p_k})=(\alpha',\beta)$, $(t_{p_i},t_{p_k})=(\alpha'',\beta)$ gives both $p_i,p_k$ to the $0$-player (Lemma~\ref{lemma:all-tasks-to-s}).
  
  Let us focus on the dependency of boundary $\psi_{p_k}$ of task $p_k$ on $t_{p_i}$, and $s_{p_k}$, since every other value is fixed. For simplicity we write $\psi_{p_k}(t_{p_i},s_{p_k})$ for this boundary, ignoring the other fixed values. By the premises of the lemma, this boundary is a truncated linear function for both values $\alpha',\alpha''$ of $t_{p_i}$:
  \begin{align*}
    \psi_{p_k}(\alpha',s_{p_k}) &= \max(0, \lambda(\alpha') s_{p_k} - \gamma(\alpha'))\\
    \psi_{p_k}(\alpha'',s_{p_k})&= \max(0, \lambda(\alpha'') s_{p_k} - \gamma(\alpha'')),
  \end{align*}
  where the $\lambda$'s and $\gamma$'s are independent of the values of task $p_k$. We now argue that $\lambda(\alpha')=\lambda(\alpha'')$. This implies that when we decrease $s_{p_k}$, the boundary of $\psi_{p_k}(t_{p_i},s_{p_k})$, as a function of $t_{p_i}$, simply slides left rectilinearly; in Figure~\ref{fig:magic-linear}, the blue sloped line in the interval $[\alpha',\alpha'']$ moves left and remains sloped (until it reaches the $t_{p_i}$ axis).

  This is a simple, essentially geometric, argument; Lemma 4 in \cite{CKK20} has a similar argument. First observe that $\psi_{p_k}(\alpha'',1)-\psi_{p_k}(\alpha',1)=-(\alpha''-\alpha')$, since for $s_{p_k}=1$ we have argued that there is a bundling boundary between tasks $p_i,p_k$ when $t_{p_i}\in (\alpha',\alpha'')$. Suppose first that $\lambda(\alpha')>\lambda(\alpha'')$, and observe that for every fixed $s_{p_k}$, $\psi_{p_k}(t_{p_i},s_{p_k})$ is piecewise linear with specific slopes (Figure~\ref{fig:magic-linear}); in particular the derivative $\partial \psi_{p_k}(t_{p_i},s_{p_k})/\partial t_{p_i}$ of the linear parts is in $\{-1,0,1\}$. When $s_{p_k}=1+\theta$, for any positive $\theta$, we get
  \begin{align*} \psi_{p_k}(\alpha'',1+\theta)-\psi_{p_k}(\alpha',1+\theta) &=\psi_{p_k}(\alpha'',1)-\psi_{p_k}(\alpha',1)+(\lambda(\alpha'') - \lambda(\alpha'))\theta<-(\alpha''-\alpha'),
  \end{align*}
  which is impossible because the derivative $\partial \psi_{p_k}(t_{p_i},s_{p_k})/\partial t_{p_i}$ cannot be less than $-1$. The case of $\lambda(\alpha')<\lambda(\alpha'')$ is similar, but we use $s_{p_k}=1-\theta$.

  For the rest of the proof, we simply write $\lambda$ instead of
  $\lambda(\alpha'')$, and $\gamma$ instead of $\gamma(\alpha'')$ and focus on
  $t_{p_i}=\alpha''$. We will show that when we lower $s_{p_k}$ to some
  $\epsilon>0$, task $p_k$ is allocated to the $s$-player for any value
  $t_{p_k}\geq \beta$.  We use the fact that $\lambda\geq 1/(\rho-1+\delta')
  \geq \alpha$ (Lemma~\ref{lemma:2x2-aff-min}), which intuitively shows that the
  boundary moves at relative speed at least $\alpha$ as we move $s_{p_k}$ to
  0. Since task $p_k$ is given to player $k$ for $[t_{p_k}=\alpha_k,\,
  s_{p_k}=1]$, we must have $\psi_{p_k}(\alpha'',1)\leq \alpha_{k}$. This gives
  $\lambda\cdot 1 - \gamma\leq \alpha_k$, and we get
  $\psi_{p_k}(\alpha'',\epsilon)\leq \max(0,\lambda\cdot\epsilon -\gamma)\leq
  \max(0, \lambda\cdot\epsilon+\alpha_k-\lambda)\leq \max(0,
  \lambda\cdot\epsilon+\alpha_k- \alpha) < \max(0, \beta)= \beta$, when
  $\epsilon$ is sufficiently small.
  \begin{figure}
    \centering
    \begin{tikzpicture}

      \draw[->] (0,0) -- (6,0) node[anchor=north] {$t_{p_k}$};
      \draw[->] (0,0) -- (0,6) node[anchor=east] {$t_{p_i}$};
      \draw[very thick, blue] (0,3.5) -- (2,3.5) -- (4,1.5) -- (6,1.5) ;

      \draw[ blue, dashed] (0,2.5) node[anchor=east] {$\alpha'$} -- (6,2.5) ;

      \draw[ blue, dashed] (0,3.3) node[anchor=east] {$\alpha''$} -- (6,3.3) ;

      \draw[ blue, dashed] (3.4,0) node[anchor=north] {$\alpha_k$} -- (3.4,6) ; \draw[ blue, dashed] (0.7,0) node[anchor=north] {$\delta$} -- (0.7,6) ;

      \draw[very thick, blue] (4,0) -- (4, 1.58) -- (2, 3.58) -- (2, 6) node[anchor=south] {$\psi_{p_k}(t_{p_i}, 1)$} ;

      \draw[ultra thick,black] (2.2,3.3) circle (1.mm); \draw[ultra thick,black] (3,2.5) circle (1.mm);

      \draw[ultra thick, red] (1.3, 2.52) -- (0.5, 3.32) ; \draw[very thick, red, dashed] (1.6, 2.52) -- (0.8, 3.32) ; \draw[very thick, red] (2.0, 2.52) node[anchor=north] {$\psi_{p_k}(t_{p_i}, q^*\delta/(2n)$} ; \draw[very thick, red,dashed] (1.8, 2.52) -- (1, 3.32) ; \draw[very thick, red,dashed] (2, 2.52) -- (1.2, 3.32) ; \draw[very thick, red,dashed] (2.3, 2.52) -- (1.5, 3.32) ; \draw[very thick, red,dashed] (2.5, 2.52) -- (1.7, 3.32) ; \draw[very thick, red,dashed] (2.7, 2.52) -- (1.9, 3.32) ;

    \end{tikzpicture}
    \caption{\small Illustration for the proof of Lemma~\ref{lemma:magic-affine minimizer}. The blue figure shows the allocation of the $0$-player for tasks $p_{i},p_{k}$, when the values of all tasks (except for $p_i,p_k$) agree with the values in instances $T_1,T_2$. The black circles show the bundling boundary points $\psi_{p_k}(\alpha',1)$, $\psi_{p_k}(\alpha'',1)$. The red part of the figure demonstrates that the boundary points $\psi_{p_k}(\alpha',s_{p_k})$, $\psi_{p_k}(\alpha'',s_{p_k})$ (and hence all the boundary points in the band $t_{p_i}\in(\alpha',\alpha'')$) must move rectilinearly when we decrease $s_{p_k}$. The thick red boundary line corresponds to $s_{p_k}=q^*\delta/(2n)$, for which the boundary point at $t_{p_i}=\alpha''$ has value smaller than $\delta$. Since the point $(\delta, \alpha'')$ is above the thick red boundary, both task $p_i$ and $p_k$ are given to the $s$-players, for any value $t_{p_k}\geq \delta$, when $s_{p_k}=q^*\delta/(2n)$. The fact that $p_i$ is not given to the 0-player for these values contradicts the goodness of $P_{-k}$ for these values.}
    \label{fig:magic-linear}
  \end{figure}
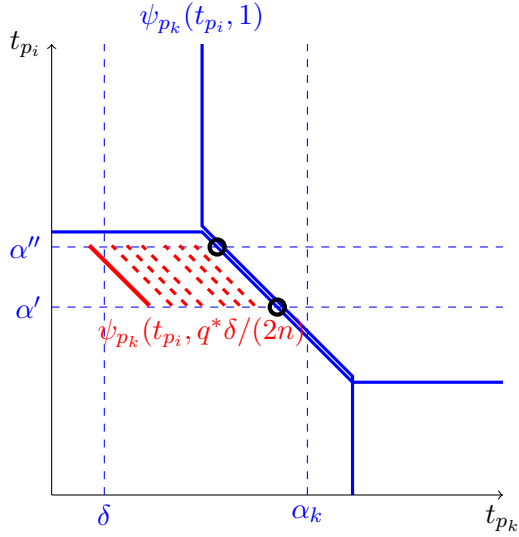
  We now consider the sequence of values $\psi_{p_k}(\alpha'',q \delta/(2n))$, for $q=0,1,\ldots,2n/\delta$. By Lemma~\ref{lemma:2x2-aff-min}, $\lambda\leq \rho-1+\delta'$, so successive values in this sequence differ by at most $\lambda \delta/(2n)\leq (\rho-1+\delta') \delta/(2n)\leq \delta/2$. The previous paragraph established that one of these values is at most $\beta$. Since the maximum value is $\psi_{p_k}(\alpha'',1)\geq \beta$, one of these values, say value $\psi_{p_k}(\alpha'',q^* \delta/(2n))$ is in the interval $(\beta,\delta)$. As a result, task $p_k$ for values $[t_{p_k}=\delta'', \, s_{p_k}=q^* \delta/(2n)]$ is given to the $k$-player, for every $\delta''\geq \delta$. Since this remains above a bundling boundary between $p_i$ and $p_k$, task $p_i$ is also given to the $k$-player (see Figure~\ref{fig:magic-linear} for an illustration). But this contradicts the goodness of $P_{-k}$ for $[t_{p_k}=\delta, \, s_{p_k}=q^* \delta/(2n)]$.
\end{proof}

The above lemma is crucial and its importance is captured by the next
lemma that essentially states that linear $2\times 2$ mechanisms do
not exist almost surely.

\begin{lemma} \label{lemma:prob-of-linear} Suppose that $P=(p_1,\ldots,p_k)$ is a potentially-good but not good set of tasks for $T$ with witness $V$. Take a random instance $T'$ from the set of $V$-perturbations of $\hat T(P)$. Fix all other values to $T'$ except for task $p_k$, and consider the event $\cal E$ that the boundary function $\psi_{p_k}(s_{p_k})$ is linear in $s_{p_k}$. The probability of event $\cal E$ is 0.
\end{lemma}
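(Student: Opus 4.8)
The plan is to obtain Lemma~\ref{lemma:prob-of-linear} as an almost immediate corollary of Lemma~\ref{lemma:magic-affine minimizer}, by a Fubini-type argument; essentially no new idea is needed, since all the substance sits in that lemma. First I would recall the shape of the sample space: the set of $V$-perturbations of $\hat T(P)$ is the product box $\prod_{i=1}^{k}V_{p_i}$ of open intervals, one in the $t$-coordinate of each task of $P$, with every other coordinate held fixed, and $T'$ is drawn uniformly from this box. Then I would record two easy preliminary observations. First, $k\geq 2$: a potentially-good but not good singleton $\{p_1\}$ would be, by Definition~\ref{def:goodness}, a bad task, which Assumption~\ref{ass1} excludes. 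Second, the event $\mathcal E$ does not depend on the coordinate $t'_{p_k}$ at all, because by Definition~\ref{def:boundary} the boundary $\psi_{p_k}$ is a function of $t_{-p_k}$ and of $s$ only; hence $\mathcal E$ is determined by $t'_{p_1},\dots,t'_{p_{k-1}}$ together with the coordinates that $T'$ holds fixed.

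The core step is the following claim, which is exactly where Lemma~\ref{lemma:magic-affine minimizer} enters: for any index $i\in[k-1]$ and any fixing of the perturbed coordinates $\{t'_{p_j}\colon j\in[k]\setminus\{i\}\}$ inside their respective intervals, at most one value of $t'_{p_i}\in V_{p_i}$ produces an instance on which $\mathcal E$ holds. For if two distinct values $\alpha'\neq\alpha''$ did, the resulting instances $T_1,T_2$ would lie in the $V$-perturbation box, would differ only in $t_{p_i}$ with $i\in[k-1]$, and on each of them $\psi_{p_k}(s_{p_k})$ would be linear; since $\psi_{p_k}$ is non-decreasing and non-negative, ``linear'' is here the special case $\gamma\leq 0$ of the truncated-linear form $\psi_{p_k}(s_{p_k})=\max(0,\lambda s_{p_k}-\gamma)$ appearing in Lemma~\ref{lemma:magic-affine minimizer}. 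But that lemma --- whose hypotheses (no bad task, approximation ratio at most $\rho$) hold by Assumption~\ref{ass1} --- forbids precisely such a pair $T_1,T_2$, proving the claim. Given the claim, I would finish by conditioning on the values of all perturbed coordinates except $t'_{p_i}$: the conditional probability of $\mathcal E$ is then $0$ (the set of $t'_{p_i}$ making $\mathcal E$ hold is a single point at most, inside an interval), and integrating these conditional probabilities over the remaining $k-1$ coordinates via Fubini's theorem yields $\Pr[\mathcal E]=0$.

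I do not expect any real obstacle within this lemma: the difficulty is entirely localized in Lemma~\ref{lemma:magic-affine minimizer}, which is used here as a black box, and it only remains to dispatch two bookkeeping points --- subsuming ``linear'' under the truncated-linear form, and checking $k\geq 2$ --- together with the standard measurability remark. For the latter I would note that, inside the box $\prod_{i=1}^{k}V_{p_i}$, the set $\mathcal E$ meets every line parallel to the $t_{p_i}$-axis in at most one point, so it lies in the graph of a partial function of the other $k-1$ coordinates and is therefore a Lebesgue null set; equivalently, each conditional event invoked above is at most a singleton and hence honestly measurable, which is all the probabilistic bookkeeping requires. The later quantitative strengthening of ``probability $0$'' into an explicit upper bound --- needed for the union bound in Section~\ref{sec:slice} --- will be carried out for the harder case of relaxed affine minimizers in Lemma~\ref{lemma:prob-of-relaxed-aff-min}.
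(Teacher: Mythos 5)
Your proposal is correct and takes essentially the same approach as the paper: use Lemma~\ref{lemma:magic-affine minimizer} to show that along each $t_{p_i}$-direction ($i\in[k-1]$), with the other perturbed coordinates fixed, event $\mathcal E$ can hold at no more than one point, then pass to a measure-zero conclusion. The paper phrases the final step as a volume comparison ($\vol_k(\hat V_{p_1})/\vol_k(\hat V)=0$), whereas you phrase it as a Fubini/conditioning argument; these are the same argument and your phrasing is if anything the cleaner of the two. Your side observations (that $k\geq 2$ since a potentially-good-but-not-good singleton would be a bad task, that $\mathcal E$ is independent of $t'_{p_k}$, and that ``linear'' is the $\gamma\leq 0$ case of truncated linear) are all correct bookkeeping that the paper leaves implicit.
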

\begin{proof}
  By Lemma~\ref{lemma:magic-affine minimizer}, there are no two instances $T_1$ and $T_2$ that differ in only one task of $P_{-k}$ for which event $\cal E$ happens. Using the notation $\hat V=\times_{j\in P} V_j$ and $\hat V_{p_1}=\times_{j\in P_{-1}} V_j$, the probability of event $\cal E$ is at most $\vol_k(\hat V_{p_1})/\vol_k(\hat V)=0$, where $\vol_k$ denotes the $k$-dimensional volume, because $\vol_k(\hat V)>0$ and $\vol_k(\hat V_{p_1})=0$ since $\hat V_{p_1}$ is a hyperrectangle of $k-1$ dimensions.
\end{proof}

\subsubsection*{A stronger version of Lemma~\ref{lemma:magic-affine minimizer}}
\label{sec:altern-proof-lemma}

Lemma~\ref{lemma:prob-of-linear} shows that affine minimizers, or more generally
mechanisms with linear boundaries, have probability 0. Can we exclude them
completely? Indeed we can, by strengthening Lemma~\ref{lemma:magic-affine
  minimizer} as follows. Instead of excluding linear mechanisms in instances
that differ in only one task, the stronger lemma excludes linear mechanisms in
two distinct instances $T_1$, $T_2$ that can differ in all tasks in $P$ and
$T_2\geq T_1$. This shows that there exists a perturbation with no affine
minimizers.

\begin{lemma} \label{lemma:magic-affine minimizer - strong} Assume that there is no bad task and that the approximation ratio is at most $\rho$. Suppose that $P=(p_1,\ldots,p_k)$ is a potentially-good but not good set of tasks for $T$ with witness $V$. There are no distinct instances $T_1$, $T_2$ such that
  \begin{itemize}
  \item both $T_1$ and $T_2$ belong to the set of $V$-perturbations of $\hat T(P)$
  \item the value of $t_{p_i}$ in $T_1$ is at most equal to the value of $t_{p_i}$ in $T_2$, for all $i\in [k-1]$, i.e., $T_1 \leq T_2$
  \item for both $T_1$ and $T_2$, the boundary function $\psi_{p_k}$ is linear in $s_{p_k}$. That is,
    \begin{align*}
      \psi_{p_k}(t_{-p_k},s)=\max(0, \lambda(t_{-p_k},s_{-p_k}) s_{p_k} - \gamma(t_{-p_k}, s_{-p_k})),
    \end{align*}
    for both $T_1$ and $T_2$.
  \end{itemize}
\end{lemma}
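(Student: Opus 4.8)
The plan is to run the proof of Lemma~\ref{lemma:magic-affine minimizer} with the single varying coordinate $t_{p_i}$ replaced by the diagonal direction $y-x$: the slope-$(-1)$ bundling boundary of the weak version then becomes a \emph{bundling hyperplane} with normal $(1,\dots,1)$, and the rectilinear sliding becomes a short algebraic identity between two affine functions. Suppose for contradiction such $T_1\le T_2$ exist; write $x,y\in\R_{>0}^{k-1}$ for the values of $(t_{p_1},\dots,t_{p_{k-1}})$ in $T_1$ and $T_2$ (all remaining values agree, as both are $V$-perturbations of $\hat T(P)$), set $v=y-x\ge 0$, $v\ne 0$ and $\Delta=\sum_{i<k}(y_i-x_i)>0$, and write the boundary for $t_{p_k}$ as $\psi^{T_1}_{p_k}(s)=\max(0,\lambda_1 s-\gamma_1)$, $\psi^{T_2}_{p_k}(s)=\max(0,\lambda_2 s-\gamma_2)$. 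Since dropping $t_{p_k}$ to $\beta$ hands all of $P$ to the $0$-player (Lemma~\ref{lemma:all-tasks-to-s}), both $\psi^{T_j}_{p_k}(1)>\beta$, so $\gamma_j<\lambda_j$; and since $\psi_{p_k}$ is linear with $\psi_{p_k}(1)>0$, Lemma~\ref{lemma:2x2-aff-min} and Assumption~\ref{ass1} give $\lambda_1,\lambda_2\in\big[\tfrac{1}{\rho-1+\delta'},\,\rho-1+\delta'\big]$.

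\emph{Bundling hyperplane.} First I would show that over the box $\prod_{i<k}[x_i,y_i]$ of values for $(t_{p_1},\dots,t_{p_{k-1}})$, with $s_{p_k}=1$ and everything else at the common value, the $0$-player's allocation of $P$ is all-or-nothing. Any such instance with $t_{p_k}$ at a perturbation value is a $V$-perturbation of $\hat T(P)$, so the $0$-player gets none of $P$ there (Lemma~\ref{lemma:all-tasks-to-s}), while lowering $t_{p_k}$ to $\beta$ gives it all of $P$; and if at some box point (with $t_{p_k}\ge\beta$) the $0$-player held $p_k$ but missed some $p_{i_0}\in P_{-k}$, applying Lemma~\ref{lemma:tool} with $S=\{p_k\}$, $S'=\{p_{i_0}\}$ and $t_{p_k}\downarrow\beta$ would contradict Lemma~\ref{lemma:all-tasks-to-s}; symmetrically, if it held some $p_{i_0}\in P_{-k}$ but missed $p_k$, applying Lemma~\ref{lemma:tool} with $S$ the set of $P_{-k}$-tasks it holds, $S'=\{p_k\}$ and $t_{p_k}$ raised to a perturbation value does the same. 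Hence the $0$-player gets all of $P$ exactly when $t_{p_k}<\psi_{p_k}(z,1)$ for one threshold, and weak monotonicity restricted to the $k$ tasks of $P$ separates ``all of $P$'' from its complement by a hyperplane with normal $(1,\dots,1)$, so $\psi_{p_k}(z,1)=c-\sum_{i<k}z_i$ on the box; in particular $\psi^{T_2}_{p_k}(1)-\psi^{T_1}_{p_k}(1)=-\Delta$.

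\emph{Extracting the coefficients.} Weak monotonicity restricted to the $k$ tasks of $P$, applied at an arbitrary $s_{p_k}=s$ between the instances with first $k-1$ coordinates $x$ and $y$ --- where, since each indicator difference on $P_{-k}$ is $\ge-1$ while $y_i\ge x_i$, both directions go through --- gives $|\psi^{T_2}_{p_k}(s)-\psi^{T_1}_{p_k}(s)|\le\Delta$ whenever both boundaries are positive, i.e.\ on an open interval of $s$ around $1$. There $f(s):=\psi^{T_2}_{p_k}(s)-\psi^{T_1}_{p_k}(s)=(\lambda_2-\lambda_1)s-(\gamma_2-\gamma_1)$ is affine with $f(1)=-\Delta$, so $0\le(\lambda_2-\lambda_1)(s-1)\le2\Delta$; letting $s$ vary on \emph{both} sides of $1$ forces $\lambda_1=\lambda_2=:\lambda$, and then $\gamma_2=\gamma_1+\Delta$. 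Thus $\psi^{T_2}_{p_k}(s)=\psi^{T_1}_{p_k}(s)-\Delta$ wherever both are positive, and interpolating along $x+\mu v$ with the $\Delta$-Lipschitz bound gives $\psi_{p_k}(x+\mu v,s)=\psi^{T_1}_{p_k}(s)-\mu\Delta$: the bundling hyperplane persists and slides rectilinearly as $s_{p_k}$ decreases, exactly as in Lemma~\ref{lemma:magic-affine minimizer}.

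\emph{The contradiction.} Finally I would reuse the endgame of Lemma~\ref{lemma:magic-affine minimizer} at $T_2$. Since $\lambda\ge\tfrac{1}{\rho-1+\delta'}\ge\alpha$ and $\psi^{T_2}_{p_k}(1)\le t_{p_k}^{T_2}$ (as $p_k$ goes to the $s$-player at $T_2$), one gets $\psi^{T_2}_{p_k}(\epsilon)<\beta$ for small $\epsilon>0$; since also $\psi^{T_2}_{p_k}(1)>\beta$ and consecutive terms of $\psi^{T_2}_{p_k}(q\delta/(2n))$, $q=0,\dots,2n/\delta$, differ by at most $(\rho-1+\delta')\delta/(2n)\le\delta/2$ (Lemma~\ref{lemma:2x2-aff-min}), some $q^*$ gives $\psi^{T_2}_{p_k}(q^*\delta/(2n))\in(\beta,\delta)$. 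Then at $s_{p_k}=q^*\delta/(2n)$, $t_{p_k}=\delta$ and first $k-1$ coordinates $y$ the $0$-player misses $p_k$, while the second clause of Definition~\ref{def:potentially-good} together with $y\in\prod_{i<k}V^k_{p_i}$ forces it to hold all of $P_{-k}$ there, contradicting the all-or-nothing behaviour on $P$ at this $s_{p_k}$. I expect this last step to be the main obstacle: the all-or-nothing structure (equivalently the bundling hyperplane) must be transported from $s_{p_k}=1$, where Lemma~\ref{lemma:all-tasks-to-s} is available, down to the small value $s_{p_k}=q^*\delta/(2n)$ using only the linearity hypothesis and weak monotonicity --- precisely the analogue of checking, in Lemma~\ref{lemma:magic-affine minimizer}, that the bundling boundary survives as a sloped line after $s_{p_k}$ has been lowered.
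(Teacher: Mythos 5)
Your proposal tracks the paper's proof closely up to the final step: your ``bundling hyperplane'' paragraph is essentially the paper's Claims~(i)--(ii) (the $1$-Lipschitz bound plus $\psi_{p_k}(t^2_{-p_k},1)-\psi_{p_k}(t^1_{-p_k},1)=-|t^2_{-p_k}-t^1_{-p_k}|_1$), your ``extracting the coefficients'' paragraph is Claim~(iii), and the location of $q^*$ with $\psi_{p_k}(\cdot,q^*\delta/(2n))\in[\beta,\delta)$ is Claims~(iv)--(v). Up to there the arguments are equivalent. But your closing step contains the genuine gap, and you have in fact put your finger on it yourself: you need the ``all-or-nothing'' (bundling) structure on $P$ to persist when $s_{p_k}$ is lowered from $1$ to $q^*\delta/(2n)$, and nothing in the linearity hypothesis or in what you have established gives this. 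Lemma~\ref{lemma:all-tasks-to-s} only controls the allocation of $P_{-k}$ at $s_{p_k}=1$; at $s_{p_k}=q^*\delta/(2n)$ you know exactly one thing, namely the boundary $\psi_{p_k}$, and you have no handle on whether the $0$-player holds or drops the tasks of $P_{-k}$ there. So ``the $0$-player misses $p_k$ while potential goodness forces it to hold $P_{-k}$'' is not by itself a contradiction --- it \emph{would} be one if the allocation were all-or-nothing at that $s_{p_k}$, but that is precisely the unproved claim.

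The paper closes this differently, and the difference matters: it never transports the bundling structure. Instead it builds two concrete instances at $s_{p_k}=q^*\delta/(2n)$, namely $\hat T_1$ (coordinates near $t^1_{-p_k}$, with $\hat t_{p_k}=\delta^1-\epsilon$, so $p_k$ goes to the $0$-player) and $\hat T_2$ (coordinates near $t^2_{-p_k}$, with $\hat t_{p_k}=\delta^2+\epsilon$, so $p_k$ does not), and writes the weak-monotonicity inequality for the $0$-player restricted to the slice of $P$ directly between $\hat T_1$ and $\hat T_2$. Because Claim~(iii) pins the $t_{p_k}$-gap to $\delta^2-\delta^1=-|t^2_{-p_k}-t^1_{-p_k}|_1$, the inequality is tight and, as $\epsilon\to 0$, forces $a^2_{p_i}-a^1_{p_i}=-1$ for all $i\in[k-1]$, hence $a^2_{p_i}=0$: no task of $P$ is given to the $0$-player at $\hat T_2$. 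One more monotone perturbation to $(t^2_{-p_k},\,t_{p_k}=\delta)$ then contradicts the second clause of Definition~\ref{def:potentially-good}. So the missing ingredient in your write-up is exactly this direct weak-monotonicity computation between the two endpoints at the small $s_{p_k}$; the ``all-or-nothing'' picture is a useful intuition from the $2\times 2$ case but is not available (and not needed) at the lowered $s_{p_k}$.
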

\begin{proof}
  Towards a contradiction, suppose that such instances $T_1=(t^1,s)$ and $T_2=(t^2,s)$ exist. For simplicity we write $\psi_{p_k}(t_{-p_k}^j,s_{p_k})$ for the boundary, ignoring the other fixed $s$-values. By Lemma~\ref{lemma:all-tasks-to-s}, we know that in the allocation of $T_j$, $j=1,2$, all tasks are given to the $s$-player and when we change one $t$-value of one of the tasks in $P$ to $\beta$, all tasks are given to the 0-player.

  We break the proof into simpler claims. First there is a very general claim that the boundary functions are 1-Lipschitz.

  \newcounter{claimNo}

  \stepcounter{claimNo}
  \begin{claim*}[\roman{claimNo}]
    The boundary function $\psi_r(t_{-r},s)$ is 1-Lipschitz in $t_{-r}$, i.e.,
    $|\psi_r(t_{-r},s)-\psi_r(t_{-r}',s)|\leq |t_{-r}-t_{-r}'|_1$.
  \end{claim*}
  \begin{proof}
    
    Consider the instances $(t_{-r},\psi_r(t_{-r},s)-\epsilon)$ and $(t_{-r}',\psi_r(t_{-r}',s)+\epsilon)$. By the definition of the boundary function $\psi_r$, task $r$ is allocated to the player in the first instance, but not in the second one.  Let $a$ and $a'$ denote the allocation for these two instances. Then by weak monotonicity, we have
\begin{align*}
  \sum_{i\neq r} (a_i-a_i') (t_i-t_i')+(1-0)((\psi_r(t_{-r},s)-\epsilon)-(\psi_r(t_{-r}',s)+\epsilon)) \leq 0 \\
  \psi_r(t_{-r},s)-\psi_r(t_{-r}',s) \leq \sum_{i\neq r} (a_i'-a_i) (t_i-t_i') + 2\epsilon.%
\end{align*}
Since $(a_i'-a_i)\in\{0,1,-1\}$, by letting $\epsilon$ tend to 0, we get $\psi_r(t_{-r},s)-\psi_r(t_{-r}',s)\leq |t_{-r}-t_{-r}'|_1$. The other direction follows by the symmetry of $t$ and $t'$.
\end{proof}

\stepcounter{claimNo}
\begin{claim*}[\roman{claimNo}]  $\psi_{p_k}(t_{-p_k}^2,1)-\psi_{p_k}(t_{-p_k}^1,1)=-|t_{-p_k}^2-t_{-p_k}^1|_1=-\sum_{i\in [k-1]} (t_{p_i}^2-t_{p_i}^1).$
  \end{claim*}
  \begin{proof}
    We know that in the allocation of instance $T_1$ all tasks in $P$ are given to the $s$-players. We create another instance $\hat t^1$ that has the same allocation for the tasks in $P$: we increase each $t_{p_i}^1$ by $\epsilon/n$ and set $t_{p_k}^1=\psi_{p_k}(t_{-p_k}^1,1)+\epsilon$, for some small $\epsilon>0$. The allocation of $P$ remains the same, because (1) the boundary $\psi_{p_k}(t_{-p_k}^1,1)$ changes by less than $(n-1)\epsilon/n<\epsilon$, so the new $t$-value of $p_k$ is still above the boundary, and (2) by weak monotonicity, the allocation of the other tasks remains the same. Similarly, we create an instance $\hat t^2$ that is very close to $t^2$ except that $t_{p_k}^2=\psi_{p_k}(t_{-p_k}^2,1)-\epsilon$ and all tasks are allocated to the 0-player: start with $(t_{-p_k}^2,\beta)$ for which all tasks are allocated to the 0-player, and then decrease each $t_{p_i}^2$ by $\epsilon/n$ and set $t_{p_k}^2=\psi_{p_k}(t_{-p_k}^2,1)-\epsilon$. Again the allocation will not change. The weak monotonicity property gives
  \begin{align*}
    \sum_{i\in [k]} (1-0)(\hat t_{p_i}^2-\hat t_{p_i}^1)&\leq 0 \\
    \sum_{i\in [k-1]} (t_{p_i}^2-t_{p_i}^1) + (\psi_{p_k}( t_{-p_k}^2,1)-\psi_{p_k}( t_{-p_k}^1,1))  &\leq 4\epsilon \\
    \psi_{p_k}( t_{-p_k}^2,1)-\psi_{p_k}( t_{-p_k}^1,1) &\leq -\sum_{i\in [k-1]} (t_{p_i}^2-t_{p_i}^1) +4\epsilon.
  \end{align*}
By letting $\epsilon$ tend to 0 and by the Lipschitz property, we get the equality of the claim. To use the Lipschitz property, we need to recall that $T_2\geq T_1$, so all terms inside the sum on the right hand side are nonnegative.
\end{proof}

\stepcounter{claimNo}
\begin{claim*}[\roman{claimNo}]
\newcounter{claim3}\setcounter{claim3}{\value{claimNo}}

The boundary functions for $t^j$, $j=1,2$,
  \begin{align*}
      \psi_{p_k}( t_{-p_k}^j,s_{p_k}) & =\lambda(t_{-p_k}^j) s_{p_k} - \gamma(t_{-p_k}^j)
  \end{align*}
  have the same $\lambda$ coefficient, i.e., $\lambda(t_{-p_k}^1)=\lambda(t_{-p_k}^2)=\lambda$. As a result
  \begin{align*}
    \psi_{p_k}( t_{-p_k}^2,s_{p_k})-\psi_{p_k}( t_{-p_k}^1,s_{p_k})=-|t_{-p_k}^2-t_{-p_k}^1|_1,
  \end{align*}
  for every $s_{p_k}$, for which the boundaries are positive.
\end{claim*}
\begin{proof}
  Suppose not and assume that $\lambda(t_{-p_k}^1)>\lambda(t_{-p_k}^2)$. For $s_{p_k}=1+\epsilon$, for some $\epsilon>0$, we get
  \begin{align*}
    \psi_{p_k}( t_{-p_k}^2,1+\epsilon)-\psi_{p_k}( t_{-p_k}^1,1+\epsilon)&=\psi_{p_k}( t_{-p_k}^2,1)-\psi_{p_k}( t_{-p_k}^1,1)+(\lambda(t_{-p_k}^2)-\lambda(t_{-p_k}^1))\epsilon \\
    &<-|t_{-p_k}^2-t_{-p_k}^1|_1,
  \end{align*}
  which violates the Lipschitz property. The case of $\lambda(t_{-p_k}^1)<\lambda(t_{-p_k}^2)$ is similar, only that we now consider $s_{p_k}=1-\epsilon$.
\end{proof}

\stepcounter{claimNo}
\begin{claim*}[\roman{claimNo}]
  There is $\epsilon>0$ such that $\psi_{p_k}(t_{-p_k}^j,\epsilon)\leq \beta$, for $j=1,2$.
\end{claim*}
\begin{proof}
  We use the fact that $\lambda\geq 1/(\rho-1+\delta') \geq \alpha$
  (Lemma~\ref{lemma:2x2-aff-min}), which intuitively shows that the boundary
  moves at relative speed at least $\alpha$ as $s_{p_k}$ changes from 1 to
  0. Since task $p_k$ is given to the $s$-player for $T_j$, we must have
  $\psi_{p_k}(t_{-p_k}^j,1)\leq t_{p_k}^j < \alpha+\beta$. This gives
  $\lambda\cdot 1 - \gamma(t_{-p_k}^j) < \alpha+\beta$, and for a sufficiently
  small $\epsilon>0$, we get
  $\psi_{p_k}(t_{-p_k}^j,\epsilon) \leq \max(0, \lambda\cdot \epsilon-\gamma(t_{-p_k}^j))\leq
  \max(0, \alpha+\beta-\lambda)\leq \max(0, \beta)\leq \beta$.
\end{proof}

\stepcounter{claimNo}
\begin{claim*}[\roman{claimNo}]
  There exists $q^*\in \{0,1,\ldots,2n/\delta\}$ such that $\psi_{p_k}(t_{-p_k}^j,q^* \delta/(2n))$ is in the interval $[\beta,\delta)$, for $j=1,2$.
\end{claim*}
\begin{proof}
  Consider the sequence of values $\psi_{p_k}(t_{-p_k}^j,q \delta/(2n))$, for $q=0,1,\ldots,2n/\delta$. By Lemma~\ref{lemma:2x2-aff-min}, $\lambda\leq \rho-1+\delta'$, so successive values in this sequence differ by at most $\lambda \delta/(2n)\leq (\rho-1+\delta') \delta/(2n)\leq \delta/2$. The previous paragraph established that one of these values is at most $\beta$, while the value for $q=2n/\delta$ is at least $\beta$. Therefore one of these values, say value $\psi_{p_k}(t_{-p_k}^j,q^* \delta/(2n))$ is in the interval $[\beta,\delta)$. 
\end{proof}

Let us denote by $\delta^j$ the value $\psi_{p_k}(t_{-p_k}^j,q^* \delta/(2n))$
of the last claim. By Claim~(\roman{claim3}) these boundary values satisfy $\delta^2-\delta^1=-|t_{-p_k}^2-t_{-p_k}^1|_1$.  

We now have all ingredients to finish the proof of the lemma. Fix some arbitrarily small $\epsilon>0$. We create two new instances: instance $\hat T_1=(\hat t^1,\hat s)$ that agrees with $T_1$ everywhere except that $[\hat t_{p_k}^1=\delta^1-\epsilon,\, \hat s_{p_k}=q^* \delta/(2n)]$; and instance $\hat T_2=(\hat t^2,\hat s)$ that agrees with $T_2$ everywhere except that \emph{each $t_{p_i}^2$ is decreased by $\epsilon/n$} and $[\hat t_{p_k}^2=\delta^2+\epsilon,\, \hat s_{p_k}=q^* \delta/(2n)]$. Note that changing the $t$-values by $\epsilon/n$, changes the boundary of task $p_k$ by at most $\epsilon(n-1)/n<\epsilon$, so task $p_k$ is allocated to the 0-player in $\hat t^1$, but not in $\hat t^2$.

  Let $a^1$ and $a^2$ be the allocations for $\hat T_1$ and $\hat T_2$. By weak monotonicity we have
  \begin{align*}
    \sum_{i\in [k-1]}(a_{p_i}^2-a_{p_i}^1)(\hat t_{p_i}^2-\hat t_{p_i}^1)+(0-1)(t_{p_k}^2-t_{p_k}^1)&\leq 0 \\
    \sum_{i\in [k-1]}(a_{p_i}^2-a_{p_i}^1)(\hat t_{p_i}^2-\hat t_{p_i}^1) &\leq \delta^2-\delta^1+2\epsilon \\
    \sum_{i\in [k-1]}(a_{p_i}^2-a_{p_i}^1)(\hat t_{p_i}^2-\hat t_{p_i}^1) &\leq -|t_{-p_k}^2-t_{-p_k}^1|_1+2\epsilon.
  \end{align*}
  By the Lipschitz property when $\epsilon$ is arbitrarily small, the last inequality can be satisfied only when $a_{p_i}^2-a_{p_i}^1=-1$ for every $i\in [k-1]$. In particular we have $a_{p_i}^2=0$, which means that in the allocation of $\hat T_2$, no task in $P$ is allocated to the 0-player. Since $\hat t_{p_i}^2<t_{p_i}^2$, $i\in [k-1]$, and $\hat t_{p_k}^2=\delta^2+\epsilon<\delta$, by weak monotonicity, still no task in $P$ is allocated to the 0-player for values $t_{p_i}^2$ and $t_{p_k}=\delta$.

  In summary, no task in $P$ is allocated to the 0-player for the instance that results from $T_2$ when we change the values of task $p_k$ to $[t_{p_k}=\delta,\, s_{p_k}=q^* \delta/(2n)]$. But this contradicts the assumption that $P$ is potentially good.
\end{proof}

We can use this stronger lemma to provide a direct proof of
Lemma~\ref{lemma:prob-of-linear}, which avoids the use of volume and instead of
showing that instances with linear boundary functions have probability 0, it
shows that no such instance exists at all.

\begin{lemma} \label{lemma:prob-of-linear-strong} Suppose that $P=(p_1,\ldots,p_k)$
  is a potentially-good but not good set of tasks for $T$ with witness
  $V$. There exists a witness $V'$ such that no instance $T'$ in the set of
  $V'$-perturbations of $\hat T(P)$ has linear boundary function
  $\psi_{p_k}(s_{p_k})$ (when the values of the other tasks are fixed to the
  values of $T'$).
\end{lemma}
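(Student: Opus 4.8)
The goal is to upgrade the probabilistic statement of Lemma~\ref{lemma:prob-of-linear} (event $\cal E$ has probability $0$) to a constructive one: there is an actual sub-perturbation $V'$ of $V$ on which no instance has a linear boundary $\psi_{p_k}$. The natural tool is the strengthened Lemma~\ref{lemma:magic-affine minimizer - strong}, which forbids \emph{two} comparable instances $T_1\le T_2$ inside the $V$-perturbations of $\hat T(P)$ from both having linear $\psi_{p_k}$. So the plan is: let $E\subseteq \hat V=\times_{j\in P}V_j$ be the set of $t$-values (on tasks in $P$) for which the boundary $\psi_{p_k}(s_{p_k})$ is linear. Lemma~\ref{lemma:magic-affine minimizer - strong} says $E$ is an antichain in the coordinatewise partial order on the box $\hat V$. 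The task is then to find a sub-box $V'=\times_{j\in P}V_j'$ with $V_j'\subseteq V_j$ nonempty open intervals, such that $V'$ (viewed as a subset of $\hat V$) is disjoint from $E$.

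The key combinatorial fact I would use is that an antichain in $\R^k$ under the coordinatewise order cannot be "fat" in the following sense: it cannot contain a full sub-box. Indeed, if a box $\prod_j (a_j,b_j)$ were contained in an antichain, then picking any point $x$ in it and any point $y$ in it with $y_j>x_j$ for all $j$ (which exists since the box is open and has positive width in every coordinate) would give $x<y$ with both in the antichain, a contradiction. So it suffices to produce \emph{some} sub-box of $\hat V$ avoiding $E$. For this I would argue as follows. Pick any point $z^0\in\hat V\setminus E$ — this exists because $E$ has measure zero (by the argument of Lemma~\ref{lemma:prob-of-linear}, or simply because an antichain in $\R^k$ for $k\ge 2$ has Lebesgue measure zero, being contained in countably many graphs of monotone functions) while $\hat V$ has positive measure. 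Now I need an \emph{open} neighborhood of (a point near) $z^0$ that still avoids $E$; this does not follow from $z^0\notin E$ alone, since $E$ need not be closed. Instead I would use the antichain structure directly: starting from $z^0$, the "down-set" $\{x : x\le z^0\}\cap\hat V$ and the "up-set" $\{x: x\ge z^0\}\cap\hat V$ each meet $E$ in at most the single point $z^0$ itself (any other point of $E$ comparable to $z^0$ would violate the antichain property together with $z^0$ — wait, $z^0\notin E$, so in fact \emph{any} point of $\hat V$ comparable to $z^0$, other than points incomparable to it, is unconstrained). Let me instead take the cleaner route: choose a point $z^-\in\hat V\setminus E$ and a point $z^+\in\hat V\setminus E$ with $z^-_j<z^+_j$ for all $j$ (possible since $\hat V\setminus E$ is dense, being co-null). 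Consider the open box $B=\prod_j(z^-_j,z^+_j)$. Every point of $B$ is $\ge z^-$ and $\le z^+$ coordinatewise, and both $z^-,z^+\notin E$. If $B$ met $E$ at some point $w$, then $w$ would be comparable to — hmm, $w\ge z^-$ and $w\le z^+$, but $z^\pm\notin E$, so this gives no contradiction directly either.

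So the honest approach is: $E$ is an antichain, hence meets every \emph{chain} in at most one point, hence meets every coordinate line in at most one point; more strongly, $E$ projects \emph{injectively} onto any $(k-1)$-subset of coordinates after suitable monotone parametrization, but the cleanest usable fact is that $E$ has $k$-dimensional Lebesgue measure zero. Then consider the function that, for each point $x\in\hat V$, is $1$ if $x\in E$ and $0$ otherwise; by Fubini, for almost every value $x_{p_1}$ of the first coordinate, the slice $E_{x_{p_1}}=\{x_{-p_1}: (x_{p_1},x_{-p_1})\in E\}$ has $(k-1)$-measure zero in $\times_{j\ne p_1}V_j$. But by Lemma~\ref{lemma:magic-affine minimizer - strong} applied within a fixed value of $x_{p_1}$ — actually no; the original Lemma~\ref{lemma:magic-affine minimizer} is exactly what handles the slice where all but one coordinate is fixed. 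Let me therefore mirror the proof of Lemma~\ref{lemma:prob-of-linear} but make it constructive using Lemma~\ref{lemma:magic-affine minimizer - strong} at full strength. Here is the cleanest version. \textbf{Main step.} By Lemma~\ref{lemma:magic-affine minimizer - strong}, $E$ contains no two comparable points. Fix any point $c$ in the interior of $\hat V$ with $c\notin E$ (co-null set). I claim the open box $V'=\prod_{j\in P}(c_j-\eta,c_j+\eta)$ for suitably small $\eta>0$ avoids $E$ after possibly \emph{shifting} it: define $V'=\prod_{j\in P}(c_j,c_j+\eta)$ so that all points of $V'$ are $\ge c$ coordinatewise. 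Suppose $w\in V'\cap E$. Then $w\ge c$, and I would like $c\in E$ for a contradiction, but $c\notin E$.

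Given the above stumbling, the correct statement of what Lemma~\ref{lemma:magic-affine minimizer - strong} buys, and what the paper surely intends, is: \emph{no two comparable instances in the $V$-perturbations both have linear $\psi_{p_k}$}. Equivalently $E$ is an antichain. The existence of $V'$ then follows because \textbf{an antichain in $\R^k$ ($k\ge 1$) has empty interior and, crucially, so does its closure's "monotone hull"}, but most directly: fix the values $t_{p_1}=c_{p_1},\dots,t_{p_{k-1}}=c_{p_{k-1}}$ at a point where, by Lemma~\ref{lemma:magic-affine minimizer} (the non-strong version, which forbids two instances differing in a single coordinate), the set of $t_{p_k}\in V_{p_k}$ giving a linear boundary has at most one element; pick $V_{p_k}'$ to be a subinterval missing that element. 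Then \emph{recurse}: having fixed nothing yet, for the full box, the set $E$ meets each line in the $p_k$-direction in $\le 1$ point (by Lemma~\ref{lemma:magic-affine minimizer}), so $E$ is the graph of a partial function $g$ over $\times_{j\in P_{-k}}V_j$; moreover by Lemma~\ref{lemma:magic-affine minimizer - strong}, $g$ is defined on an antichain of $\times_{j\in P_{-k}}V_j$ too, so by induction on $k$ there is a sub-box $V''\subseteq\times_{j\in P_{-k}}V_j$ over which $g$ is undefined, i.e., $E$ does not project into $V''$; take $V'=V''\times V_{p_k}$. The base case $k=1$: $E\subseteq V_{p_1}$ is an antichain, i.e., has at most one point, so remove it. \textbf{I expect the main obstacle} to be making this descent airtight — specifically, checking that "$g$ is defined on an antichain" genuinely follows from Lemma~\ref{lemma:magic-affine minimizer - strong} (two points in the domain of $g$ that are comparable in $\times_{j\in P_{-k}}V_j$ would, together with their $g$-values, need to be lifted to comparable points of $E$ in $\hat V$, which requires controlling the $p_k$-coordinate — here one uses that within a fixed $(t_{p_i})_{i<k}$ the linear-boundary set in the $p_k$ direction is a single point, and that along the antichain these single points vary so as not to create a comparable pair, which is precisely the content of the strong lemma). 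Once that is in place, $V'$ is the desired witness, and no instance in the $V'$-perturbations of $\hat T(P)$ has a linear boundary $\psi_{p_k}$, which is the claim; since affine minimizers and the linear part of relaxed affine minimizers both give linear $\psi_{p_k}$, this $V'$ rules them out entirely, as announced in the outline.
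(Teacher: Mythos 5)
Your framing is the right one: Lemma~\ref{lemma:magic-affine minimizer - strong} says precisely that the set $E$ of $t$-values in $\hat V$ with linear $\psi_{p_k}$ is an antichain (in fact, since linearity of $\psi_{p_k}$ does not depend on $t_{p_k}$, the relevant set is really a $(k-1)$-dimensional antichain in $\times_{j\in P_{-k}}V_j$, as the condition ``$i\in[k-1]$'' in the strong lemma indicates). But you never land on a complete argument, and the inductive descent you sketch at the end has a genuine gap: the domain of the partial function $g$ need \emph{not} be an antichain. Concretely, if $x<y$ coordinatewise in the $(k-1)$-dimensional box but $g(x)>g(y)$, the lifted points $(x,g(x))$ and $(y,g(y))$ are \emph{incomparable} in $\hat V$, so the antichain property of $E$ gives no contradiction, and nothing forces $g$ to be monotone. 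So ``$g$ is defined on an antichain'' does not follow from the strong lemma as you hope, and the recursion does not close.

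The step you missed is to pick a point \emph{in} $E$ rather than outside it, which collapses the whole argument to one line and is exactly what the paper does. If $E=\emptyset$, take $V'=V$. Otherwise fix any $T'=(t',s)\in E$ and set $V'_{p_i}=(\alpha,t'_{p_i})$ for $i\in[k]$; every $T''$ in the $V'$-perturbations of $\hat T(P)$ satisfies $T''<T'$ coordinatewise, so by Lemma~\ref{lemma:magic-affine minimizer - strong} (applied to the pair $T''\le T'$) the instance $T''$ cannot have a linear boundary. The antichain property then says the entire open down-set below \emph{any} linear point is linear-free, which is exactly a sub-box; no dimensional induction, measure-theoretic density argument, or monotonicity of $g$ is needed. (The two earlier approaches you tried, picking $z^0\notin E$ and picking a pair $z^-<z^+$ both outside $E$, were doomed for the reason you yourself noticed: the antichain condition constrains pairs \emph{inside} $E$, so at least one of the two anchoring points must lie in $E$ for the order relation to do any work.)
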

\begin{proof}
  Towards a contradiction, suppose that there exists some instance $T'=(t',s)$
  with linear boundary function $\psi_{p_k}(s_{p_k})$. By
  Lemma~\ref{lemma:magic-affine minimizer - strong}, for no other instance
  $T''$, with $T''\leq T'$, the boundary function $\psi_{p_k}(s_{p_k})$ is
  linear. That is, there is no instance with linear boundary function in the
  witness defined by intervals $V_{p_i}'=(\alpha, t_{p_i}')$, $i\in [k]$.
\end{proof}

\subsubsection{Relaxed affine minimizers}
\label{sec:relax-affine-minim}

Lemma~\ref{lemma:prob-of-linear} essentially excludes the case of
$2\times 2$ affine minimizers. %
We now want to deal with the case of relaxed affine
minimizers. This presents an additional difficulty as we may not be in
the linear part of the mechanism but at its tail. Although this seems
like a special case, it adds another complication to the proof and we
need to handle it carefully. We deal with this problem by showing that
for most siblings $p_k'$, the $2\times 2$ mechanism cannot be a
relaxed affine minimizer, since then it will be in its linear part and
it will remain in the linear part when we change the $s_{p_k}$ value,
and hence we can treat it like in Lemma~\ref{lemma:magic-affine
  minimizer}. To achieve this, we first show that not many tasks can
be allocated to the $k$ player, otherwise the approximation ratio is
high (Lemma~\ref{lemma:n-tasks-to-s} below).  This is useful, because
when a task $p_k'$ is allocated to the 0-player, and the
$(p_k,p_k')$-slice mechanism is a relaxed affine minimizer, we can
guarantee that it is in the linear part of the mechanism and it will
remain there when we change $s_{p_k}$ (see
Lemma~\ref{lemma:2x2-relaxed-aff-min} below).

\begin{lemma} \label{lemma:n-tasks-to-s} Suppose that $P=(p_1,\ldots,p_k)$ is a potentially-good set of tasks for $T$ with witness $V$ and that the approximation ratio is less than $\rho$. For every instance of the set of $V$-perturbations of $\hat T(P)$, fewer than $2n^2$ siblings of $p_k$ are allocated to the $k$-player.
\end{lemma}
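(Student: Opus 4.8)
The plan is to derive a contradiction from a crude estimate of the approximation ratio whenever too many siblings of $p_k$ are allocated to the $k$-player (the $s$-player owning the cluster $Q_k$ of $p_k$). Suppose, towards a contradiction, that in some instance $T'=(t',s)$ of the set of $V$-perturbations of $\hat T(P)$ at least $2n^2$ siblings of $p_k$ are allocated to the $k$-player. Since $T$ is standard for $P$ and a $V$-perturbation changes only the $t$-values of the tasks in $P$, every sibling $j$ of $p_k$ still has $s_j=1$ in $T'$, i.e.\ processing time exactly $1$ on the $k$-player. Hence the completion time of the $k$-player in the mechanism's allocation for $T'$ is at least $2n^2$, so $\mech(T')\geq 2n^2$.

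It remains to bound $\opt(T')$ from above, which I would do with an explicit allocation that never uses a processing time equal to $\A$. Put every task of the non-trivial clusters $Q_1,\dots,Q_k$ on the $0$-player: at most $k$ of them lie in $P$, each with $t'$-value below $\alpha+\beta$, and the remaining (at most $k\ell$) have $t$-value $\beta$, so the load this creates on the $0$-player is less than $k\alpha+k(\ell+1)\beta\leq(n-1)\alpha+(n-1)(\ell+1)\beta$, which is below $\sqrt{n-1}+1$ once $\beta$ is chosen sufficiently small. Send each dummy task $d_i$ to player $i$ at cost $0$, and allocate the tasks of each trivial cluster by an allocation of cost at most $\delta'$; since there are at most $n-1$ trivial clusters, this adds at most $(n-1)\delta'<1$ to the load of any player. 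Consequently every player's load is below $\sqrt{n-1}+2$, so $\opt(T')<\sqrt{n-1}+2$.

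Combining the two estimates, $\mech(T')/\opt(T')>2n^2/(\sqrt{n-1}+2)$, which exceeds $\rho=1+(1-o(1))\sqrt{n-1}$, contradicting the hypothesis that the approximation ratio is less than $\rho$. Since $T'$ was an arbitrary instance of the set of $V$-perturbations of $\hat T(P)$, in each of them fewer than $2n^2$ siblings of $p_k$ are allocated to the $k$-player. There is no real difficulty here beyond being careful in the $\opt$ estimate to account for all task categories (the perturbed tasks of $P$, the remaining tasks of the $k$ non-trivial clusters, the $n$ dummy tasks, and the tasks of the trivial clusters) and to keep the explicit allocation away from processing times equal to $\A$; the constant $2n^2$ is far from tight---any bound polynomial in $n$ would suffice here---and is chosen for convenience in the probabilistic arguments that follow.
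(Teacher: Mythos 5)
Your proof is correct and takes essentially the same approach as the paper: lower bound $\mech(T')$ by $2n^2$ from the $k$-player's load, upper bound $\opt(T')$ by roughly $\sqrt{n-1}$ via the allocation that sends all tasks to their cheap player, and compare. You are in fact a bit more explicit than the paper in itemizing the contributions to the optimum (the $P$-tasks, the $\beta$-valued siblings, the dummies, and the trivial clusters), which is a harmless refinement rather than a different argument.
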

\begin{proof}
  By the definition of $T$, we have $[t_{p_k'}=\beta, \,
  s_{p_k'}=1]$. Fix an instance $T'=(t',s)$ of the set of
  $V$-perturbations of $\hat T(P)$. If the mechanism allocates $2n^2$
  or more sibling tasks $p_k'$ to $k$-player, the mechanism has
  approximation ratio at least $\rho$ which leads to a contradiction:
  the approximation ratio is at least $\rho$ because the mechanism has
  cost at least $2n^2$ (from the sibling tasks), while the optimum
  allocation has cost at most $k\alpha+(n-1)\delta'\leq
  n\alpha+n\delta'\approx \sqrt{n}$, and the ratio is at least $\rho$
  (for $n>1$).
\end{proof}

In the above lemma, we can easily improve the number of tasks from
$2n^2$ to $O(n)$, by a simple tightening of its proof, but we again
opt for simplicity.

\begin{lemma} \label{lemma:2x2-relaxed-aff-min} Fix an instance $T$ with $t_{p'}>0$, two tasks $p$ and $p'$ of the same cluster and assume that the $(p,p')$-slice mechanism for $T$ is a relaxed affine minimizer. If the mechanism allocates task $p'$ to 0-player but not task $p$, then this instance is not in the bundling tail of the relaxed affine minimizer, and the same holds for every value of $s_p$. That is, the boundary of $\psi_p(s_p)$ is linear for any value of $s_p$, for fixed values for $t_{-p}$ and $s_{-p}$.
\end{lemma}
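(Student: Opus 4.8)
The first assertion is immediate from the hypothesis: the $t$-player receives exactly the set $\{p'\}$ at the given bids, which is a proper nonempty subset of $\{p,p'\}$, whereas in the bundling tail of a relaxed affine minimizer the $t$-player's allocation is always $\{p,p'\}$ or $\emptyset$; hence $T$ is not in the bundling tail. Consequently, at $T$ the $(p,p')$-slice mechanism conforms to a genuine affine minimizer with positive constants $\lambda,\lambda'$ and constants $\pi_{12},\pi_p,\pi_{p'},\pi_\emptyset$, so that the four expressions of~\eqref{eq:aff-min} govern its allocation at $(t_p,t_{p'},s_p,s_{p'})$.

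The key step is to exploit the cancellation of the $s_p$–terms between the $\{p'\}$–expression and the $\emptyset$–expression. Since the affine minimizer selected the allocation $\{p'\}$, its expression is at most that of $\emptyset$:
\[
  \lambda' t_{p'} + \lambda s_p + \pi_{p'} \;\le\; \lambda(s_p+s_{p'}) + \pi_\emptyset ,
\]
whence $\lambda' t_{p'} + \pi_{p'} \le \lambda s_{p'} + \pi_\emptyset$, and because $t_{p'}>0$ and $\lambda'>0$ this is the \emph{strict} inequality $\lambda s_{p'} > \pi_{p'}-\pi_\emptyset$. The point is that this bound involves $s_{p'}$ only and is completely unaffected by $s_p$.

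Next I would combine this with two facts about the structure. First, a routine comparison of the four expressions (e.g.\ at a point with $t_{p'}$ small and $t_p$ large) shows that the region $R_{p'}$ is nonempty precisely when $\lambda s_{p'} > \pi_{p'}-\pi_\emptyset$ — again a condition on $s_{p'}$ alone. Second, by the definition of relaxed affine minimizers in~\cite{CKK20}, the bundling tail is contained in the set of $s$–values for which the underlying affine minimizer is \emph{degenerate}, i.e.\ for which both $R_p$ and $R_{p'}$ are empty (this is exactly the region where the mechanism ``locally looks like a bundling mechanism''). Putting these together: by the first fact $R_{p'}$ is nonempty for \emph{every} value of $s_p$, since the defining inequality does not move with $s_p$; hence for every $s_p$ the instance $(s_p,s_{p'})$ lies outside the bundling tail; and outside the bundling tail the relaxed affine minimizer is an affine minimizer, whose boundary for task $p$ has the truncated–linear form $\psi_p(s_p)=\max(0,\lambda s_p-\gamma(t_{-p},s_{-p}))$. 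This is the stated conclusion.

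The delicate part is the treatment of degenerate configurations: one has to make sure that ``the allocation $\{p'\}$ is realized at a point with $t_{p'}>0$'' really forces the affine–minimizer inequality and the non‑emptiness of $R_{p'}$ (rather than this being an artefact of a tie on a measure‑zero boundary set), and one has to invoke the precise definition of the bundling tail from~\cite{CKK20} to know it sits inside the degeneracy region. The hypothesis $t_{p'}>0$ — which excludes singular behaviour on the coordinate axes and supplies exactly the strictness of $\lambda s_{p'}>\pi_{p'}-\pi_\emptyset$ — is what makes these steps go through cleanly; without it one only gets the non‑strict inequality, under which the instance could genuinely lie in the bundling tail.
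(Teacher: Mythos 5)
Your proposal is correct and follows essentially the same route as the paper's proof. Both arguments first note that the mixed allocation $\{p'\}$ immediately rules out the bundling tail at $T$, and both then observe that the defining condition for $R_{p'}$ to be nonempty depends on $s_{p'}$ but not on $s_p$; the paper expresses this geometrically (the horizontal part of $\psi_{p'}(s_{p'})$, being determined by the $\{p'\}$-vs-$\emptyset$ comparison, does not move when $s_p$ changes), whereas you make the equivalent algebraic inequality $\lambda s_{p'}>\pi_{p'}-\pi_\emptyset$ explicit and correctly pinpoint that the strictness, and hence the whole argument, hinges on $t_{p'}>0$.
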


\begin{proof}
  A relaxed affine minimizer can be in its non-linear part (bundling tail), only if for every positive values of $t_p$, $t_{p'}$, both tasks are given to the 0-player or none. Since the mechanism allocates exactly one of tasks $p$ and $p'$ to 0-player, the mechanism is not in the bundling tail. Also, the fact that the mechanism allocates task $p'$ but not task $p$ to the 0-player implies that $\psi_{p'}(s_{p'})>0$ (see Figure~\ref{fig:notbundling}). %
   By the definition of relaxed affine minimizers, the rectilinear parts ---horizontal parts in the figure--- of the boundary $\psi_{p'}(s_{p'})$ do not depend on $s_p$ and therefore they remain the same when we change $s_p$. This prevents the allocation to become bundling.
  \begin{figure}
    \centering
    \begin{tikzpicture}

      \draw[->] (0,0) -- (6,0) node[anchor=north] {$t_{p}$};
      \draw[->] (0,0) -- (0,5.4) node[anchor=east] {$t_{p'}$};
      \draw[very thick, blue] (1.0,4.58) -- (4,1.58) -- (6,1.58) ; \draw[ultra thick, red, dashed] (1.0,4.7) -- (4.1,1.6) ; \draw[ultra thick, red, dashed] (4.1,1.6) -- (4.1,0) node[anchor = north] {$\psi_p(s_p)$};

      \draw[ultra thick, red, dashed] (0,3.7) -- (2.1,1.6) ; \draw[ultra thick, red, dashed] (2.1,1.6) -- (2.1,0) node[anchor = north] {$\psi_p(s'_p)$};

      \draw (1.5,0.8) node[anchor=east] {$R_{p,p'}$}; \draw (5.5,0.8) node[anchor=east] {$R_{p'}$}; \draw (4,4) node[anchor=east] {$R_{\emptyset}$}; \draw[ blue, dashed] (0,1.58) node[anchor=east] {$\mathcal{P}_{p'}$} -- (6,1.58) ;

    \end{tikzpicture}

    \caption{\small The figure shows the allocation of the 0-player of a $(p,p')$-slice mechanism which is relaxed affine minimizer. If the mechanism gives task $p'$ to the $t$-player, but not task $p$, then it must be that $\mathcal{P}_{p'}>0$. When we change the value of $s_p$ to $s'_p$, the boundary $\psi_p(s'_p)$ slides left and the horizontal boundary remains at the same level $\mathcal{P}_{p'}$. Therefore the allocation cannot become bundling by changing only $s_p.$}
    \label{fig:notbundling}
  \end{figure}
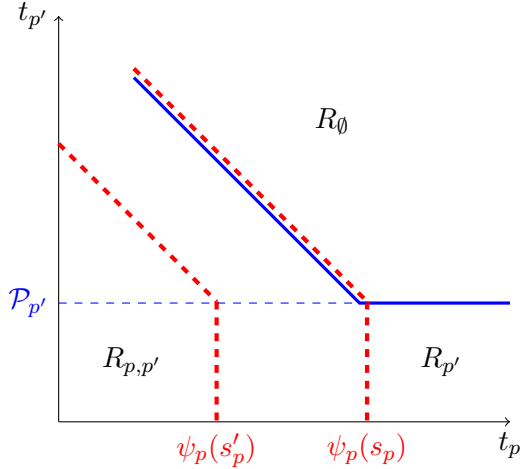
\end{proof}

One of the premises of the previous lemma is that $t_{p'}>0$. We will apply this lemma when $t_{p'}=\beta$. 

We now use a probabilistic argument that combines the above lemmas to show that the probability of relaxed affine minimizers is bounded by $2n^2/\ell$.

\begin{lemma} \label{lemma:prob-of-relaxed-aff-min} Suppose that $P=(p_1,\ldots,p_k)$ is a potentially-good but not good set of tasks for $T$ with witness $V$. Take a random instance $T'$ from the set of $V$-perturbations of $\hat T(P)$ and a random sibling $p_k'$ of $p_k$. We consider the $(p_k,p_k')$-slice mechanism of tasks $p_k$ and $p_k'$ when we fix the values according to $T'$ for the other tasks. The probability that this $2\times 2$ mechanism is a relaxed affine minimizer or affine minimizer is at most $2n^2/\ell$.
\end{lemma}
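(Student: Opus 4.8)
The plan is to condition on whether the random sibling $p_k'$ goes to the $0$-player in $T'$: in the ``typical'' case this probability is small thanks to Lemma~\ref{lemma:n-tasks-to-s}, and in the remaining case a relaxed affine minimizer slice forces the boundary $\psi_{p_k}$ to be linear, which Lemma~\ref{lemma:prob-of-linear} already shows is a probability-zero event. Throughout I work under Assumption~\ref{ass1} (no bad task, approximation ratio below $\rho$), as in the rest of this section.

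First I would record the easy estimate. Since $P$ is potentially-good but not good, Lemma~\ref{lemma:all-tasks-to-s} guarantees that for \emph{every} instance $T'$ of the set of $V$-perturbations of $\hat T(P)$, all tasks of $P$ --- in particular $p_k$ --- are allocated to the $s$-players, so $p_k$ goes to player $k$ and not to the $0$-player. Fixing such a $T'$, Lemma~\ref{lemma:n-tasks-to-s} says that fewer than $2n^2$ siblings of $p_k$ go to player $k$, and since the approximation-ratio bound forbids allocating any task to a player with value $\A$, all the other siblings of $p_k$ go to the $0$-player. As $p_k$ has $\ell$ siblings, a uniformly random sibling $p_k'$ fails to go to the $0$-player in $T'$ with probability less than $2n^2/\ell$; since this holds for every $T'$ and the two random choices are independent, $\Pr[p_k'\text{ not allocated to the }0\text{-player in }T'] < 2n^2/\ell$.

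The core step is the complementary event. Suppose the $(p_k,p_k')$-slice mechanism for $T'$ is a relaxed affine minimizer (the pure affine minimizer case is subsumed, since affine minimizers are a special case of relaxed ones) \emph{and} $p_k'$ is allocated to the $0$-player in $T'$. In $T'$ the sibling has values $[t_{p_k'}=\beta,\,s_{p_k'}=1]$, so $t_{p_k'}>0$; and by the previous paragraph $p_k$ is \emph{not} given to the $0$-player while $p_k'$ is, so exactly one of $p_k,p_k'$ goes to the $0$-player. Lemma~\ref{lemma:2x2-relaxed-aff-min}, with $p=p_k$ and $p'=p_k'$, then tells us that the slice is not at its bundling tail and that $\psi_{p_k}(s_{p_k})$ is linear for \emph{every} value of $s_{p_k}$, with the remaining tasks frozen at their $T'$-values. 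This is exactly the event $\mathcal E$ of Lemma~\ref{lemma:prob-of-linear}, which has probability $0$ over a random $T'$. Hence, for each fixed sibling $p_k'$, $\Pr_{T'}[\text{slice is a relaxed affine minimizer and }p_k'\text{ to the }0\text{-player}] = 0$, and averaging over $p_k'$ keeps this at $0$.

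Combining the two estimates by a union bound over the partition ``$p_k'$ to the $0$-player'' / ``not'',
\begin{align*}
  \Pr[\text{slice is a (relaxed) affine minimizer}]
  &\le \Pr[p_k'\text{ not allocated to the }0\text{-player}] \\
  &\quad{} + \Pr[\text{slice is (relaxed) aff.\ min.\ and }p_k'\text{ allocated to the }0\text{-player}] \\
  &< \frac{2n^2}{\ell} + 0 = \frac{2n^2}{\ell},
\end{align*}
which is the claim. I do not anticipate a serious obstacle; the one point that needs care is checking that the boundary $\psi_{p_k}(s_{p_k})$ delivered by Lemma~\ref{lemma:2x2-relaxed-aff-min} (where only the slice-partner $p_k'$ is pinned to $[\beta,1]$) is the \emph{same} object as the boundary in Lemma~\ref{lemma:prob-of-linear} (where \emph{all} tasks besides $p_k$ are pinned to their $T'$-values, $p_k'=[\beta,1]$ among them) --- which is immediate from Lemma~\ref{lem:restriction}, both being the boundary of the $0$-player's allocation of $p_k$ with every other coordinate held fixed --- together with keeping the quantifier order (fix $p_k'$, average over $T'$; then average over $p_k'$) straight so the two independent error sources stay cleanly separated.
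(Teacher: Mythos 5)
Your proof is correct and follows essentially the same route as the paper: condition on the event $\mathcal F$ that $p_k'$ is allocated to the $0$-player, bound $\Pr[\overline{\mathcal F}]$ via Lemma~\ref{lemma:n-tasks-to-s}, and kill $\Pr[\mathcal E\cap\mathcal F]$ by chaining Lemma~\ref{lemma:2x2-relaxed-aff-min} into Lemma~\ref{lemma:prob-of-linear}. You in fact supply a bit more justification than the paper (explicitly invoking Lemma~\ref{lemma:all-tasks-to-s} so that exactly one of $p_k,p_k'$ goes to the $0$-player before applying Lemma~\ref{lemma:2x2-relaxed-aff-min}, and noting the identification of the two boundary functions via Lemma~\ref{lem:restriction}), but the decomposition and the key lemmas used are identical.
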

\begin{proof}
  Let $\cal E$ be the event that the $(p_k,p_k')$-slice mechanism for $T'$ is affine minimizer or relaxed affine minimizer, and let $\cal F$ be the event that task $p_k'$ is allocated to the 0-player. We can trivially bound the probability of event $\cal E$ as follows:
  \begin{align*}
    \Pr[{\cal E}] &\leq \Pr[{\cal E}\cap {\cal F}]+\Pr[\overline{\cal F}]
  \end{align*}
  The probability that event $\overline{\cal F}$ happens is at most $2n^2/\ell$, by Lemma~\ref{lemma:n-tasks-to-s}. To bound the other term, observe that when task $p_k'$ is given to the 0-player, we are in the case of linear $(p_k,p_k')$-slice mechanisms that are linear and remain linear even when we change $s_{p_k}$ (Lemma~\ref{lemma:2x2-relaxed-aff-min}). But the probability that the $(p_k,p_k')$-slice mechanisms is linear is 0 (Lemma~\ref{lemma:prob-of-linear}). Therefore $\Pr[{\cal E}]\leq \Pr[\overline{\cal F}]\leq 2n^2/\ell$.
\end{proof}

\subsubsection{1-dimensional and constant mechanisms}
\label{sec:1-dim-constant}

Here, we show that 1-dimensional and constant slice mechanisms can
be immediately excluded.

\begin{lemma} \label{lemma:prob-of-1dim} Suppose that $P=(p_1,\ldots,p_k)$ is a potentially-good set of tasks for $T$ with witness $V$. Fix a sibling $p_k'$ of $p_k$ and some instance $T'$ in the set of $V$-perturbations of $\hat T(P)$. Suppose that the $(p_k,p_k')$-slice mechanism for $T$ is a 1-dimensional or a constant mechanism. Then the mechanism has approximation ratio at least $\rho$.
\end{lemma}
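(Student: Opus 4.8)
The plan is to walk through the degenerate types of Theorem~\ref{theo:addchar} and, in each case, build an instance forcing the ratio above $\rho$. The workhorse is that the optimum makespan of the instances we start from is tiny: on the standard instance $T$ one routes every task of a non-trivial cluster to the $0$-player at cost $\beta$, each dummy $d_i$ to player $i$ at cost $0$, and each trivial cluster within cost $\delta'$, so $\opt(T)\le \delta'+o(1)$; on a $V$-perturbation of $\hat T(P)$ the $k$ tasks of $P$ cost $\approx\alpha$ on player $0$, so $\opt=O(1)$. Hence: (a) a task landing on a machine of value $\A$ makes the ratio unbounded; (b) a task of processing time $\Omega(n)$ landing on the $0$-player gives $\mech=\Omega(n)$ against $\opt=O(1)$; and (c) the owner $w$ of cluster $Q_k$ carrying a task of processing time $\Omega(B)$ for $w$ (or $\Omega(\sqrt n)$ unit-cost $Q_k$-tasks) again gives ratio $\gg\rho$. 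The proof just engineers (a), (b) or (c), with weak monotonicity (Lemma~\ref{lemma:tool}) used to make the harmful allocation persist as values are moved; when the slice mechanism is stated ``for $T$'' but a perturbation $T'$ with $\opt=O(1)$ is around, the $s$-value--raising moves below cover both.

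First I would handle \emph{constant (dictatorial) mechanisms}, where the slice allocation of $\{p_k,p_k'\}$ ignores one player's bids. If it ignores the $0$-player it is a function of $(s_{p_k},s_{p_k'})$ only; it must be $\emptyset$ there (otherwise set the $t$-value of the held task huge and invoke (b)), so $w$ permanently holds both tasks, and raising $s_{p_k},s_{p_k'}$ towards $B$ is case (c). If it ignores $w$, the boundary $\psi_{p_k}$ of task $p_k$ (with $t_{p_k'}$ fixed at its small value) is a constant $c\in[0,\infty]$ independent of the $s$-values, and $c<\infty$ by (b); then a dichotomy finishes: if $c\ge n^2$, setting $t_{p_k}=n$, $s_{p_k}=1$ keeps $p_k$ on the $0$-player at cost $n\gg\sqrt n$ while the optimum puts it on $w$ at cost $1$ (case (b)); if $c<n^2$, setting $t_{p_k}=c+\epsilon$, $s_{p_k}=B/2$ drops $p_k$ onto $w$ at cost $\approx B$ while the optimum keeps it on the $0$-player at cost $<n^2$, and since $B$ is exponential in $n$ the ratio is $\gg\rho$ (case (c)).

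For \emph{$1$-dimensional mechanisms} only two allocations of the $0$-player are ever realized. If a singleton is one of them, then one of $p_k,p_k'$ --- say $p_k'$ --- is never given to the $0$-player, so raising $s_{p_k'}$ towards $B$ forces it onto $w$ at cost $\approx B$ and gives case (c). The remaining case is the bundling type (the two allocations are $\emptyset$ and $\{p_k,p_k'\}$), a degenerate relaxed affine minimizer; here I would use that its bundling boundary has, by weak monotonicity, the form $t_{p_k}+t_{p_k'}\le\zeta(s_{p_k}+s_{p_k'})$ for a non-decreasing $\zeta$. If $\zeta$ is slow ($\zeta(2s)\le (2/\rho)s$ for some large $s<B$), set $s_{p_k}=s_{p_k'}=s$ and $t_{p_k}+t_{p_k'}$ just above $\zeta(2s)$: both tasks go to $w$ (load $2s$) while the optimum keeps them on the $0$-player at cost $\le\zeta(2s)$, so the ratio is $\ge 2s/\zeta(2s)\ge\rho$. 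If $\zeta$ is fast ($\zeta(2s)\ge(\rho-1)s>s$ for some large $s$), set $t_{p_k}$ just below $\zeta(2s)$: $p_k$ stays on the $0$-player at cost $\ge(\rho-1)s$, and appending dummy $d_0$ with $t$-value $s$ reaches ratio $\rho$ exactly as in the proof of Lemma~\ref{lemma:2x2-aff-min}. The intermediate regime --- $\zeta$ essentially linear with slope in $(2/\rho,\rho-1)$ --- is the hard part: a single slice of such intermediate ``speed'' does not by itself exhibit a large ratio, so one converts it to a genuine affine-minimizer boundary and runs the argument of Lemma~\ref{lemma:magic-affine minimizer} against the potential-goodness of $P$ (or, alternatively, defers the bundling type to the relaxed-affine-minimizer bound of Lemma~\ref{lemma:prob-of-relaxed-aff-min} and keeps only the singleton and constant cases here). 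I expect this bundling sub-case to be the main obstacle.
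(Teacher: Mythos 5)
Your handling of constant mechanisms and the two singleton $1$-dimensional types ($\{\emptyset,1\}$ and $\{\emptyset,2\}$) is in the right spirit and essentially matches the paper (the paper dispatches constant mechanisms by referring to Lemma~\ref{lemma:2x2-aff-min}, since they are affine minimizers with $\lambda\in\{0,\infty\}$, which violates the bound $\lambda\in[1/(\rho-1+\delta'),\,\rho-1+\delta']$). One small point: choosing $t_{p_k}=n$ is not a safe margin --- the optimum is $\Theta(\sqrt{n})$, so the ratio is only $\Theta(\sqrt{n})$, which is the same order as $\rho$; pick something like $n^3$ as the paper does so the comparison is unambiguous.

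The genuine gap, which you yourself flag, is the bundling $1$-dimensional case, and neither of your fallbacks fixes it. Partitioning by the ``speed'' of a nondecreasing $\zeta$ leaves an intermediate regime; more fundamentally $\zeta$ need not be even roughly linear, so a slow/fast dichotomy does not cover it. Deferring to Lemma~\ref{lemma:prob-of-relaxed-aff-min} also fails: that lemma only yields a probabilistic bound over a random sibling, whereas the present statement must give an absolute lower bound on the ratio, and in any case its proof routes through Lemma~\ref{lemma:2x2-relaxed-aff-min}, whose triggering condition (exactly one of $p_k,p_k'$ goes to the $0$-player) never occurs for a slice that is bundling for all $s$, so there is no linear part to fall back on. The paper's observation that makes this case trivial is that a bundling mechanism cannot ever \emph{split} the two tasks between the $0$-player and the $k$-player. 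So instead of reasoning about the bundling boundary, pick values that make the split the unique cheap allocation: set $[t_{p_k}=n^3,\,s_{p_k}=\beta]$ and $[t_{p'_k}=\beta,\,s_{p'_k}=n^3]$. The optimal makespan is $O(1)$ (split: $p_k$ to the $k$-player at cost $\beta$, $p'_k$ to the $0$-player at cost $\beta$), but any bundling mechanism gives either both tasks to the $0$-player (cost $\geq n^3$ from $t_{p_k}$) or both to the $k$-player (cost $\geq n^3$ from $s_{p'_k}$), so the ratio is $\Omega(n^3)\gg\rho$. This sidesteps $\zeta$ entirely, and it is exactly the same move that disposes of the two other $1$-dimensional singleton cases (missing $R_\emptyset$ or missing $R_{12}$): identify a missing allocation region and choose an instance whose optimum sits there.
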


\begin{proof}
  If the $(p_k,p_k')$-slice mechanism for $T$ is 1-dimensional, there is at least one area missing from the allocation of tasks $p_k,p'_k$. We will show a high ratio by changing only the values of tasks $p_k, p'_k$. If the area that assigns both tasks to the $k$-player is missing, change the values to $[t_{p_k}=n^3, \, s_{p_k}=1]$ and $[t_{p'_k}=n^3, \, s_{p'_k}=1]$. Then the 0-player will be assigned at least one task with cost at least $n^3$, while in the optimum allocation it gets none of them and the cost is at most $(n-1)(1+\delta')$.  The case where the area that assigns both tasks to the $0$-player is missing, is similar.

  The only remaining case is when the mechanism is bundling. In that case take values $[t_{p_k}=n^3, \, s_{p_k}=\beta]$ and $[t_{p'_k}=\beta, \, s_{p'_k}=n^3]$. Then the only allocation with small makespan is the one that assigns task $p_k$ to the $k$-player and task $p'_k$ to the 0-player, but this allocation is not considered by any bundling mechanism.

  For constant mechanisms, the argument of Lemma~\ref{lemma:2x2-aff-min}, shows that the approximation ratio is high.
\end{proof}

\subsubsection{Task independent and relaxed task independent mechanisms}
\label{sec:indep-or-relax}

In this subsection we deal with task independent and relaxed task
independent slice mechanisms. The next lemma is analogous to
Lemma~\ref{lemma:magic-affine minimizer} for task independent or
relaxed task independent slice mechanisms. There is another layer of
difficulty in this case, since we need to take $k$ instances instead
of just two instances that we considered in the linear case. The
difference is that the goal of Lemma~\ref{lemma:magic-affine
  minimizer} was to essentially exclude linear mechanisms, but the
goal of the next lemma is to show that task independent slice
mechanisms imply abundance of good sets of tasks.

\begin{lemma} \label{lemma:magic-task-independent} Suppose that $P=(p_1,\ldots,p_k)$ is a potentially-good but not good set of tasks for $T$ with witness $V$. Fix a sibling $p_k'$ of $p_k$. Suppose that there exists instances $T_0$, $T_1$,\ldots, $T_{k-1}$ such that for every $i\in \{0,\ldots,k-1\}$
  \begin{itemize}
  \item $T_i$ belongs to the set of $V$-perturbations of $\hat T(P)$
  \item for every $i\neq 0$, $T_i$ differs from $T_0$ only in the value of $t_{p_i}$; furthermore, the value of $T_i$ for this task is higher than the value of $T_0$
  \item the $(p_k,p'_k)$-slice mechanism for $T_i$ is a task independent or a relaxed task independent mechanism.
  \end{itemize}
  Assuming that there is no bad task and that approximation ratio is less that $\rho$, the set of tasks $(P_{-k},p_k')$ is good for $T$.
\end{lemma}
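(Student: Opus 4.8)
The plan is to reduce the statement, through Lemma~\ref{lemma:new-witness}, to the construction of a single instance $T^{\star}$ that agrees with $T$ on every task outside $(P_{-k},p_k')$, carries $t$-values in $(\alpha,\alpha+\beta)$ and $s$-values $1$ on the tasks of $(P_{-k},p_k')$, and allocates all of $p_1,\dots,p_{k-1},p_k'$ to the $0$-player; such an instance is exactly a witness of goodness of $(P_{-k},p_k')$ for $T$ (note $T$ is standard for $(P_{-k},p_k')$, since this set meets the same clusters $Q_1,\dots,Q_k$ as $P$).

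I would build $T^{\star}$ around a base instance obtained from $T_0$. By Lemma~\ref{lemma:all-tasks-to-s}, in $T_0$ all tasks of $P$ go to the $s$-players, and lowering $t_{p_k}$ to $\beta$ — giving an instance $\tilde T_0$ — sends all of $p_1,\dots,p_k$ to the $0$-player. Since $\tilde T_0$ and $T_0$ differ only in $t_{p_k}$, a free coordinate of the $(p_k,p_k')$-slice, the $(p_k,p_k')$-slice mechanism of $\tilde T_0$ equals that of $T_0$, hence is (relaxed) task independent; let $\psi_{p_k},\psi_{p_k'}$ denote its boundary functions. Reading ``$p_k$ on the $0$-player at $[\,t_{p_k}=\beta,\,s_{p_k}=1\,]$'' and ``$p_k$ on the $s$-player at $[\,t_{p_k}\in(\alpha,\alpha+\beta),\,s_{p_k}=1\,]$'' off $\tilde T_0$ and $T_0$ gives $\beta<\psi_{p_k}(1)<\alpha+\beta$, and the same holds for each $T_i$. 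Next I would argue that $\psi_{p_k'}(1)>\alpha$: if $p_k'$ never reached the $0$-player along $s_{p_k'}=1$ the slice would realize at most two allocations and Lemma~\ref{lemma:prob-of-1dim} would force approximation ratio $\geq\rho$, contradicting Assumption~\ref{ass1}; a singular $s$-value of a relaxed task independent slice is evaded by the ``$1-\epsilon$'' convention in the definition of a bad task; and $\beta<\psi_{p_k'}(1)\le\alpha$ — i.e.\ $p_k'$ leaving the $0$-player at $t_{p_k'}$ just above $\alpha$ — would, through the minimal-set clause of Lemma~\ref{lemma:all-tasks-to-s} and weak monotonicity (Lemma~\ref{lemma:tool}), exhibit a bad task, again contradicting Assumption~\ref{ass1}. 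Fixing $t^{\star}\in(\alpha,\min(\psi_{p_k'}(1),\alpha+\beta))$ and letting $T^{\star}$ be $\tilde T_0$ with $t_{p_k'}=t^{\star}$, task independence of the (unchanged) $(p_k,p_k')$-slice keeps both $p_k$ and $p_k'$ on the $0$-player in $T^{\star}$, and $T^{\star}$ has precisely the shape required by Lemma~\ref{lemma:new-witness}.

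The heart of the argument is showing that $p_1,\dots,p_{k-1}$ also stay on the $0$-player in $T^{\star}$; here the remaining instances $T_1,\dots,T_{k-1}$ come in, since weak monotonicity for the $0$-player across the single-coordinate move $\tilde T_0\to T^{\star}$ only constrains $a_{0,p_k'}$. For fixed $i\in[k-1]$ the pair $(T_0,T_i)$ differs only in $t_{p_i}$ and both have a (relaxed) task independent $(p_k,p_k')$-slice, so the allocations of $p_k$ and $p_k'$ are blind to $t_{p_i}$ along the relevant segment. Mirroring the bundling-boundary analysis of Lemma~\ref{lemma:magic-affine minimizer} on the $(p_i,p_k)$-slice — in which, by Lemma~\ref{lemma:all-tasks-to-s}, the region where both tasks lie on the $0$-player contains the ``$t_{p_k}=\beta$'' points and its complement the ``$t_{p_k}\approx\alpha$'' points in a band of $t_{p_i}$-values around $\alpha$, forcing a ($-1$-sloped) bundling boundary between $p_i$ and $p_k$ there — one obtains that, for $t_{p_i}$ in that band, ``$p_i$ on the $0$-player'' is linked to ``$p_k$ on the $0$-player''. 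Since $t_{p_k}=\beta<\psi_{p_k}(1)$ in $T^{\star}$ puts $p_k$ on the $0$-player, this link puts $p_i$ there too; otherwise one slides the boundary downwards (using the bounded slope of the boundaries, as in Lemma~\ref{lemma:magic-affine minimizer}) and contradicts the goodness of $P_{-i}$, or, via the second clause of Definition~\ref{def:potentially-good}, the goodness of $P_{-k}$ at $[\,t_{p_k}=\delta,\,s_{p_k}=q^{\star}\delta/(2n)\,]$. Carrying this out for every $i\in[k-1]$ and using Lemma~\ref{lemma:tool} to freeze the allocations already fixed, all of $p_1,\dots,p_{k-1},p_k'$ sit on the $0$-player in $T^{\star}$, and Lemma~\ref{lemma:new-witness} closes the proof.

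The step I expect to be the real obstacle is this last one — decoupling the fates of $p_1,\dots,p_{k-1}$ from the manipulation of $p_k$ and $p_k'$ — because weak monotonicity applied to the $0$-player only controls the perturbed coordinate, so one genuinely needs the combined structure (task independence of the $(p_k,p_k')$-slice across all $k$ instances, the minimal-set property of Lemma~\ref{lemma:all-tasks-to-s}, and the potentially-good hypothesis with its auxiliary second clause) to rule out collateral reallocations of $P_{-k}$. The relaxed task independent singular sets, and the fact that $\hat T((P_{-k},p_k'))$ sits on the boundary of the $V$-perturbation region of $\hat T(P)$, are the technical nuisances, to be absorbed by the ``$1-\epsilon$'' bad-task convention and a careful choice of the output witness.
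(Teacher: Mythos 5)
Your overall plan matches the paper's: reduce via Lemma~\ref{lemma:new-witness} to exhibiting one instance in which all of $P_{-k}\cup\{p_k'\}$ is on the $0$-player; build it from $\tilde T_0$ (i.e.\ $T_0$ with $t_{p_k}$ lowered to $\beta$); use task independence of the $(p_k,p_k')$-slice to keep $\psi_{p_k}$ rigid and the bundling boundary in the $(p_i,p_k)$-slice to drag $p_i$ along with $p_k$. However, the proposed proof has an \emph{ordering} gap that makes it circular. You attempt to establish $\psi_{p_k'}(1)>\alpha$ \emph{before} running the bundling argument, so as to then pick $t^\star<\psi_{p_k'}(1)$ and have $p_k'$ automatically on the $0$-player in $T^\star$. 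But your sketched justification of $\psi_{p_k'}(1)>\alpha$ in the intermediate case $\beta<\psi_{p_k'}(1)\le\alpha$ (``through the minimal-set clause of Lemma~\ref{lemma:all-tasks-to-s} and weak monotonicity, exhibit a bad task'') does not go through at that stage: to push the configuration down to a bad-task witness for $\{p_k'\}$ via Lemma~\ref{lemma:tool}, you need to lower all $t_{p_i}$, $i\in[k-1]$, from near $\alpha$ to $\beta$ while raising $t_{p_k'}$; that requires knowing that $P_{-k}$ sits on the $0$-player in the instance with $t_{p_k}=\beta$ and $t_{p_k'}$ just above $\alpha$, which is precisely what the bundling step is supposed to deliver. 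The paper avoids this by reversing the order: first the bundling-boundary argument places $P_{-k}$ and $p_k$ on the $0$-player in the target (with $t_{p_k'}=\alpha+\epsilon$, \emph{without} claiming anything about $p_k'$ yet), and only then does the bad-task reduction (lower $P_{-k}$ to $\beta$, raise $t_{p_k'}$, apply Lemma~\ref{lemma:tool}) handle $p_k'$.

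Two smaller confusions are worth flagging. First, task independence of the $(p_k,p_k')$-slice does not make ``the allocations of $p_k$ and $p_k'$ blind to $t_{p_i}$'' — $t_{p_i}$ is external to that slice; what task independence gives is that $\psi_{p_k}(\cdot)$ is constant in $t_{p_k'}$ (and $s_{p_k'}$), which pins the two endpoints $\psi_{p_k}(\alpha')$, $\psi_{p_k}(\alpha'')$ of the $-1$-sloped bundling segment in the $(p_i,p_k)$-slice and hence (via the $1$-Lipschitz property of boundaries) the whole segment. Second, the aside about ``sliding the boundary downwards \dots contradicting the goodness of $P_{-k}$ at $[t_{p_k}=\delta,\,s_{p_k}=q^\star\delta/(2n)]$'' belongs to the affine-minimizer case (Lemma~\ref{lemma:magic-affine minimizer}); there is no analogous sliding in the task-independent case, and invoking the second clause of Definition~\ref{def:potentially-good} here is spurious. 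If you reorder as above and drop those two digressions, the argument becomes the paper's proof.
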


\begin{proof}
  Take the instances $T_0$ and $T_i$, for some $i\in [k-1]$. The two
  instances agree on all values except for their $t$-values of task
  $p_i$. Let $\alpha'$ and $\alpha''$ denote these two different
  values for $T_0$ and $T_i$, respectively, with
  $\alpha'<\alpha''$. Let also $\alpha_k=t_{p_k}$.
  We assume that both $(p_k,p_k')$-slice mechanisms for $T_0$ and
  $T_i$ are task independent and at the end of the proof we show how
  to adapt the argument to deal with relaxed task independent
  mechanisms.

  We first establish that when we change the value of $t_{p_k}$ in
  $T_0$ to $\beta$ and the value of $t_{p_k'}$ to $\alpha+\epsilon$,
  for some $\epsilon\in(0,\beta)$, all tasks in $P_{-k}$ are allocated
  to the 0-player.
  To do this, we focus on the $(p_i,p_k)$-slice mechanism for $T_0$ and observe (Lemma~\ref{lemma:all-tasks-to-s}) that there is a bundling boundary between tasks $p_i,p_k$ when $t_{p_i}\in (\alpha',\alpha'')$ (see Figure~\ref{fig:magic-task-independent}). 
  Now, the point is that since the $(p_k,p_k')$-slice mechanisms are task independent, the boundary of task $p_k$ does not change when we change $t_{p_k}$ and $t_{p_k}'$, and because it is tightly connected with the bundling boundary of task $p_i$, the boundary of $p_i$ does not change either (see Figure~\ref{fig:magic-task-independent}). In particular, if we change $T_0$ so that $t_{p_k}=\beta$ and $t_{p_k'}=\alpha+\epsilon$, then task $p_k$ is allocated to the 0-player, and because it is bundled with $p_i$, task $p_i$ is also allocated to the 0-player. This is true for all $p_i\in P_{-k}$, so we established that all tasks in $P_{-k}$ are allocated to the 0-player.

\begin{figure}
  \centering
  \begin{tikzpicture}

    \draw[->] (0,0) -- (6,0) node[anchor=north] {$t_{p_k}$};
    \draw[->] (0,0) -- (0,6) node[anchor=east] {$t_{p_i}$};
    \draw[very thick, blue] (0,3.5) -- (2,3.5) -- (4,1.5) -- (5,1.5) node[anchor=west] {$\psi_{p_i}(t_{p_k})$} ;

    \draw[ blue, dashed] (0,2.4) node[anchor=east] {$\alpha'$} -- (5.6,2.4) ;

    \draw[ blue, dashed] (0,3.3) node[anchor=east] {$\alpha''$} -- (5.6,3.3) ;

    \draw[ blue, dashed] (2.1,3.3) node[anchor={south west}] {$\psi_{p_k}(\alpha'')$} ;

    \draw[ blue, dashed] (2.9,2.4) node[anchor={south west}] {$\psi_{p_k}(\alpha')$} ;

    \draw[ blue, dashed] (3.4,0) node[anchor=north] {$\alpha_k$} -- (3.4,5.6) ; \draw[ blue, dashed] (0.2,0) node[anchor=north] {$\beta$} -- (0.2,6) ;

    \draw[very thick, blue] (4,0) -- (4, 1.58) -- (2, 3.58) -- (2, 5.5) node[anchor={south}] {$\psi_{p_k}(t_{p_i})$} ;

    \draw[ultra thick,black] (2.2,3.3) circle (1mm); \draw[ultra thick,black] (3.1,2.42) circle (1mm);

    \draw[ultra thick, red] (3.1, 2.42) -- (2.2, 3.3) (1, 4) %
    ;

  \end{tikzpicture}

  \caption{\small Illustration for the proof of Lemma~\ref{lemma:magic-task-independent}. The blue figure shows the allocation of the $0$-player for tasks $p_{i},p_{k}$, when the values of task $p'_k$ are as in $T_0$. The black circles show the bundling boundary points $\psi_{p_k}(\alpha',1)$, $\psi_{p_k}(\alpha'',1)$. By independence, the $\psi_{p_k}$ boundary does not change when $t_{p_k'}$ changes. But since the boundary functions $\psi_{p_k}$ and $\psi_{p_i}$ must agree on the red part of $\psi_{p_k}$, this part remains the same for both $\psi_{p_k}$ and $\psi_{p_i}$. As a result, when $t_{p_i}=\alpha'$, $t_{p_k}=\beta$ and $t_{p_k'}=\alpha+\epsilon$ (not shown in this figure), both tasks $p_i$ and $p_k$ are given to the $0$-player.}
  \label{fig:magic-task-independent}
\end{figure}
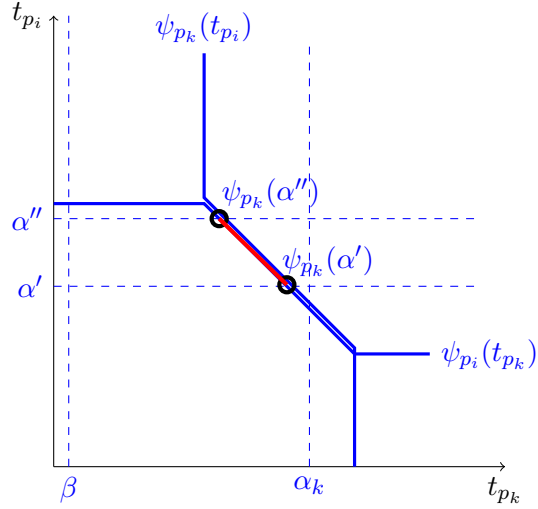

Next we argue that task $p_k'$ is also assigned to the 0-player,
otherwise $\{p_k'\}$ is a bad task, a contradiction. Indeed if task
$p_k'$ is assigned to the $k$-player, we can set $t_{p_i}=\beta$, for
each task $p_i$ in $P_{-k}$ and increase slightly the $t$-value of task
$p_k'$. By weak monotonicity, the allocation of the tasks in $P$ will
remain the same.

We have established that all tasks in $(P_{-k},p_k')$ have values in $(\alpha,
\alpha+\beta)$ and are allocated to the 0-player. By
Lemma~\ref{lemma:new-witness}, the set $(P_{-k},p_k')$ is a good set
for $T$.

The case of relaxed task independent mechanisms is not really different. Recall that a relaxed task independent mechanism behaves like a task independent mechanism in all but countably many values of $s'_{p_k}$. Therefore, there is some arbitrarily small $\epsilon>0$ and some value of $s_{p'_k}$ in $(1-\epsilon,1]$ for which the mechanism is task independent and the above argument works\footnote{This is the only place where the definition of a bad task includes also the case of an $s$-value $1-\epsilon$. }. In order to establish that $P_{-k}+\{p'_k\}$ is good we need to reset the value of $s_{p'_k}$ back to $1$. When we do this, by the weak monotonicity property for player $k$, task $p'_k$ is still allocated to the $0$-player. The allocation for tasks in $P_{-k}$ should also remain the same, for otherwise, we can apply weak monotonicity for the 0-player to contradict the assumption that $P_{-k}$ is good as follows. The trick is the same, by changing the values of tasks in $(P_{-k},p_k')$: lowering to $\beta$ the $t$-value of $p'_k$ and increasing slightly the $t$-values of all tasks in $P$ that are not allocated to the 0-player.
\end{proof}

Taking advantage of the above lemma, we now show that either a
potentially-good set of tasks is also good or there are many other
good sets.

\begin{lemma} \label{lemma:many-good-sets} Suppose that
  $P=(p_1,\ldots,p_k)$ is a potentially-good set of tasks for $T$ with
  witness $V$. Then either $P$ is good for $T$ or the probability that
  $(P_{-k},p_k')$ is good for $T$, for a random sibling $p_k'$ of
  $p_k$, is at least $1-2n^3/\ell$.
\end{lemma}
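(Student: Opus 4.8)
The plan is to assume $P$ is potentially-good but not good for $T$ (otherwise there is nothing to prove), and, working under Assumption~\ref{ass1}, to exhibit for almost every sibling $p_k'$ of $p_k$ a family of instances $T_0,\dots,T_{k-1}$ satisfying the hypotheses of Lemma~\ref{lemma:magic-task-independent}; this will give that $(P_{-k},p_k')$ is good. Note that for $k=1$ there is nothing to do, since a potentially-good singleton that is not good would be a bad task, excluded by Assumption~\ref{ass1}, so I would assume $k\ge 2$. The argument splits into setting up a convenient witness together with the $k$ instances, and then a union bound over these instances based on the $2\times 2$ characterization.

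First I would shrink the witness. By Lemma~\ref{lemma:prob-of-linear-strong} there is a sub-box $V'$ of $V$ such that no instance in the $V'$-perturbations of $\hat T(P)$ has a linear boundary function $\psi_{p_k}(s_{p_k})$; intersecting componentwise each of the goodness witnesses of the sets $P_{-j}$ with $V'$ keeps them goodness witnesses, so the induced witness of potential goodness of $P$ is a sub-box of $V'$, and I would rename it $V$. Then I would fix once and for all $k$ instances in the $V$-perturbations of $\hat T(P)$: writing $V_{p_i}=(\alpha,\alpha+\theta_i)$, let $T_0$ have $t_{p_i}=\alpha+\theta_i/3$ for every $i\in[k]$, and for $i\in[k-1]$ let $T_i$ agree with $T_0$ except that $t_{p_i}=\alpha+2\theta_i/3$. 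Each $T_j$ lies in the $V$-perturbations, and for $i\in[k-1]$ the instance $T_i$ differs from $T_0$ only in $t_{p_i}$, whose value is strictly larger --- exactly the first two requirements of Lemma~\ref{lemma:magic-task-independent}.

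Next I would run a union bound over $j\in\{0,\dots,k-1\}$. Pick a random sibling $p_k'$ of $p_k$ and consider the $(p_k,p_k')$-slice mechanism for $T_j$; by Theorem~\ref{theo:addchar} it has one of the four types. Lemma~\ref{lemma:prob-of-1dim} rules out $1$-dimensional and constant slice mechanisms outright, since they would force approximation ratio at least $\rho$. If the slice mechanism for $T_j$ is an affine or a relaxed affine minimizer, then by Lemma~\ref{lemma:all-tasks-to-s} (applicable as $P$ is potentially-good but not good and there is no bad task) task $p_k$ is allocated to the $k$-player in $T_j$; were $p_k'$ also allocated to the $0$-player, the slice mechanism would give $p_k'$ but not $p_k$ to the $0$-player, whence Lemma~\ref{lemma:2x2-relaxed-aff-min} (applicable because $t_{p_k'}=\beta>0$) would make $\psi_{p_k}(s_{p_k})$ linear, contradicting the choice of witness. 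Hence, for each $j$, the event ``the $(p_k,p_k')$-slice mechanism for $T_j$ is neither task independent nor relaxed task independent'' is contained in ``$p_k'$ is allocated to the $k$-player in $T_j$'', which by Lemma~\ref{lemma:n-tasks-to-s} has probability below $2n^2/\ell$ (there are $\ell$ siblings of $p_k$, fewer than $2n^2$ of which go to the $k$-player). A union bound over the $k<n$ instances gives that with probability at least $1-k\cdot 2n^2/\ell\ge 1-2n^3/\ell$ the $(p_k,p_k')$-slice mechanism is task independent or relaxed task independent for every $T_j$; for such $p_k'$, Lemma~\ref{lemma:magic-task-independent} (its remaining hypotheses, no bad task and approximation ratio below $\rho$, being Assumption~\ref{ass1}) yields that $(P_{-k},p_k')$ is good for $T$, which is the claimed bound.

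The hard part will be the affine/relaxed affine minimizer case, and in particular realizing that one must first pass, via Lemma~\ref{lemma:prob-of-linear-strong}, to a witness containing no instance with a linear $\psi_{p_k}$: the ``probability $0$'' statement of Lemma~\ref{lemma:prob-of-linear} is about a random instance and is vacuous for the \emph{fixed} instances $T_j$ used here, so without this preprocessing there is no contradiction to be had from ``the slice is a relaxed affine minimizer with $p_k'$ going to the $0$-player''. A secondary point requiring care is verifying that the witness-shrinking preserves ``$P$ is potentially-good but not good with witness $V$'', so that Lemmas~\ref{lemma:all-tasks-to-s},~\ref{lemma:n-tasks-to-s} and~\ref{lemma:magic-task-independent} stay applicable.
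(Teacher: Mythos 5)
Your proof is correct and differs from the paper's in one meaningful step. The paper picks the instances $T_0,\dots,T_{k-1}$ \emph{at random} from $V$ along with the random sibling $p_k'$, and bounds $\Pr[{\cal E}_{j,p_k'}]$ by combining Lemma~\ref{lemma:prob-of-relaxed-aff-min} (whose proof invokes the measure-zero statement of Lemma~\ref{lemma:prob-of-linear}) with Lemma~\ref{lemma:prob-of-1dim}; the union bound and Lemma~\ref{lemma:magic-task-independent} then conclude as you do. You instead derandomize: first shrink the witness via Lemma~\ref{lemma:prob-of-linear-strong} so that no instance in it has a linear $\psi_{p_k}(s_{p_k})$, then fix the $T_j$ deterministically inside the shrunk witness, and rule out the (relaxed) affine minimizer case by a direct containment --- Lemma~\ref{lemma:all-tasks-to-s} places $p_k$ with the $s$-player, so if $p_k'$ also went to the $0$-player then Lemma~\ref{lemma:2x2-relaxed-aff-min} would make $\psi_{p_k}$ linear, contradicting the choice of witness --- so the bad event for each $T_j$ is contained in ``$p_k'$ is assigned to the $k$-player'', bounded directly by Lemma~\ref{lemma:n-tasks-to-s}. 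This avoids the volume/measure-zero argument entirely and carries out the derandomization that the paper mentions in passing after Lemma~\ref{lemma:magic-affine minimizer - strong} but does not actually use in this proof. The one point that genuinely requires care, and which you do address, is that shrinking each goodness witness $V^j$ componentwise preserves goodness (and hence potential goodness of $P$), and that ``not good'' is unaffected by the choice of witness, so Lemmas~\ref{lemma:all-tasks-to-s},~\ref{lemma:n-tasks-to-s} and~\ref{lemma:magic-task-independent} remain applicable with the shrunk witness.
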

\begin{proof}
  Suppose that $P$ is not good. Take a random sibling $p_k'$ of $p_k$,
  and random instances $T_i$, $i=0,\ldots,k-1$ from $V$ such that
  $T_0$ and $T_i$ differ in only the value of $p_i$, as in
  Lemma~\ref{lemma:magic-task-independent}. Let ${\cal E}_{i,p_k'}$ be
  the event that the $(p_k,p_k')$-slice mechanism is not (relaxed)
  task independent for $T_i$, $i=0,\ldots,k-1$. By combining Lemmas
  \ref{lemma:prob-of-relaxed-aff-min} and \ref{lemma:prob-of-1dim},
  the probability of each event ${\cal E}_{i,p_k'}$ is at most
  $2n^2/\ell$, so the probability of the event $\cup_{i=0,\ldots,k-1}
  {\cal E}_{i,p_k'}$ is at most $k \cdot 2n^2/\ell\leq
  2n^3/\ell$. Therefore with probability at least $1-2n^3/\ell$, all
  these $(p_k,p_k')$-slice mechanisms are task independent or relaxed
  task independent and therefore $(P_{-k},p_k')$ is good.

\end{proof}

\subsection{Existence of a good set of tasks of size $n-1$}
\label{sec:sum-up}

The above analysis culminated in Lemma~\ref{lemma:many-good-sets}. In this
subsection, we use this lemma to show the existence of a good set of tasks of
size $n-1$ (Corollary~\ref{cor:ell}), which immediately concludes the proof of
the Main Lemma (Lemma~\ref{lemma:main}). We will use a probabilistic argument
again, so we first define the probability $b_k$ that a regular set of $k$ tasks
is bad.
\begin{definition}[Probability $b_k$]
  Fix a mechanism and suppose that $T$ is a standard instance for a set $\cal C$ of $k$ clusters. %
  Let us denote by $b(T,{\cal C})$ the probability that a random regular set $P$ of $k$ tasks from $\cal C$ is not good and let $b_k$ denote the maximum possible $b(T,{\cal C})$ ---or more precisely, its supremum--- for all choices of $T$ and $\cal C$, for the given mechanism.
\end{definition}

First, we find an upper bound on the probability that a random regular
set of tasks of size $k$ is not potentially-good, based on the
probability that a set of size $k-1$ is not good.

\begin{lemma} \label{lemma:potentially-good} Fix any instance $T$
  which is standard for a set $\cal C$ of $k\geq 2$ clusters, and let
  $P$ be a random regular set of $k$ tasks from $\cal C$. The
  probability that $P$ is not potentially-good for $T$ is at most
  $(3n/\delta-1) b_{k-1}$.
\end{lemma}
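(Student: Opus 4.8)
The plan is to bound the probability that a random regular set $P=(p_1,\ldots,p_k)$ fails to be potentially-good by identifying, for each of the conditions in Definition~\ref{def:potentially-good}, the ``bad events'' that cause failure and then applying a union bound. Recall that $P$ is potentially-good for the standard instance $T$ precisely when (a) for every $i\in[k]$, the subset $P_{-i}=P\setminus\{p_i\}$ is good for $T$, and (b) $P_{-k}$ is good for every one of the $1+2n/\delta$ instances obtained from $T$ by resetting the values of $p_k$ to $[t_{p_k}=\delta,\,s_{p_k}=q\delta/(2n)]$, $q=0,1,\ldots,2n/\delta$. Observe that when we reset $p_k$ this way, its cluster becomes trivial (this is exactly why $\delta'=2\delta$ was chosen; the cluster cost is at most $\delta+(\ell+1)\beta\le\delta'$), so each of these modified instances is again standard for the set of $k-1$ clusters containing $P_{-k}$, and hence the probability that $P_{-k}$ is not good for any one fixed such instance is at most $b_{k-1}$.

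First I would handle condition (a). Since $T$ is standard for the set of clusters intersecting $P$, and $P_{-i}$ is a regular set of $k-1$ tasks drawn (marginally) uniformly from $k-1$ of those clusters, the instance $T$ restricted to the relevant $k-1$ clusters is standard for $P_{-i}$; thus $\Pr[P_{-i}\text{ not good for }T]\le b_{k-1}$ for each of the $k$ choices of $i$. This contributes at most $k\cdot b_{k-1}$ to the failure probability. Next, for condition (b), fix the resetting parameter $q$; the modified instance $T^{(q)}$ is standard for $P_{-k}$ and so $\Pr[P_{-k}\text{ not good for }T^{(q)}]\le b_{k-1}$; summing over the $1+2n/\delta$ values of $q$ gives at most $(1+2n/\delta)\,b_{k-1}$. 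Adding the two contributions, the probability that $P$ is not potentially-good is at most $\bigl(k + 1 + 2n/\delta\bigr) b_{k-1}$, and since $k\le n$ and (using $2n/\delta$ an integer, $n\ge 1$) $n+1+2n/\delta \le 3n/\delta - 1$ for the relevant parameter regime (recall $\delta$ is tiny, e.g. $\delta\approx n^{-2}$, so $3n/\delta-1$ dominates by a wide margin), we obtain the claimed bound $(3n/\delta - 1)\,b_{k-1}$.

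The one point that needs a little care — and which I expect to be the main (though modest) obstacle — is verifying that each of the auxiliary instances appearing in the two conditions really is \emph{standard} for the appropriate set of $k-1$ clusters, so that the defining property of $b_{k-1}$ (which is a supremum over \emph{all} standard instances and \emph{all} sets of $k-1$ clusters for the given mechanism) can legitimately be invoked. For condition (a) this is immediate because deleting $p_i$ from $P$ and ignoring its cluster leaves $T$ standard for the remaining $k-1$ clusters (that cluster's tasks still have values $[\beta,1]$, but it is simply not among the $k-1$ clusters we now focus on — its status is irrelevant). For condition (b) one must check that after resetting $p_k$, \emph{the cluster $Q_k$ of $p_k$ becomes trivial}: its optimal cost is at most $t_{p_k}+\sum_{j\in Q_k,\,j\ne p_k} t_j = \delta + \ell\beta \le \delta' = 2\delta$ by the choice of $\ell$ and $\beta$, and all other clusters not intersecting $P_{-k}$ were already trivial in $T$. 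Hence $T^{(q)}$ is standard for $P_{-k}$. Once these bookkeeping facts are in place, the union bound is routine, and I would present it in the order: condition (a), then condition (b), then combine.
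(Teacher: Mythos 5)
Your proposal is correct and follows essentially the same argument as the paper: a union bound over the $k$ conditions that $P_{-i}$ be good and the $1+2n/\delta$ conditions that $P_{-k}$ remain good after resetting $p_k$, each bounded by $b_{k-1}$. Your additional verification that the auxiliary instances remain standard (in particular that $Q_k$ becomes trivial after the reset, and that $Q_i$ is already trivial when viewing $T$ as standard for the remaining $k-1$ clusters) is implicit in the paper's one-line argument but correctly spelled out.
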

\begin{proof}
  Take a random set $P=(p_1,\ldots,p_k)$ of $k$ regular tasks from
  $\cal C$. By definition, the probability that $P_{-i}$ is not good
  for $T$ is at most $b_{k-1}$, for every $i\in [k]$. Also the probability that $P_{-k}$ is
  not good when we set task $p_k$ to $[t_{p_k}=\delta, \, s_{p_k}=q
  \delta/(2n)]$ is also at most $b_{k-1}$, for each $q=0,1,\ldots,
  2n/\delta$. By the union bound, the probability that some of these
  bad events happens is at most\footnote{The expression $3n/\delta-1$
    may appear arbitrary here, but it is a convenient upper bound to
    simplify the expressions of the next lemma.}
  $(k+2n/\delta+1)b_{k-1}\leq (n+2n/\delta)b_{k-1} \leq (3n/\delta-1)
  b_{k-1}$ (since $\delta \ll 1$).
\end{proof}

The following lemma estimates a rough upper bound on the probability that a random regular set is not good. It shows that if $\ell$ has exponential size in $n$, the probability that there exists a good regular set of $n-1$ tasks is positive.

\begin{lemma}[Probability of bad sets]\label{lemma:prob-of-bad}
  Fix a mechanism with approximation ratio less than $\rho$ and suppose that
  there is no bad task. Then the probabilities of bad events are bounded above
  by
  \begin{align} \label{eq:bk} b_k\leq \left(\frac{3n}{\delta}\right)^{k-2} \frac{3n^3}{\ell},
  \end{align}
  for every $k\in [n-1]$.
\end{lemma}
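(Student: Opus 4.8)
The plan is to prove the bound~\eqref{eq:bk} by induction on $k$, using Lemma~\ref{lemma:potentially-good} (Fact a), Lemma~\ref{lemma:many-good-sets} (Fact b), and the base case $b_1=0$ that follows from Assumption~\ref{ass1} (no bad task). First I would set up the recursion: for a standard instance $T$ on $k$ clusters $\cal C$, I draw a random regular set $P=(p_1,\dots,p_k)$. I split the bad event ``$P$ is not good'' into two parts: either $P$ fails to be potentially-good, or $P$ is potentially-good but still not good. By Lemma~\ref{lemma:potentially-good} the first has probability at most $(3n/\delta-1)b_{k-1}$. For the second part, I would condition on $P$ being potentially-good but not good; then Lemma~\ref{lemma:many-good-sets} says that for a uniformly random sibling $p_k'$ of $p_k$, the set $(P_{-k},p_k')$ is good with probability at least $1-2n^3/\ell$, i.e.\ fails with probability at most $2n^3/\ell$.

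The key point is to convert the ``random sibling'' statement of Lemma~\ref{lemma:many-good-sets} into a bound on $b_k$. Here I would use that $(P_{-k},p_k')$ is itself a regular set of $k$ tasks from $\cal C$ (same clusters), and that replacing $p_k$ by a uniformly random sibling $p_k'$, given that $P_{-k}$ was already uniformly random among $(k-1)$-subsets of its clusters and $p_k$ was uniform in its cluster, keeps the whole $k$-set uniformly distributed. So for the uniformly random set $(P_{-k},p_k')$, the probability of being not good is at most $b_k$ by definition of $b_k$ --- but also, conditioned on $P$ potentially-good-but-not-good, it is at most $2n^3/\ell$. The cleanest way to combine these is: the probability that $P$ is (potentially-good but not good) \emph{and} $(P_{-k},p_k')$ is not good is at most $\Pr[P \text{ pot.\ good, not good}]\cdot 2n^3/\ell \le 2n^3/\ell$; meanwhile, for $(P_{-k},p_k')$ to witness goodness we only needed $P$ potentially good. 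So I would argue: whenever $P$ is not good, either $P$ is not potentially good (prob.\ $\le (3n/\delta-1)b_{k-1}$), or $P$ is potentially good but not good, in which case for a random sibling $(P_{-k},p_k')$ is not good with prob.\ $\le 2n^3/\ell$; but $(P_{-k},p_k')$ ranges uniformly over regular $k$-sets, so its not-good probability is also $\le b_k$. This gives
\begin{align*}
  b_k \le (3n/\delta-1)\, b_{k-1} + 2n^3/\ell.
\end{align*}
Actually the argument is slightly more subtle --- one has to be careful that the two ``bad'' sub-events are on different random objects ($P$ vs.\ $(P_{-k},p_k')$). The right framing: fix the worst $T,\cal C$; then $b_k = \Pr[P\text{ not good}] \le \Pr[P\text{ not pot.\ good}] + \Pr[P\text{ pot.\ good, not good}]$, and for the second term, use that on this event a random sibling repair $(P_{-k},p_k')$ fails with prob.\ $\le 2n^3/\ell$, while simultaneously $(P_{-k},p_k')$ is a uniform regular $k$-set whose failure prob.\ is $b(T,\cal C)\le b_k$; taking the two estimates together over the joint randomness yields $\Pr[P\text{ pot.\ good, not good}]\le 2n^3/\ell + (\text{something} \le b_k \cdot 0)$... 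I expect the authors instead phrase it directly as: the probability that a random $k$-set is not good is at most the probability it is not potentially good, plus the max over potentially-good-not-good $P$ of the sibling-failure probability, hence $b_k \le (3n/\delta-1)b_{k-1} + 2n^3/\ell$.

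Finally I would solve the recursion with $b_1=0$: unrolling, $b_k \le \sum_{j=2}^{k} (3n/\delta-1)^{\,k-j}\,(2n^3/\ell) \le (2n^3/\ell)\sum_{i=0}^{k-2}(3n/\delta)^i \le (2n^3/\ell)\cdot \frac{(3n/\delta)^{k-1}}{3n/\delta - 1} \le (3n^3/\ell)(3n/\delta)^{k-2}$, where the last step uses $3n/\delta - 1 \ge \tfrac{2}{3}(3n/\delta)$ for $\delta\ll 1$ (so $\tfrac{2}{1}\cdot\tfrac{1}{3n/\delta-1}\cdot(3n/\delta) \le 3$). That matches~\eqref{eq:bk}. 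The main obstacle I anticipate is not any hard estimate but getting the probabilistic bookkeeping exactly right in the inductive step --- in particular, justifying cleanly that the ``sibling-repaired'' set $(P_{-k},p_k')$ is distributed as a uniform regular $k$-set so that its failure probability is governed by $b_k$, and handling the interaction between the randomness over $P$ and over $p_k'$ without circularity. Once that is pinned down, the geometric-series arithmetic is routine.
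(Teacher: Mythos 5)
Your overall plan matches the paper's: induction with $b_1=0$, the decomposition into ``not potentially-good'' (bounded by Lemma~\ref{lemma:potentially-good}) plus ``potentially-good but not good'' (bounded via Lemma~\ref{lemma:many-good-sets}), followed by unrolling the linear recursion. The unrolling arithmetic at the end is also fine. However, the middle step --- converting the ``random sibling'' probability of Lemma~\ref{lemma:many-good-sets} into a bound on $\Pr[P\ \text{pot.-good but not good}]$ --- is the genuine crux, and you flag it as subtle but never actually close it. Your attempts run into a circularity (using $b_k$ to bound $b_k$) or rely on conditioning that is not clearly resolved. The paper sidesteps all distributional bookkeeping by a direct counting argument: fix the $(k-1)$-set $P_{-k}$, enumerate the $\ell+1$ extensions by $p_k\in Q_k$, let $y$ be the number of potentially-good ones, and observe that either all $y$ are good, or one is not and then Lemma~\ref{lemma:many-good-sets} guarantees at least $\ell-2n^3$ good extensions. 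Hence the number of good extensions is at least $\min\{y,\ \ell-2n^3\}\geq y\bigl(1-3n^3/(\ell+1)\bigr)\geq y(1-3n^3/\ell)$, so
\[
\Pr[P\ \text{good}]\geq \Pr[P\ \text{pot.-good}]\cdot\bigl(1-3n^3/\ell\bigr),
\]
which with Lemma~\ref{lemma:potentially-good} gives $b_k\leq (3n/\delta-1)b_{k-1}+3n^3/\ell$ --- no uniformity of $(P_{-k},p_k')$ needed.

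A second, smaller issue: your recursion constant $2n^3/\ell$ is not actually justified, and is slightly too small. Lemma~\ref{lemma:many-good-sets} gives at least $\ell-2n^3$ good sets among the $\ell$ siblings $p_k'\neq p_k$, so among all $\ell+1$ extensions of $P_{-k}$ (including $p_k$ itself) at most $2n^3+1$ can be potentially-good but not good; the resulting probability bound is $(2n^3+1)/(\ell+1)$, which exceeds $2n^3/\ell$ when $\ell>2n^3$. The paper safely rounds this up to $3n^3/\ell$. Fortunately your final estimate $b_k\leq(3n^3/\ell)(3n/\delta)^{k-2}$ survives this correction, but you should carry $3n^3/\ell$ through the recursion rather than $2n^3/\ell$.
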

\begin{proof}
  By the premises of the lemma, there is no bad task, and hence $b_1=0$. We first show that for $k\geq 2$,
  \begin{align} \label{eq:bk-rec} b_k\leq \left(\frac{3n}{\delta} -1\right) b_{k-1}+\frac{3n^3}{\ell}.
  \end{align}

  Fix $k$ distinct clusters $(Q_1,\ldots,Q_k)$ and consider a regular set $P=\{p_1,\ldots, p_{k-1}\}$ with $p_i\in Q_i.$ %
  Let's call the sets $P+\{p_k\}$, $p_k\in Q_k$, extensions of $P$.

  Let $y\in [\ell+1]$ be the number of the extensions of $P$ that are
  potentially-good. How many of these extensions are good? Either all $y$
  extensions are good, in which case the number of good extensions is at least
  $y$ or some extension is not good and $\ell-2n^3$ other extensions are good
  (Lemma~\ref{lemma:many-good-sets}). Therefore the number of extensions that
  are good is at least
  $\min\{y, \ell-2n^3\}\geq y (1-3n^3/(\ell+1)) \geq y (1-3n^3/\ell)$.  In other
  words, the probability that a random regular set of size $k$ is good is at
  least $1-3n^3/\ell$ times the probability that a random regular set of size
  $k$ is potentially-good. By Lemma~\ref{lemma:potentially-good}, the
  probability that a random regular set of size $k$ is potentially-good is at
  least $1-(3n/\delta - 1) b_{k-1}$. Using the union bound we get
  Equation~\eqref{eq:bk-rec}.

  We now use Equation~\eqref{eq:bk-rec} to derive the expressions of the lemma.
  Note first that since $b_1=0$, Equation~\eqref{eq:bk-rec} gives $b_2\leq 3n^3/\ell$.

  For $k\geq 3$, using Equation~\eqref{eq:bk-rec} inductively we get the bound:
  \begin{align*}
    b_k\leq \left(\frac{3n}{\delta}-1\right) b_{k-1}+\frac{3n^3}{\ell} \leq
    \frac{3n^3}{\ell} \left(\frac{3n}{\delta}\right)^{k-2}.
  \end{align*}
  The lemma follows by observing that Equation~\eqref{eq:bk} holds also for
  $k=1$ and $k=2$.
\end{proof}

Using this, we can now reach the target of this section, showing that there exists a good set of size $n-1$.
\begin{corollary} \label{cor:ell} If $\ell > 3n^3 \left(\frac{3n}{\delta}\right)^{n-3}$, we get $b_{n-1} < 1$, and there exists at least one good set of $n-1$ tasks.
\end{corollary}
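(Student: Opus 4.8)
The plan is to read the conclusion straight off Lemma~\ref{lemma:prob-of-bad} by a short computation, since all the substantive work has already been done in the previous subsections. First I would instantiate the bound~\eqref{eq:bk} at $k=n-1$, which gives
$$b_{n-1}\leq \left(\frac{3n}{\delta}\right)^{n-3}\frac{3n^3}{\ell}.$$
Then, plugging in the hypothesis $\ell>3n^3\left(\frac{3n}{\delta}\right)^{n-3}$, the right-hand side is strictly smaller than
$$\left(\frac{3n}{\delta}\right)^{n-3}\cdot\frac{3n^3}{3n^3\left(\frac{3n}{\delta}\right)^{n-3}}=1,$$
so $b_{n-1}<1$.

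It then remains to turn this numerical fact into the claimed existence statement. Recall that $b_{n-1}$ is, by definition, the supremum over all pairs $(T,\cal C)$ with $T$ a standard instance for a set $\cal C$ of $n-1$ clusters, of the probability $b(T,{\cal C})$ that a uniformly random regular set of $n-1$ tasks drawn from $\cal C$ fails to be good for $T$. Fixing any such standard $T$ and $\cal C$ (such a pair exists by the definition of a standard instance), we have $b(T,{\cal C})\leq b_{n-1}<1$, so the probability that a random regular set of $n-1$ tasks is good for $T$ is strictly positive; in particular at least one good set of $n-1$ tasks exists. Combined with Assumption~\ref{ass1}, this yields Property~(iii) of the Main Lemma~\ref{lemma:main}.

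I do not expect any real obstacle in this step: it is pure bookkeeping, and the only thing worth a glance is that the index range $[n-1]$ of Lemma~\ref{lemma:prob-of-bad} actually contains $k=n-1$, which holds for all $n\geq 2$ (the case $n=1$ being vacuous for the theorem), and that the exponent $n-3$ and the factor $3n^3/\ell$ in~\eqref{eq:bk} line up with the hypothesis on $\ell$. The conceptual difficulty lies entirely upstream, in Lemma~\ref{lemma:many-good-sets} (and hence in the case analysis driven by the $2\times 2$ characterization) and in the recursion of Lemma~\ref{lemma:prob-of-bad}; here one merely checks that the arithmetic closes.
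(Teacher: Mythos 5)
Your proposal is correct and matches the paper's (implicit) reasoning: the corollary is stated without a separate proof precisely because it is the direct instantiation of Lemma~\ref{lemma:prob-of-bad} at $k=n-1$ followed by the observation that a supremum of bad-event probabilities strictly below $1$ guarantees the existence of at least one good set. Your arithmetic and the unpacking of the definition of $b_{n-1}$ are exactly the intended argument.
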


\subsection{Lower bound on the approximation ratio}
\label{sec:appr-ratio-sqrtn}

In this subsection we use the Main Lemma (Lemma~\ref{lemma:main}) to
prove a lower bound of $1+\sqrt{n-1}$. We also show that this is
almost tight for the instances in which each task can be allocated
either to the 0-player or to some other player
(Definition~\ref{def:range-of-values}).

\begin{theorem}
  The approximation ratio of truthful mechanisms is at least $1+\sqrt{n-1}$.
\end{theorem}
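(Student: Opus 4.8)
The statement follows by invoking the Main Lemma (Lemma~\ref{lemma:main}) together with Corollary~\ref{cor:ell} and checking that each of its three alternatives forces a ratio arbitrarily close to $1+\sqrt{n-1}$. Assume, for contradiction, that some truthful mechanism has approximation ratio $c<1+\sqrt{n-1}$ (for a suitably large number of tasks). Since $\alpha=1/\sqrt{n-1}$, as $\delta'\to 0^+$ both terms in the minimum defining $\rho$ tend to $1/\alpha=\sqrt{n-1}$ and to $(n-1)\alpha=\sqrt{n-1}$, so $\rho\to 1+\sqrt{n-1}$; hence I first fix $\delta$ (and $\delta'=2\delta$) small enough that $\rho>c$, then $\ell$ large enough for Corollary~\ref{cor:ell}, and finally $\beta,\epsilon$ small and $\A,B$ large enough that the additive $O(\ell\beta)$, $O((n-1)\ell\beta)$ and $O(\epsilon)$ error terms below are all smaller than the gap $1+\sqrt{n-1}-c$ and no ratio-$c$ mechanism ever allocates a task to an $\A$-valued machine. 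Since $c<\rho$, alternative (i) of the Main Lemma fails, so there is either a bad task or a good set of $n-1$ tasks, and I show that each gives a contradiction. As $c<1+\sqrt{n-1}$ was arbitrary, $c\ge 1+\sqrt{n-1}$.

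\textbf{Bad task.}
Let $\{j\}$ with $j\in C_i$ be a bad task for a standard instance $T$ for $\{j\}$. By definition (including the ``$s$-value $1-\epsilon$'' clause, which dodges singular points of relaxed task independent mechanisms), the mechanism gives $j$ to the $s$-player $i$ whenever $t_j$ is just above $\alpha$ and $s_j\in\{1,1-\epsilon\}$; as every player $\notin\{0,i\}$ has cost $\A$ for $j$, the only alternative to player $i$ is player $0$. I would then raise the cost $s_{d_i}$ of the dummy task $d_i$ for player $i$ to $\alpha$. Since $d_i$ has cost $\A$ for every player but $i$, it is pinned to player $i$; and applying the $2\times 2$ characterization (Theorem~\ref{theo:addchar}) to the slice of $j$ and $d_i$ --- in which $d_i$ is pinned to player $i$ --- the boundary $\psi_j$ of task $j$ is independent of $s_{d_i}$ in all four mechanism types (perturbing $\alpha$ to avoid the countably many exceptional values in the relaxed cases; if the slice is constant and ignores player $0$, one instead sets $t_j=0$ and gets an unbounded ratio outright). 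Hence $j$ stays with player $i$, whose load is then at least $s_j+s_{d_i}\ge (1-\epsilon)+\alpha$, while the optimum puts $j$ on player $0$, $d_i$ on player $i$, and the siblings of $j$ on player $0$, with makespan at most $\alpha+O(\ell\beta+\epsilon)$. So the ratio is at least $\bigl((1-\epsilon)+\alpha\bigr)\big/\bigl(\alpha+O(\ell\beta+\epsilon)\bigr)$, whose limit is $1+1/\alpha=1+\sqrt{n-1}$; for $\epsilon,\beta$ small enough this exceeds $c$ --- a contradiction.

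\textbf{Good set of $n-1$ tasks.}
Let $P=\{p_1,\dots,p_{n-1}\}$ (one task per cluster) be good for a standard instance $T$, with witness $V$. Then some instance in the $V$-perturbation of $\hat T(P)$ puts every $p_i$ on player $0$ at cost $t_{p_i}\in(\alpha,\alpha+\beta)$, so player $0$'s load is at least $(n-1)\alpha=\sqrt{n-1}$, whereas the optimum (each $p_i$ on player $i$, siblings on player $0$) has makespan $\approx 1$ --- already a ratio $\approx\sqrt{n-1}$. For the extra $+1$, I would raise the cost $t_{d_0}$ of the dummy $d_0$ for player $0$ to (essentially) $1$, the optimal makespan. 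As $d_0$ has cost $\A$ for every player but $0$, it is pinned to player $0$; applying Theorem~\ref{theo:addchar} to each slice of $p_i$ and $d_0$, the boundary of $p_i$ does not depend on $t_{d_0}$, so every $p_i$ stays with player $0$. Now player $0$'s load is at least $(n-1)\alpha+t_{d_0}\ge\sqrt{n-1}+1$, while the optimum still has makespan $\approx 1$ ($d_0$ on player $0$, the $p_i$ on their $s$-players, siblings on player $0$). The ratio is at least $(\sqrt{n-1}+1)\big/\bigl(1+O((n-1)\ell\beta)\bigr)$, which exceeds $c$ for $\beta$ small enough --- a contradiction. (The construction also shows, up to lower-order terms, that the bound is tight within the class of instances of Definition~\ref{def:range-of-values}.)

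\textbf{Main obstacle.}
With the Main Lemma in hand, the only real work is the two ``dummy boosts.'' The subtlety is that weak monotonicity does not by itself keep the badly-allocated task(s) in place when a dummy's cost on its own machine is increased: raising one's cost for a task one already holds is precisely the move that Lemma~\ref{lemma:tool} cannot control. The way around it is to observe that each dummy is pinned to its machine by the huge constant $\A$, and then to re-invoke the $2\times 2$ characterization on the slice formed by that dummy and the relevant co-task, verifying --- across relaxed affine minimizers, relaxed task independent, $1$-dimensional, and constant mechanisms --- that the co-task's allocation boundary is independent of the dummy's cost, treating relaxed-type singularities by a small perturbation and degenerate constant slices by a direct unbounded-ratio argument. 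Making this independence claim airtight in every case is where the technical effort lies; the accompanying parameter bookkeeping is routine.
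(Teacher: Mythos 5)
Your skeleton matches the paper's (invoke the Main Lemma, dispose of the three alternatives, and use the dummy tasks to lift $\sqrt{n-1}$ to $1+\sqrt{n-1}$), and your parameter bookkeeping and the $\sqrt{n-1}$ estimates are fine. The place you diverge is the ``dummy boost,'' and there you have a genuine gap.

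You correctly observe that Lemma~\ref{lemma:tool} cannot be applied to the move ``raise $s_{d_i}$ and do nothing else,'' but you miss the paper's remedy, which is simpler than what you propose: raise the dummy's cost \emph{and simultaneously decrease, by an arbitrarily small amount, the cost of the co-task you want to hold in place}. Then for player $i$, the only changed coordinates in the WMON inequality are $s_j$ (decreased, and $j\in A_i$) and $s_{d_i}$ (increased, but $d_i$ stays with $i$ because of the $\A$ values), so the $d_i$ term vanishes and the inequality $\sum_\tau (a'_{i\tau}-a_{i\tau})(s'_{i\tau}-s_{i\tau})\le 0$ directly forces $a'_{ij}=1$. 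No characterization is needed. (The good-set case is handled symmetrically for player $0$: raise $t_{d_0}$ and slightly lower each $t_{p_i}$, staying inside the witness $V$.)

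Your replacement argument --- invoking Theorem~\ref{theo:addchar} on the $(j,d_i)$ slice and asserting that $\psi_j$ is independent of $s_{d_i}$ in all four types --- does not go through for relaxed affine minimizers. In that slice $t_{d_i}=\A$, so the affine-part boundary of $j$ (between $R_1$ and $R_\emptyset$) is indeed independent of $s_{d_i}$, but the bundling tail is governed by the magnitude of $s_j+s_{d_i}$: at $(s_j,s_{d_i})=(1,0)$ the pair may lie inside the tail (where $j$ is necessarily with player $i$ because $t_{d_i}=\A$ forces $R_\emptyset$), and raising $s_{d_i}$ to $\alpha$ can push the pair past the threshold $D_s$, after which the boundary becomes $\psi_j^{\mathrm{aff}}(1)$, a quantity you have no control over and which may exceed $\alpha$. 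Your ``countably many exceptional values'' caveat covers relaxed \emph{task independent} slices, not the bundling tail of relaxed affine minimizers, so the independence claim is unjustified precisely in the case the paper is most careful about elsewhere. (In the good-set case the $(p_i,d_0)$ slice has $s_{d_0}=\A$, so the bundling tail is never reached and your argument there happens to be safe; the flaw is specific to the bad-task case, but it is a flaw.)
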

\begin{proof}
  We first show that Lemma~\ref{lemma:main} implies that the approximation ratio is at least equal to
  \begin{align*}
    \rho &=1-\delta'+\min\left\{\frac{(n-1)\alpha}{1+(n-1)\delta'}, \,
           \frac{1}{\alpha+(n-1)\delta'}\right\}.
  \end{align*}
  Indeed if this is not the case, by Lemma~\ref{lemma:main}, either there exists a bad task or a good set of $n-1$ tasks. In the first case, the mechanism allocates the bad task to the $s$-player with cost at least 1. On the other hand, the optimal algorithm allocates the task to the 0-player and has cost at most $\alpha+(n-1)\delta'$, where the first term comes from the bad task and the term $(n-1)\delta'$ comes from the remaining tasks, that are trivial. The approximation ratio is at least $1/(\alpha+(n-1)\delta')$. We can improve the approximation by an additive term ``+1'' using the dummy tasks, as in the proof of Lemma~\ref{lemma:2x2-aff-min}. Indeed, if the bad task is in cluster $C_i$, we change the $s$-value of dummy task $d_i$ to $\alpha+(n-1)\delta'$  and slightly decrease the $s$-value of the bad task. By weak monotonicity of the $i$ player, this does not affect the allocation, and the approximation ratio increases by 1.

  Otherwise there exists a good set $P$ of $n-1$ tasks. By the definition of goodness, all tasks in $P$ are allocated to the 0-player and the cost is at least $(n-1)\alpha$. On the other hand, when all tasks in $P$ are allocated to the $s$-players, the optimal cost is at most $1+(n-1)\delta'$; again the second term comes from the remaining tasks. The approximation ratio is at least $(n-1)\alpha /(1+(n-1)\delta')$. Again, we increase the approximation ratio by 1 using the dummy tasks. In this case, we change the $t$-value of dummy task $d_0$ to $1+(n-1)\delta'$.
  
  By selecting $\alpha=1/\sqrt{n-1}$ the approximation ratio is $\sup_{\delta'>0} \rho=1+\sqrt{n-1}$. Note that although the instances have large number of tasks, for every $\delta'>0$, the number of tasks is finite.
\end{proof}

\begin{remark}
To achieve a ratio very close to $1+\sqrt{n-1}$, we must select a small
$\delta'$. By Corollary~\ref{cor:ell}, this has implications on the number of
tasks required by the proof of the main lemma (Lemma~\ref{lemma:main}). In
particular the number of tasks per cluster must be more than $3n^3
\left(\frac{3n}{\delta}\right)^{n-3}=n^{\Theta(n)} /\delta^{n-3}$. Note that by
any choice of $\delta<1$, this number is at least exponential in $n$. By
selecting $\delta\approx 1/n^2$, we still get the number of tasks to be
exponential in $n$ and the approximation ratio to be $(1-o(1))\sqrt{n-1}$.
\end{remark}

We now show that the lower bound is tight for the instances that we consider.

\begin{theorem}
  The approximation ratio for the set of instances of Definition~\ref{def:range-of-values} is at most $1+\sqrt{n-1}$ and it is achieved by a weighted version of the VCG mechanism.
\end{theorem}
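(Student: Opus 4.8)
The plan is to realize the bound with a weighted VCG mechanism, i.e.\ an affine minimizer in the sense of \eqref{eq:aff-min} with zero additive constants. Put weight $w_0=1$ on player~$0$ and weight $w_i=\alpha=1/\sqrt{n-1}$ on every player $i\in[n-1]$, and let the allocation minimize $\sum_{i} w_i\, t_i(A_i)$ (ties broken, say, towards player~$0$); together with the usual weighted-VCG payments this is truthful. Since this objective decomposes over tasks, and for a task $j\in C_i$ the only two players with non-astronomical weighted cost are player~$0$ (cost $t_j$) and the owner $i$ (cost $\alpha s_j$), the mechanism assigns $j$ to player~$0$ exactly when $t_j\le\alpha s_j$ and to player~$i$ otherwise; each dummy $d_i$ is assigned to its unique owner $i$ (including $d_0$ to player~$0$), since its value there is below $B\ll\A$. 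Write $A$ for the non-dummy tasks that the mechanism gives to player~$0$ and $B_i\subseteq C_i$ for those it gives to player~$i$; then $t_j\le\alpha s_j$ for $j\in A$ and $t_j>\alpha s_j$ for $j\in B_i$.

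Next I would pin down the optimum. Assigning every non-dummy task of $C_i$, together with $d_i$, to player~$i$, and $d_0$ to player~$0$, yields makespan below $(\ell+2)B\ll\A$, so $\opt(T)$ never places a task on a player for which its value is $\A$; hence $\opt$ assigns each non-dummy task to player~$0$ or to its owner and each $d_i$ to its owner. Write $L=\opt(T)$, let $S_0$ be the non-dummy tasks $\opt$ gives to player~$0$ and $S_i\subseteq C_i$ those it gives to player~$i$, and let $e_i\ge 0$ be the value of $d_i$ at player~$i$; then $e_0+\sum_{j\in S_0}t_j\le L$ and $e_i+\sum_{j\in S_i}s_j\le L$ for each $i\in[n-1]$.

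Then come the two load estimates. For player~$0$, split $A=(A\cap S_0)\cup(A\setminus S_0)$: the first part together with $d_0$ contributes $e_0+\sum_{j\in A\cap S_0}t_j\le L$, while for $j\in A\setminus S_0$ the optimum placed $j$ on its owner, so $s_j$ is charged to that owner in $\opt$ and $t_j\le\alpha s_j$, giving
\begin{align*}
  \sum_{j\in A\setminus S_0}t_j\ \le\ \alpha\sum_{i\in[n-1]}\ \sum_{j\in C_i\cap(A\setminus S_0)}s_j\ \le\ \alpha\sum_{i\in[n-1]}(L-e_i)\ \le\ \alpha(n-1)L\ =\ \sqrt{n-1}\,L,
\end{align*}
so player~$0$'s load is at most $(1+\sqrt{n-1})L$. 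For a player $i\in[n-1]$, split $B_i=(B_i\cap S_i)\cup(B_i\cap S_0)$: the first part with $d_i$ contributes $e_i+\sum_{j\in B_i\cap S_i}s_j\le L$, while for $j\in B_i\cap S_0$ we have $s_j<t_j/\alpha$ and $t_j$ is charged to player~$0$ in $\opt$, so $\sum_{j\in B_i\cap S_0}s_j<\tfrac1\alpha\sum_{j\in S_0}t_j\le L/\alpha=\sqrt{n-1}\,L$; hence player~$i$'s load is below $(1+\sqrt{n-1})L$ too. Thus $\mech(T)\le(1+\sqrt{n-1})L$ for every instance of Definition~\ref{def:range-of-values}, and since the hard instances of the lower bound all lie in this domain, this mechanism has approximation ratio exactly $1+\sqrt{n-1}$ on it.

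The argument is short once the ``two players per task'' structure is exposed; the only genuine design decision is the weight $\alpha=1/\sqrt{n-1}$, forced by balancing the bound $1+\alpha(n-1)$ from player~$0$ against the bound $1+1/\alpha$ from an $s$-player. The point worth double-checking is that the coarse estimate $\sum_{i}(L-e_i)\le(n-1)L$ on player~$0$'s side is harmless precisely because it is multiplied by the small factor $\alpha$, and that the dummy tasks never hurt the ratio because $\opt$ also pays each owner's dummy cost --- this is exactly what makes the same additive slack ``$+1$'' appear on both sides. Tie-breaking on the line $t_j=\alpha s_j$ is immaterial, since both estimates use only the weak inequalities $t_j\le\alpha s_j$ on $A$ and $t_j\ge\alpha s_j$ on each $B_i$.
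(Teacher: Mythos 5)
Your proof is correct and uses the same mechanism as the paper: the weighted VCG with weight $1$ on player $0$ and $\alpha=1/\sqrt{n-1}$ on each $s$-player (equivalently, the threshold rule ``give task $j\in C_i$ to player $0$ iff $\sqrt{n-1}\,t_j\le s_j$''). The paper packages the analysis algebraically: it defines cluster aggregates $r_i,l_i$ (and $r_i^*,l_i^*$ for the optimum), exploits the per-cluster optimality $\lambda r_i+l_i\le\lambda r_i^*+l_i^*$ of weighted VCG, and then case-splits on which machine attains the makespan. Your charging argument --- splitting each machine's load into the tasks the optimum also places there versus those it places on the other admissible machine, and using the threshold inequality on the latter --- proves exactly the same two bounds, just at the level of individual tasks rather than aggregates. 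One genuine improvement in your version: you carry the dummy-task loads $e_i\ge 0$ through the estimate, whereas the paper's expression for $\mech$ tacitly assumes $e_i=0$; since the lower-bound construction does raise those dummy values, your accounting is the more robust one and shows the ``$+1$'' slack appears identically on both sides.
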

\begin{proof}
  We consider the weighted VCG mechanism by which each task $j$ is allocated to the 0-player if and only if $\lambda \, t_j\leq s_j$, where $\lambda=\sqrt{n-1}$. Let $r_i$ and $l_i$ be the total weight of tasks from cluster $C_i$ allocated by this algorithm to the 0-player and to the $i$-th player, respectively. Let $r_i^*$ and $l_i^*$ be the corresponding quantities for an optimal algorithm. An equivalent way to describe the weighted VCG algorithm is to say that it achieves the minimum $\lambda \,r_i+l_i$ for every cluster $C_i$ among all algorithms.

  The cost of the VCG algorithm is $\mech=\max\{\sum_{i \in [n-1]} r_i, \max_{i\in [n-1]} l_i\}$. If $\mech=\sum_{i \in [n-1]} r_i$, we get

  \begin{align*}
    \mech =\sum_{i \in [n-1]} r_i & \leq \sum_{i \in [n-1]} \left(r_i+\frac{l_i}{\lambda}\right)
    & \text{(and by the definition of the mechanism)} \\
                                  & \leq \sum_{i \in [n-1]} \left(r_i^*+\frac{l_i^*}{\lambda}\right) \\
                                  & \leq \sum_{i \in [n-1]}r_i^*+\frac{1}{\lambda}\sum_{i\in [n-1]} l_i^*\\
                                  & \leq \left(1+\frac{n-1}{\lambda}\right)\opt\\
                                  & =(1+\sqrt{n-1})\opt.
  \end{align*}

  Otherwise, $\mech=l_j$ for some $j\in [n-1]$. In this case, we get
  \begin{align*}
    \mech=l_j &\leq \lambda r_j+l_j
    & \text{(and by the definition of the mechanism)} \\
              & \leq \lambda r_j^*+l_j^*\\
              & \leq (\lambda+1) \max\{r_j^*, l_j^*\}\\
              & \leq (\lambda+1)\max \left\{\sum_{i\in [n-1]} r_i^*, \max_{i\in [n-1]} l_i^*\right\}\\
              & \leq (\lambda+1) \opt \\
              & =(1+\sqrt{n-1})\opt.
  \end{align*}

\end{proof}

\bibliographystyle{plain} 
\bibliography{biblio}

\end{document}